\pgfplotsset{compat=newest}
\newlength\figureheight
\newlength\figurewidth
\pgfplotsset{
  tick label style={font=\scriptsize},
  label style={font=\footnotesize},
  legend style={font=\footnotesize},
  every axis plot/.append style={very thick}
}
\newcommand{\bbN}{\mathbb{N}}
\newcommand{\bbR}{\mathbb{R}}
\newcommand{\cD}{\mathcal{D}}
\newcommand{\cH}{\mathcal{H}}
\newcommand{\cI}{\mathcal{I}}
\newcommand{\cL}{\mathcal{L}}
\newcommand{\cN}{\mathcal{N}}
\newcommand{\normiii}[2]{\vert\kern-0.25ex\vert\kern-0.25ex\vert #1 \vert\kern-0.25ex \vert\kern-0.25ex\vert_{ #2 }}
\newcommand{\Normiii}[2]{\left\vert\kern-0.25ex\left\vert\kern-0.25ex\left\vert #1 \right\vert\kern-0.25ex\right\vert\kern-0.25ex\right\vert_{ #2 }}
\newcommand{\meshwidth}{\delta}
\newcommand{\mv}[1]{{\boldsymbol{\mathrm{#1}}}}
\newcommand{\eps}{\varepsilon}
\DeclareMathOperator{\diag}{diag}
\newcommand{\indep}{\perp \!\!\! \perp}
\newcommand{\HR}{H\"usler--Reiss}
\newcommand{\BR}{Brown--Resnick}
\newcommand{\WM}{Whittle--Mat\'ern}
\newtheorem{theorem}{Theorem}[section]
\newtheorem{lemma}[theorem]{Lemma}
\newtheorem{proposition}[theorem]{Proposition}
\newtheorem{remark}[theorem]{Remark}
\newtheorem{definition}[theorem]{Definition}
\newtheorem{example}[theorem]{Example}
\title[Intrinsic Whittle--Mat\'ern fields and sparse spatial extremes]{Intrinsic Whittle--Mat\'ern fields and sparse \\ spatial extremes}
\author{David Bolin}
\author{Peter Braunsteins}
\author{Sebastian Engelke}
\author{Rapha\"el Huser}
\address{Statistics Program, King Abdullah University of Science and Technology, 23955-6900 Thuwal, Saudi Arabia}
\email{david.bolin@kaust.edu.sa}
\address{School of Mathematics and Statistics, University of New South Wales, Sydney NSW 2052, Australia}
\email{p.braunsteins@unsw.edu.au}
\address{Research Institute for Statistics and Information Science, University of Geneva, 1205 Geneva, Switzerland}
\email{sebastian.engelke@unige.ch}
\address{Statistics Program, King Abdullah University of Science and Technology, 23955-6900 Thuwal, Saudi Arabia}
\email{raphael.huser@kaust.edu.sa}
\date{} 
\subjclass[2020]{60G60, 62M30, 62G32} 
\keywords{extremal dependence; finite-element approximation; intrinsic Gaussian random field; kriging; $r$-Pareto model; stochastic partial differential equation}
\begin{document}

\begin{abstract}
Intrinsic Gaussian fields are used in many areas of statistics as models for spatial or spatio-temporal dependence, or as priors for latent variables. However, there are two major gaps in the literature: first, the number and flexibility of existing intrinsic models are very limited; second, theory, fast inference, and software are currently underdeveloped for intrinsic fields. We tackle these challenges by introducing the new flexible class of intrinsic \WM{} Gaussian random fields obtained as the solution to a stochastic partial differential equation (SPDE). Exploiting sparsity resulting from finite-element approximations, we develop fast estimation and simulation methods for these models. We demonstrate the benefits of this intrinsic SPDE approach for the important task of kriging under extrapolation settings. Leveraging the connection of intrinsic fields to spatial extreme value processes, we translate our theory to an SPDE approach for \BR{} processes for sparse modeling of spatial extreme events. This new paradigm paves the way for efficient inference in unprecedented dimensions. To demonstrate the wide applicability of our new methodology, we apply it in two very different areas: a longitudinal study of renal function data, and the modeling of marine heat waves using high-resolution sea surface temperature data.
\end{abstract}

\maketitle

\section{Introduction}

Intrinsic Gaussian random fields are important models in areas ranging from geoenvironmental science \citep{lombardo2020spacetime} to disease mapping \citep{wakefield2007disease,Konstantinoudis2022regional} and brain imaging \citep{penny2005bayesian}.
They are improper in the sense that their finite-dimensional distributions are characterized by precision matrices that are not full rank. Here we focus intrinsic fields with a precision matrix characterized by zero row sums, making their densities invariant to the addition of a constant function.
These are a natural choice as a prior distribution in Bayesian statistics, since they impose no prior information on the overall average of the field but bias towards smoothness that captures spatial information. However, several factors prevent intrinsic fields from being further used in practice: (1) there are few existing parametric models for intrinsic fields and those that exist impose restrictive assumptions that are rarely desirable in practice; (2) theory, computationally efficient inference and software implementations developed for proper fields do not exist for intrinsic ones; and (3) the full potential of intrinsic fields in other areas of statistics has not yet been exploited.

For non-intrinsic, proper Gaussian random fields, the most commonly used model in applications is the class of Gaussian Whittle--Mat\'ern fields. These can be viewed as solutions $u$ to the SPDE \citep{whittle63}
\begin{align}\label{eq:maternspde}
	(\kappa^2-\Delta)^{\alpha/2} (\tau u) = \mathcal{W} \quad \text{ on $\mathbb{R}^d$},
\end{align}
where $\mathcal{W}$ denotes Gaussian white noise, and $\Delta = \sum_{k=1}^d {{\rm d}^2/{\rm d} s_k^2}$ is the Laplacian.
The parameters $\tau,\kappa>0$, $\alpha>d/2$ control the most important aspects of the Gaussian field, namely the variance, the practical correlation range, and the smoothness, respectively. \citet{Lin11} proposed the ``SPDE approach'' to spatial statistics that constructs a computationally efficient Gaussian Markov random field (GMRF) approximation of these fields with $\alpha\in\mathbb{N}$ by solving~\eqref{eq:maternspde} on a suitable mesh through a finite element method (FEM). These approximations have sparse precision matrices, whose zero entries encode conditional independencies, which typically reduces the computational cost for inference from being cubic in the number of observation locations to scaling like $O(N^{3/2})$, where $N$ is the number of mesh nodes. The rational SPDE approach \citep{BK2020rational,xiong2022} removed the restriction to ${\alpha\in\mathbb{N}}$, enabling efficient SPDE approximations for Gaussian fields with general smoothness $\alpha>d/2$. This relaxation is crucial for the model's predictive performance \citep{stein99, kb-kriging}. The computational advantage combined with the fact that the  approach facilitates extensions of Whittle--Mat\'ern fields to non-Euclidean domains \citep{mejia2020bayesian,BSW2022}, non-stationary models \citep{fuglstad2015exploring,bakka2019nonstationary}, and non-Gaussian models \citep{bolin14,bolin2020multivariate} has made it highly successful \citep{lindgren2022spde}.

The most commonly applied intrinsic Gaussian field is the intrinsic conditional autoregressive \citep[iCAR;][]{besag1974spatial,besag1975statistical} model,
whose sparse precision is a key reason for its popularity; however, it is exceptionally rigid with just a single parameter $\tau$, and the entries of the off-diagonal entries of the precision matrix can take only two values, so that not all spatial information is taken into account.
A continuous-space model which accounts for spatial distances can be defined through the partial differential equation
\begin{align}\label{eq:intrinsic1}
	(-\Delta)^{\beta/2} (\tau u) = \mathcal{W} \quad \text{ on $\mathbb{R}^d$},
\end{align}	
where $\tau>0$. For $d=2$ and $\beta\rightarrow 1$, this model is referred to as the de Wijs process \citep{besag2005first}, and the standard iCAR model on an evenly spaced lattice in $\mathbb{R}^2$ can be seen as a discretization thereof \citep{Lin11}. The de Wijs process has a single parameter and its ``smoothness'' is equal to zero causing outcomes of the field to have no pointwise meaning. If $\beta\in (d/2,1+d/2)$, then \eqref{eq:intrinsic1} gives an intrinsic Gaussian field with the fractional variogram $\text{Var}\{u(s) - u(t)\} \propto \tau^{-2} \|\mv s-\mv t\|^{2\beta-d}$. This model is still restrictive in that both the short- and long-range growth of the variogram is determined by a single parameter $\beta$. Moreover the rational SPDE approach has not been extended to intrinsic fields such as the fractional model and thus the computational cost for likelihood inference typically grows cubically in the number of locations.   

In this paper, we introduce the new class of intrinsic Whittle--Mat\'ern fields as the Gaussian random fields $u$ obtained as a solution to 
\begin{align}\label{SPDE}
    (-\Delta)^{\beta/2}(\kappa^2 - \Delta)^{\alpha/2} (\tau u) = \mathcal{W} \quad \text{ on $\mathbb{R}^d$},
\end{align}
with $\kappa,\tau>0$ and $\alpha+\beta>d/2$.
This model bridges the classical (non-intrinsic) Gaussian Whittle--Mat\'ern fields when $\beta=0$ with the fractional (intrinsic) Gaussian fields when ${\alpha=0}$. When $\beta>0$, this model provides a flexible class of intrinsic Gaussian fields, where $\alpha$ and $\beta$ control the behavior of the variogram at $0$ and at $\infty$, respectively, and $\kappa$ determines the rate at which the variogram transitions from its local to global behavior, while $\tau>0$ is a multiplicative constant. 
To allow for sparse computations, we establish a new rational SPDE approach for intrinsic fields. This yields a GMRF approximation of the solution to~\eqref{SPDE} for which we provide a rigorous convergence analysis. Our approach enables computationally efficient inference for the parameters of these fields, including estimation of the two fractional smoothness parameters. It also lays the foundation for the numerous extensions that have already been developed for proper Whittle--Mat\'ern fields. 

Prediction at unobserved locations based on nearby observations is  a highly important task in many applications and traditionally performed using kriging \citep{cressie2015statistics}.
While proper, non-intrinsic fields yield kriging estimates 
that revert to the overall mean, intrinsic models
such as our intrinsic \WM{} model allow for a much broader
behavior, particularly when extrapolating beyond the 
observation range.
We quantify this extrapolation behavior theoretically and
show that our new model outperforms proper fields
and the fractional model on clinical kidney function data.

As a key application of the new intrinsic \WM{} model, we consider the field of spatial extremes \citep[e.g.,][]{dav2012b,davison2019spatial,huser2022advances}, which is concerned with the modeling of low-probability, high-impact events in space, such as heat waves \citep{hea2024}, storms \citep{defondeville2022functional} or floods \citep{asadi2015extremes, zhong2025spatial}. 
The most popular model class are Brown--Resnick random fields
\citep{Kab09}, the analogue of Gaussian processes in the field of spatial extremes parameterized by a variogram $\gamma$. 
Current statistical models often use the fractional variograms~\eqref{eq:intrinsic1} and have important limitations:
they lack flexibility in terms of local and global behavior;
they are not easily generalizable to non-stationary or non-Euclidean settings;
and, most importantly, they do not allow for sparse likelihood estimation.
In fact, sparse modeling in extremes
has so far been limited to 
extremal graphical models in purely multivariate settings \citep{Eng20},
with focus on structure estimation rather than efficient likelihood inference \citep{engelke2022structure, wan2023graphical, hentschel2024statistical, engelke2025learning}. When observing a \BR{} process at a finite number of locations, 
the distribution does generally yield dense, fully connected extremal graphical 
models, thus hindering sparse computations; see Appendix~\ref{sec:Prop}. 
While many alternative methods for efficient and scalable inference have been developed over the past two decades \citep{wadsworth2014efficient,Eng15,dombry2017bayesian,defondeville2018high,huser2019full,huser2024vecchia}, the lack of sparse models for spatial extremes has hampered their application in high dimensional settings.

We introduce an SPDE approach that lays the foundation for sparse modeling of spatial extremes and that addresses the current limitations in the field discussed above. It builds on the class of \BR{} processes parameterized by 
variograms of the new intrinsic Whittle--Mat\'ern fields~\eqref{SPDE}. We derive (provably consistent) FEM approximations of these  processes that have sparse representations in terms of extremal graphical models.
This enables likelihood computations at a complexity that is
similar to that of traditional SPDE-based Gaussian processes, thus paving the way for efficient inference in unprecedented dimensions. Importantly, this new spatial extremes model inherits the high flexibility 
in capturing both local and global dependence behaviors from the intrinsic Whittle--Mat\'ern field. 
Finally, our general modeling framework naturally opens the door to modeling extremes in unexplored complex settings, including extremes over non-Euclidean domains and with a non-stationary dependence structures.

In Section~\ref{sec:model}, we introduce intrinsic Whittle--Mat\'ern fields, lay out their theoretical foundation and derive key properties. Section~\ref{sec:inference} introduces sparse FEM approximations for computationally efficient inference.
Based on these results, in Section~\ref{sec:kriging}
we discuss how kriging can be improved using our new intrinsic model class, and present the kidney data application. Section~\ref{sec:extremes} 
proposes the ``SPDE approach'' for
computationally efficient and flexible modeling of spatial extremes, and showcases its potential for modeling marine heat waves. Technical details and proofs are presented in 8 appendices. All methods introduced here are implemented in the \texttt{R} \citep{Rlanguage} package \texttt{rSPDE} \citep{rSPDE}, which has interfaces to the widely used \texttt{R} packages \texttt{INLA} \citep{inla} and \texttt{inlabru} \citep{inlabru} for Bayesian inference. Thus, the proposed models can be incorporated in general latent Gaussian models which can be fitted to data in a fully Bayesian framework. A tutorial can be found at  \url{https://davidbolin.github.io/rSPDE/articles/intrinsic.html}.

\section{Intrinsic Gaussian Whittle--Mat\'ern fields}\label{sec:model}

In this section we first review intrinsic GMRFs. 
We then define the new class of intrinsic Gaussian Whittle--Mat\'ern fields and lay out the mathematical foundation.  
Finally, we derive the corresponding variogram functions $\gamma$ and some key theoretical properties. 

\subsection{Intrinsic Gaussian Markov random fields}\label{sec:intrinsic}

A centered $k$-dimensional Gaussian random vector $\mathbf{W}$ can be parameterized by a symmetric positive definite \emph{precision matrix} $Q$.
The precision matrix $Q$ describes the full conditional distributions of $\mathbf{W}$. Writing $\mathbf{W}_{-i}$ for the vector $\mathbf{W}$ with the $i$th component removed, the conditional distribution of $W_i\mid\mathbf{W}_{-i}$ is indeed (univariate) Gaussian with mean and variance given by
\begin{equation}\label{eqn:Conde}
\begin{aligned}
\mathbb{E}(W_i \mid  \mathbf{W}_{-i}) &= - \frac{\sum_{j: j\neq i} Q_{ij} W_j}{Q_{ii}}, \qquad \text{Var}(W_i \mid \mathbf{W}_{-i}) &= Q_{ii}^{-1}.
\end{aligned}
\end{equation}
Importantly, $Q_{ij}=0$ if and only if  $W_i$ and $W_j$ are conditionally independent given all other variables, i.e., 
if ${W}_i \indep W_j \mid \mathbf{W}_{-ij}$,
with $\mathbf{W}_{-ij}$ the subvector of $\mathbf{W}$ with the $i$th and $j$th components removed. 
This conditional independence pattern can be represented by a graph $(\mathcal{V}, \mathcal{E})$ with vertex set $\mathcal{V}$ and edge set $\mathcal{E}$, where $\{i,j\} \in \mathcal{E}$ if $Q_{ij} \neq 0$.
In this case, $\mathbf{W}$ is referred to as a (proper) GMRF with respect to  the conditional independence graph $(\mathcal{V}, \mathcal{E})$.

Some popular GMRF models are defined in terms of the weights of the neighbors in the expression for $\mathbb{E}(W_i \mid\, \mathbf{W}_{-i})$. For example iCAR models are defined so that these weights sum to 1.
This implies a zero rowsum constraint on $Q$, $Q \mathbf{1}= \mathbf{0}$, which means that $Q$ is not positive definite, and therefore $\mv{W}$ is not a proper field. This motivates the definition of intrinsic GMRFs \citep[Definition 3.2]{Rue05}.

\begin{definition}\label{def:FOI}
Let $\Theta$ be a $k \times k$ symmetric positive semi-definite matrix with rank $k-1$ 
and $\Theta \mathbf{1}=\mathbf{0}$. Then $\mathbf{W}$ is a centered first-order intrinsic GMRF if its density is
\[
f(\mathbf{w}) = (2 \pi)^{-(k-1)/2} (|\Theta|^*)^{1/2} \mathrm{exp} \left( -\frac{1}{2} \mathbf{w}^\top \Theta \mathbf{w} \right),\qquad \mathbf{w}\in\mathbb{R}^k,
\]
with $|\Theta|^*$ the generalized determinant of $\Theta$ (the product of its non-zero eigenvalues). 
\end{definition}

We observe that the integral of $f(\mathbf w)$ over its domain is infinite since it is constant along the direction of the all-ones vector, that is, $f(\mathbf w + c \mathbf 1) = f(\mathbf w)>0$ for any $c\in\mathbb R$ and $\mathbf{w}\in \mathbb{R}^k$. 
Thus, $f(\mathbf{w})$ is not a density on $\mathbb R^k$, but it can be interpreted as a density on the subspace of $\mathbb R^k$ orthogonal to the null-space of $\Theta$ \citep{Rue05, bolin2021efficient}.
One can characterize the distribution of $\mathbf{W}$ through the joint distribution of its differences $W_i -W_j$, $i,j\in \{1,\dots, k\}$; see Appendix \ref{App:Intrinsic_intro}. The distribution of $\mathbf W$ can thus also be defined through the conditionally negative definite variogram matrix $\Gamma$ with entries
$\Gamma_{ij} = \text{Var}(W_i - W_j)$, for $i,j =1,\dots, k$ \citep[][Prop.~3.3]{hentschel2024statistical}.

Similarly to proper GMRFs, we say that a first-order intrinsic GMRF  $\mathbf{W}$ is an improper (since it does not have a density in the usual sense) GMRF with respect to the labeled graph $(\mathcal{V},\mathcal{E})$, if for all $i\neq j$, 
$\Theta_{ij} = 0$ if and only if  $\{i,j\} \notin \mathcal{E}$.

Having defined first-order intrinsic GMRFs on $\mathbb{R}^k$, we can naturally define the analogue for continuously indexed Gaussian random fields. 

\begin{definition}\label{def:FOI2}
A random field $u = (u(\mv s): \mv s \in \mathcal D)$ with domain $\mathcal D \subset \mathbb R^d$ is called a first-order intrinsic Gaussian random field if all finite-dimensional distributions are first-order intrinsic GMRFs.
\end{definition}

In the continuous domain case, it is common to characterize the field through the conditionally negative definite variogram function
\begin{align}\label{vario_def}
    \gamma(\mv s,\mv t) = \text{Var}\{u(\mv s) - u(\mv t)\}, \quad \mv s, \mv t \in \mathcal D.
\end{align}
If $\gamma(\mv s, \mv t)$ depends only on $\lVert \mv s - \mv t \rVert$ then the field is intrinsically stationary and isotropic. While intrinsic fields of order higher than one can be defined, from here on we use the term ``intrinsic field''  to refer to a first-order intrinsic field, for notational simplicity. 

\subsection{Formal definition of intrinsic  Whittle--Mat\'ern fields}\label{sec:intrinsic_matern}

In a spatial context, $\mathbf W$ corresponds to observations at $k$ locations $\mv s_1,\dots, \mv s_k \in \mathcal D \subset \mathbb R^d$, i.e., $\mathbf{W}=(u(\mv s_1), \dots,u(\mv s_k))$. When using pairwise distances between these locations to parameterize the precision matrix $\Theta$ (and thus the distribution of $\mathbf{W}$), a natural way to ensure $\Theta$ is a non-negative definite rank $k-1$ matrix with $\Theta \boldsymbol{1}=\boldsymbol{0}$ is to define a continuous intrinsic Gaussian random field $u$ on $\mathcal D$ as in Definition~\ref{def:FOI2} through an SPDE and let $\Theta$ be characterized by the corresponding variogram matrix. 

To give a formal definition of the intrinsic Whittle--Mat\'ern fields, we note that the intrinsic property of these fields comes from the fact that \eqref{SPDE} does not have a unique solution. Indeed, if $u_0$ solves the equation, then any field $u'_0 = u_0 + c$ is also a solution, where $c$ is any (possibly random) constant function on $\mathcal D$. Equation~\eqref{SPDE} therefore defines an equivalence class of fields all sharing the same variogram since the constant function $c$ cancels in the computation of the variogram in~\eqref{vario_def}. The reason for this is that $\Delta$ applied to any constant (or linear) function is zero, which means that these functions are in the kernel of the operator and satisfy $\Delta f = 0$. 

The model \eqref{SPDE} can be defined on $\mathcal{D} = \mathbb{R}^d$ through Fourier techniques as done for the proper SPDEs in \cite{Lin11}. For a bounded domain $\mathcal{D} \subset \mathbb{R}^d$, we use a method similar to that in \cite{BK2020rational}. Specifically, suppose that $\mathcal{D}$ is an open, bounded, convex polytope in $\bbR^d$, with closure $\overline{\mathcal{D}}$, and let $L_2(\cD)$ denote the Lebesgue space of square-integrable functions on $\cD$ with inner product $(u,v)_{L_2(\cD)} = \int_\cD u(s) v(s) {\rm d}s$. In this case, the domain of 
$\Delta$ is the standard Sobolev space $H^2(\mathcal{D})$, which contains twice (weakly) differentiable functions in $L_2(\cD)$. To obtain a first-order intrinsic field, only constant functions should be in the kernel of the operator; since linear functions also belong to the kernel, we apply homogeneous Neumann boundary conditions to set derivatives of the field to zero on the boundary of $\mathcal D$, thereby excluding linear non-constant functions from the kernel. 
We thus restrict the domain of $\Delta$ to 
$\cH_{\cN}^2(\cD) = \{f \in H^2(\cD) : \partial f/ \partial v  = 0\}$, 
where $v$ is the outward unit normal vector to~$\partial\cD$. 

It is well-known \citep{xiong2022} that the operator $-\Delta$ on the domain $\cH_{\cN}^2(\cD)$ admits an eigendecompostion with eigenfunctions $\{e_j\}_{j\in\mathbb{N}}$ that form an orthonormal basis of $L_2(\cD)$ and corresponding non-negative, non-decreasing eigenvalues $\{\lambda_j\}_{j\in\mathbb{N}}$, i.e., $-\Delta e_j = \lambda_j e_j$. In addition, $\lambda_1 = 0$ is the only zero eigenvalue
and it corresponds to the eigenfunction $e_1 \propto 1$, 
which reflects the fact that only constant functions are in the kernel of $-\Delta$. 
In certain situations these eigenvalues and eigenfunctions can be expressed explicitly. The following important example will be used later. 

\begin{example}\label{ex:cube}
Suppose that $\cD = [0,L]^d$ is a square in $\mathbb{R}^d$. Let $\mv{j} = (j_1,\ldots, j_d) \in \mathbb{N}_0^d$ be a multi-index and $\|\mv{j}\|^2 = j_1^2+\ldots + j_d^2 > 0$. Then, in addition to $e_\mv{0} \propto 1$ with eigenvalue $\lambda_\mv{0}=0$, the eigenfunctions and eigenvalues of $-\Delta$ are 
\begin{align*}
e_\mv{j}(s) &= \prod_{i=1}^d\frac{(\sqrt{2})^{1(j_i>0)}}{\sqrt{L}}\cos(j_i\pi s_i/L), \quad {\lambda}_{\mv{j}} = \|\mv{j}\|^2\pi^2/L^2,
\end{align*}
see \citet[][Chapter~VI.4]{courant1953methods}. Note that we can order the eigenvalues $\{\lambda_{\mv{j}} \}$ so they are non-decreasing with ties broken arbitrarily to obtain the sequence $\{ \lambda_j\}_{j \in \mathbb{N}}$ described above, with $\{e_j\}_{j \in \mathbb{N}}$ then the corresponding eigenfunctions. 
\end{example}

Using these eigenvalues and eigenfunctions, we can define the fractional operators $(-\Delta)^{\beta/2}$ and $(\kappa^2 - \Delta)^{\alpha/2}$ in \eqref{SPDE} by their spectral decompositions---see Appendix~\ref{app:fractional}---and define the intrinsic Whittle--Mat\'ern fields on bounded domains as follows.

\begin{definition}\label{def_WM}
    An intrinsic Whittle--Mat\'ern field with parameters $\beta \in (0,2)$ and  $\kappa, \alpha, \tau \geq 0$ is an intrinsic Gaussian random field $u = (u(\mv s): \mv s\in\mathcal D)$ solving
    \begin{align}\label{SPDE_def}
        (-\Delta)^{\beta/2}(\kappa^2 - \Delta)^{\alpha/2} (\tau u) = \mathcal{W}, \quad \text{on $\cD$},
    \end{align}
    where $\mathcal W$ is Gaussian white noise and $\Delta$ is the  Neumann Laplacian.
\end{definition}

Recall that the equivalence class of an intrinsic field $u$ consists of all fields obtained by adding constant functions. To obtain a unique solution to \eqref{SPDE_def}, we may, for example, restrict to random fields that integrate to zero and are thus defined on the space orthogonal to the kernel of $-\Delta$; see Appendix~\ref{App:Intrinsic_intro} for an explanation in the context of intrinsic GMRFs. These representative fields have the same variogram as any other solution to the equation. We introduce the Hilbert space $(H, \|\cdot \|_H)$ of mean-zero $L_2(\cD)$-functions, i.e., $H = \{f\in L_2(\cD) : \oint_{\cD} f(s) {\rm d}s = 0\}$ with $\|\cdot\|_H = \|\cdot\|_{L_2(\cD)}$, where the inner product is $(\cdot,\cdot)_H = (\cdot, \cdot)_{L_2(\cD)}$. Here $\oint_{\cD} f(\mv s) {\rm d}\mv{s} = |\cD|^{-1} \int_{\cD} f(\mv s) {\rm d}\mv{s}$ denotes the spatial average of the function $f$. 
We then define $\widetilde{\Delta} :  \cH_{\cN}^2(\mathcal{D})\cap H \subset H \to H$ as the Neumann Laplacian with the mean zero constraint. Restriction  to the space $H$ orthogonal to the kernel implies that the eigenfunctions and eigenvalues of $-\widetilde \Delta$ are
    $\{ e_{j+1} \}_{j \in \mathbb{N}}$ and $\{\lambda_{j+1}\}_{j \in \mathbb{N}}$,    
respectively, i.e., the same as those of $-\Delta$, except that $e_1$ and  $\lambda_1=0$ are removed.

Using these eigenvalues and eigenfunctions, we then define the fractional operator $(-\widetilde{\Delta})^{\beta/2}(\kappa^2 - \widetilde{\Delta})^{\alpha/2}$ by its spectral decomposition (see Appendix \ref{app:fractional}), and consider 
\begin{equation}\label{eq:fractional_intrinsic}
(-\widetilde{\Delta})^{\beta/2}(\kappa^2 - \widetilde{\Delta})^{\alpha/2} (\tau u) = \widetilde{\mathcal{W}} \quad \text{on $\cD$},
\end{equation}
where $\widetilde{\mathcal{W}}$ is Gaussian white noise on $H$, which can be represented as
$\widetilde{\mathcal{W}} = \sum_{j=1}^\infty \xi_j {e}_{j+1}$,   
where $\xi_j \sim N(0,1)$, $j\in\mathbb N$, are independent random variables.

With these modifications we now show that~\eqref{eq:fractional_intrinsic} has a unique solution in $L_2(\Omega, H)$, the space of $H$-valued Bochner measurable random variables with finite second moment, which defines a proper (rather than intrinsic) Gaussian random field.

\begin{proposition}\label{prop:uniqueness}
	Suppose that $\alpha\in \mathbb{R}$,  $\beta \geq 0$ and $\alpha + \beta > d/2$. Then \eqref{eq:fractional_intrinsic} has a unique solution $u\in L_2(\Omega, H)$, which is a centered square-integrable Gaussian random field satisfying the zero-mean constraint. 
\end{proposition}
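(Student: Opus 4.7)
My plan is to prove existence, uniqueness, and the stated properties by passing to the spectral representation afforded by the Neumann eigenbasis $\{e_{j+1}\}_{j\in\mathbb N}$ of $-\widetilde{\Delta}$ on $H$, whose eigenvalues $\lambda_{j+1}>0$ are strictly positive since the kernel direction $e_1$ has been removed. By the spectral definition of the fractional operators in Appendix~\ref{app:fractional}, the operator
\[
\cA := (-\widetilde{\Delta})^{\beta/2}(\kappa^2-\widetilde{\Delta})^{\alpha/2}
\]
is diagonal in this basis, with eigenvalues $\mu_j := \lambda_{j+1}^{\beta/2}\,(\kappa^2+\lambda_{j+1})^{\alpha/2}>0$ for every $j\in\mathbb N$. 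The natural candidate for a solution is therefore obtained by formally inverting $\cA$ on each mode of the white-noise expansion $\widetilde{\mathcal W}=\sum_{j\geq 1}\xi_j e_{j+1}$ with $\xi_j \stackrel{\text{iid}}{\sim} N(0,1)$, giving
\begin{equation}\label{eq:proof_candidate}
u \;=\; \tau^{-1}\sum_{j=1}^{\infty} \mu_j^{-1}\,\xi_j\, e_{j+1}.
\end{equation}

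\textbf{Existence (convergence in $L_2(\Omega,H)$).} Because the basis is orthonormal in $H$ and the $\xi_j$ are independent standard Gaussian,
\[
\bbE\|u\|_H^2 \;=\; \tau^{-2}\sum_{j=1}^{\infty}\mu_j^{-2}
\;=\;\tau^{-2}\sum_{j=1}^{\infty} \lambda_{j+1}^{-\beta}\,(\kappa^2+\lambda_{j+1})^{-\alpha}.
\]
Only finitely many modes give small $\lambda_{j+1}$, and each of these contributes a finite amount since $\lambda_{j+1}>0$ and $\kappa^2+\lambda_{j+1}>0$. For the tail, Weyl's asymptotics for the Neumann Laplacian on a bounded Lipschitz domain give $\lambda_j \asymp j^{2/d}$ as $j\to\infty$, so the summand behaves like $j^{-2(\alpha+\beta)/d}$. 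This is summable precisely when $2(\alpha+\beta)/d>1$, which is the standing hypothesis $\alpha+\beta>d/2$. Hence the partial sums of~\eqref{eq:proof_candidate} form a Cauchy sequence in $L_2(\Omega,H)$ and converge to a well-defined element $u\in L_2(\Omega,H)$. By construction every partial sum lies in $H$, so $\oint_{\cD} u(\mv s)\,\rd\mv s = 0$ almost surely, giving the zero-mean constraint, and the limit is Gaussian since it is an $L_2$-limit of Gaussians.

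\textbf{Verification and uniqueness.} To verify that $u$ solves~\eqref{eq:fractional_intrinsic}, I would check that $\tau u$ lies in the domain of $\cA$ (equivalently, $\sum_j \mu_j^2 \cdot \mu_j^{-2} = $ finite sums of modes of $\xi_j$, truncated and passed to the limit suitably) and that $\cA(\tau u) = \widetilde{\mathcal W}$ as an identity of $H$-valued random variables, at least in a distributional/spectral sense. Concretely, for any test index $k$,
\[
\bigl(\cA(\tau u), e_{k+1}\bigr)_H \;=\; \mu_k \cdot \mu_k^{-1}\,\xi_k \;=\; \xi_k \;=\; (\widetilde{\mathcal W}, e_{k+1})_H,
\]
so the equation holds mode-by-mode, which characterizes both sides in $H$. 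For uniqueness, if $u_1, u_2\in L_2(\Omega,H)$ both solve~\eqref{eq:fractional_intrinsic}, then $w = u_1 - u_2$ satisfies $\cA(\tau w)=0$; projecting onto $e_{j+1}$ gives $\mu_j \tau (w, e_{j+1})_H = 0$ and, since $\mu_j>0$ for every $j\geq 1$ and $\{e_{j+1}\}_{j\in\mathbb N}$ is a complete basis of $H$, we conclude $w=0$ in $L_2(\Omega,H)$.

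\textbf{Main obstacle.} The delicate point is not the Weyl tail estimate but rather giving the equation~\eqref{eq:fractional_intrinsic} a rigorous meaning when $\alpha<0$, so that $\cA$ is unbounded on $H$, and making sure the series~\eqref{eq:proof_candidate} really does lie in the domain of $\cA$ so the verification step is justified. I would handle this by introducing the Hilbert scale $\dot H^{2\sigma}_{\cA}= \{f = \sum c_j e_{j+1} : \sum \mu_j^{2\sigma/(\alpha+\beta)} c_j^2 <\infty\}$ associated with $\cA$ (defined via the same spectral calculus as in Appendix~\ref{app:fractional}), so that the identity $\cA(\tau u) = \widetilde{\mathcal W}$ can be interpreted in the dual space that contains $\widetilde{\mathcal W}$. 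The remaining hypothesis $\alpha + \beta > d/2$ is exactly what is needed to place the right-hand side (viewed through the isomorphism $\cA^{-1}$) into $L_2(\Omega,H)$, closing both existence and uniqueness.
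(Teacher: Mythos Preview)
Your proof is correct and follows essentially the same spectral route as the paper: both arguments reduce to the fact that the covariance operator has eigenvalues $\lambda_{j+1}^{-\beta}(\kappa^2+\lambda_{j+1})^{-\alpha}$ and that, by Weyl asymptotics for the Neumann Laplacian, these are summable precisely when $\alpha+\beta>d/2$. The only stylistic difference is packaging: the paper combines the two factors into a single operator $\hat L = (-\widetilde\Delta)^{\alpha/\beta}(\kappa^2-\widetilde\Delta)$, rewrites~\eqref{eq:fractional_intrinsic} as $\hat L^{\beta/2}u=\widetilde{\mathcal W}$, and then invokes off-the-shelf results from \cite{BKK2020} (their Lemma~2.1 for the isometric-isomorphism extension of $\hat L^{\beta/2}$ on the Hilbert scale, and their Proposition~2.3 for the regularity of white noise) to conclude. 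You instead carry out the Karhunen--Lo\`eve computation by hand and identify, in your ``Main obstacle'' paragraph, exactly the Hilbert-scale interpretation that those citations provide. Both proofs therefore rest on the same ingredients; yours is more self-contained, the paper's is shorter because it outsources the functional-analytic scaffolding.
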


The proof is provided in Appendix~\ref{app:fractional}. The smoothness of the sample paths is determined by $\alpha+\beta$ and two realizations for different smoothness behaviors are shown in Figure~\ref{Fig:simu}.
Importantly, the solution $u$ to \eqref{eq:fractional_intrinsic} is also a solution to \eqref{SPDE_def} and thus is the unique representative of the intrinsic field that integrates to zero. It therefore also has the same variogram, which we now investigate in more detail.

\begin{figure}
\centering
\vspace{-0.6cm}
\includegraphics[width=0.38\textwidth]{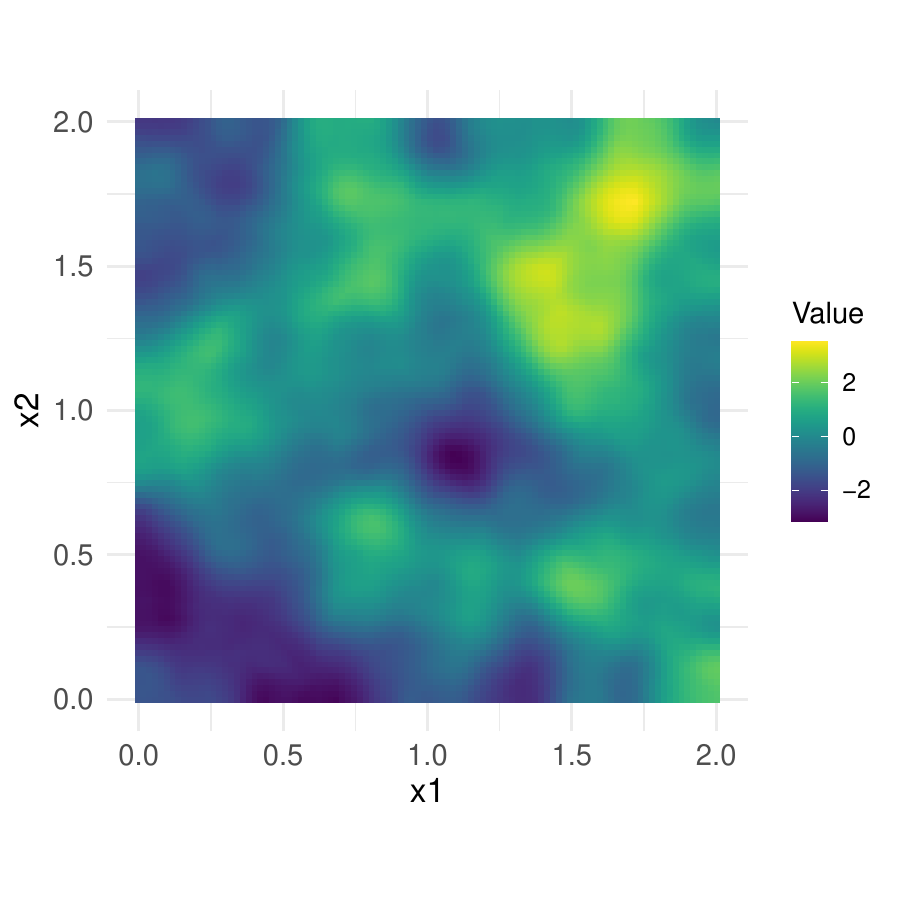}
\includegraphics[width=0.38\textwidth]{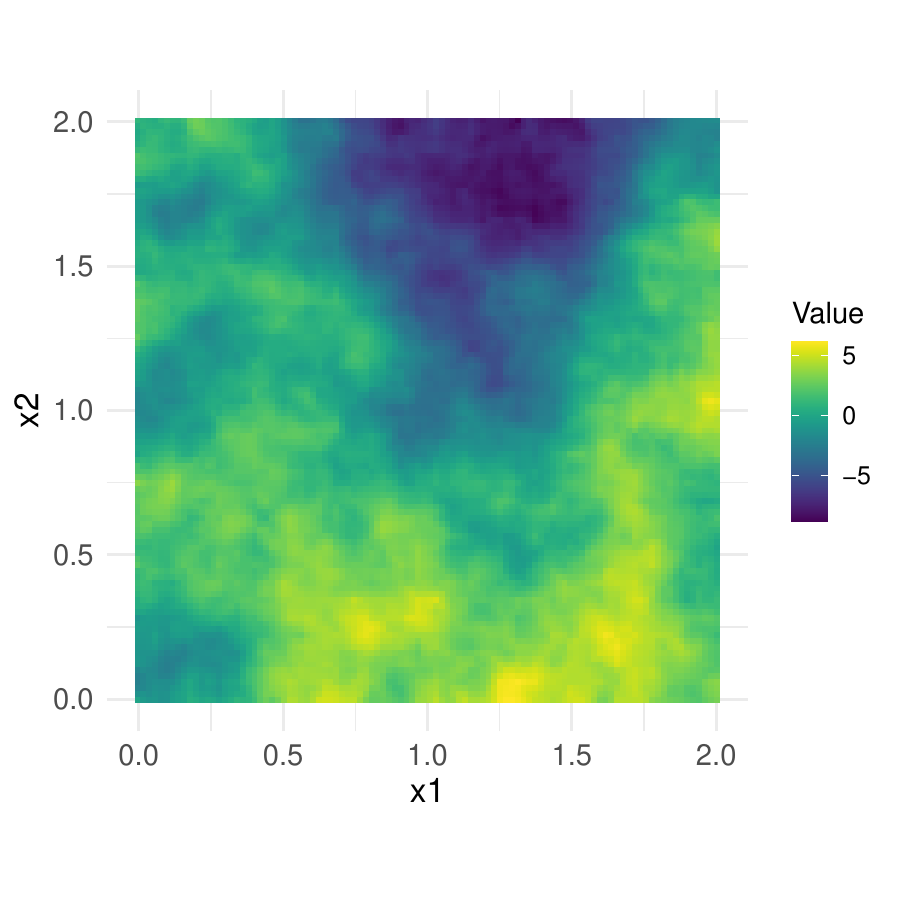}
\vspace{-0.9cm}
\caption{\label{Fig:simu} Simulations of the intrinsic \WM{} field with $\alpha=3$ and $\beta=1$ (left), and $\alpha=0.3$ and $\beta=1.5$ (right).}
\end{figure}

\subsection{Variogram}\label{sec:Properties}

With the proper Gaussian random field that solves~\eqref{eq:fractional_intrinsic}, we can now derive the variogram $\gamma$ of the corresponding intrinsic Whittle--Mat\'ern field that solves~\eqref{SPDE_def}.

If $\alpha+\beta>d/2$, the solution to \eqref{eq:fractional_intrinsic} is a Gaussian field with covariance operator $\mathcal{C} : H \rightarrow H$ given by $\mathcal{C} = \tau^{-2} (-\widetilde \Delta)^{-\beta}(\kappa^2- \widetilde \Delta)^{-\alpha}$.  
The operator $\mathcal{C}$ has eigenvalues 
$\{\hat \lambda_{j}\}_{j\in\mathbb N} =\{\tau^{-2}\lambda_{j+1}^{-\beta}(\kappa^2+\lambda_{j+1})^{-\alpha}\}_{j\in\mathbb N}$,
and eigenfunctions $\{\hat e_j\}=\{e_{j+1}\}_{j\in\mathbb N}$. 
Thus, the corresponding centered Gaussian field has the covariance function 
$\rho(\mv s,\mv t) = \sum_{j=1}^\infty \hat{\lambda}_j \hat{e}_j(\mv s)\hat{e}_j(\mv t)$, for $\mv s,\mv t \in \mathcal{D}$.
The variogram of the intrinsic Gaussian field is then 
\begin{equation}\label{eq:Vf}
\gamma(\mv s,\mv t) = \rho(\mv s,\mv s) + \rho(\mv t,\mv t) - 2\rho(\mv s,\mv t) = \sum_{j=1}^\infty \hat{\lambda}_j (\hat{e}_j(\mv s)-\hat{e}_j(\mv t))^2.
\end{equation}
When $\mathcal{D}=[0,L]^d$, \eqref{eq:Vf} combined with $\{\hat e_j\}$ and $\{\hat \lambda_{j}\}$ from Example~\ref{ex:cube} yields
$$
\gamma_L(\mv s,\mv t) = \tau^{-2}\sum_{\mv{j}\in\mathbb{N}_0^d} (\|\mv{j}\|^2\pi^2/L^2)^{-\beta}(\kappa^2 + \|\mv{j}\|^2\pi^2/L^2)^{-\alpha} (e_\mv{j}(\mv s) - e_\mv{j}(\mv t))^2.
$$ 
Due to the homogeneous Neumann boundary conditions, $\gamma_L$ is neither stationary nor isotropic, and 
$\gamma_L(\mv s,\mv t)$ depends not only on the distance $\| \mv s - \mv t\|$ but also their proximity to the boundary of the domain $\mathcal{D}=[0,L]^d$.
To obtain a stationary and isotropic variogram we take the limit as these boundaries move further away, i.e., as $L \to \infty$.

\begin{proposition}\label{lem:Vard2}
 Let $\gamma_L$ denote the variogram of \eqref{SPDE_def} on $\mathcal{D}=[-L/2,L/2]^d$ with $\beta < 1 + d/2$ and $\beta+\alpha > d/2$. Then, 
\begin{align*}
    \lim_{L \to \infty}\gamma_L(\mv s,\mv t) = \gamma(\lVert \mv s-\mv t \rVert) := 
 \frac{4}{\tau^{2}2^{d}\pi^{d/2}\Gamma(d/2)}  \int_0^\infty (1 - \Lambda_d(\lVert \mv s- \mv t \rVert r))  \frac{r^{d-1}}{r^{2\beta}(\kappa^2 + r^{2})^{\alpha}}  {\rm d}r 
\end{align*}
is a stationary and isotropic variogram on $\mathbb{R}^d$, where 
$\Lambda_d(z) = 2^{(d-2)/2}\Gamma\left(\frac{d}{2}\right)\frac{J_{(d-2)/2}(z)}{z^{(d-2)/2}}$, $z \geq 0$, 
and $J_m$ is the Bessel function of the first kind of order $m$.
\end{proposition}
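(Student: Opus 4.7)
The strategy is to identify the claimed integral as the stationary variogram of the full-space SPDE on $\mathbb R^d$, and then to show $\gamma_L\to\gamma$ by recognising $\gamma_L$ as a Riemann sum on the frequency lattice $(\pi/L)\mathbb N_0^d$, which becomes dense as $L\to\infty$ while the Neumann boundary contributions average to zero.

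First, I compute the stationary variogram on $\mathbb R^d$ by Fourier methods. The operator $(-\Delta)^{\beta/2}(\kappa^2-\Delta)^{\alpha/2}\tau$ has symbol $\tau\|\mv\xi\|^\beta(\kappa^2+\|\mv\xi\|^2)^{\alpha/2}$, so the spectral density of $u$ is $f(\mv\xi)=[(2\pi)^d\tau^2\|\mv\xi\|^{2\beta}(\kappa^2+\|\mv\xi\|^2)^\alpha]^{-1}$, giving $\gamma(\mv h)=2\int_{\mathbb R^d}(1-\cos(\mv\xi\cdot\mv h))f(\mv\xi)\,\mathrm d\mv\xi$. Absolute integrability near $\mv 0$ uses $1-\cos(z)=O(z^2)$ with $\beta<1+d/2$, and at infinity it uses $\alpha+\beta>d/2$. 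Passing to spherical coordinates and invoking the identity $|S^{d-1}|^{-1}\int_{S^{d-1}}\cos(r\hat\omega\cdot\mv h)\,\mathrm d\sigma(\hat\omega)=\Lambda_d(r\|\mv h\|)$ together with $|S^{d-1}|=2\pi^{d/2}/\Gamma(d/2)$ recovers the displayed formula.

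For the convergence $\gamma_L\to\gamma$, I translate Example~\ref{ex:cube} to $[-L/2,L/2]^d$ to obtain Neumann eigenfunctions $e_\mv j(\mv s)=\prod_i\frac{(\sqrt 2)^{1(j_i>0)}}{\sqrt L}\cos(j_i\pi(s_i+L/2)/L)$ with eigenvalues $\|\mv j\|^2\pi^2/L^2$. Setting $\mv\xi_\mv j=\pi\mv j/L$, formula \eqref{eq:Vf} becomes
$$\gamma_L(\mv s,\mv t)=\frac{1}{\pi^d\tau^2}\sum_{\mv j\neq\mv 0}\frac{L^d(e_\mv j(\mv s)-e_\mv j(\mv t))^2}{\|\mv\xi_\mv j\|^{2\beta}(\kappa^2+\|\mv\xi_\mv j\|^2)^\alpha}\Big(\tfrac{\pi}{L}\Big)^d,$$
a Riemann sum on the lattice $(\pi/L)\mathbb N_0^d\setminus\{\mv 0\}$. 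Using $\cos(j_i\pi(s+L/2)/L)=\cos(j_i\pi/2)\cos(\xi_{\mv j,i}s)-\sin(j_i\pi/2)\sin(\xi_{\mv j,i}s)$ with the product-to-sum identities, I expand
$$L^d(e_\mv j(\mv s)-e_\mv j(\mv t))^2=2\Bigl(1-\textstyle\prod_i\cos(\xi_{\mv j,i}(s_i-t_i))\Bigr)+R_\mv j(\mv s,\mv t;L),$$
where every term of the remainder $R_\mv j$ carries an oscillatory factor $\prod_{i\in I}(-1)^{j_i}=\prod_{i\in I}\cos(\xi_{\mv j,i}L)$ for some non-empty $I\subseteq\{1,\dots,d\}$. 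Riemann-sum convergence (with integrable dominant $\min(1,\|\mv\xi\|^2\|\mv s-\mv t\|^2)\|\mv\xi\|^{-2\beta}(\kappa^2+\|\mv\xi\|^2)^{-\alpha}$) sends the diagonal part to $\frac{2}{(2\pi)^d\tau^2}\int_{\mathbb R^d}(1-\cos(\mv\xi\cdot(\mv s-\mv t)))\|\mv\xi\|^{-2\beta}(\kappa^2+\|\mv\xi\|^2)^{-\alpha}\,\mathrm d\mv\xi=\gamma(\mv s-\mv t)$, after using coordinate-wise symmetry $\xi_i\mapsto-\xi_i$ and $\prod_i\cos(a_i)=2^{-d}\sum_{\mv\sigma\in\{\pm 1\}^d}\cos(\mv\sigma\cdot\mv a)$ to extend from the positive orthant to $\mathbb R^d$. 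Each remainder contribution vanishes by a Riemann--Lebesgue argument, since $\cos(\xi_iL)$ oscillates rapidly in $\mv\xi$ as $L\to\infty$.

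\textbf{Main obstacle.} The delicate point is that the weight $h(\mv\xi)=\|\mv\xi\|^{-2\beta}(\kappa^2+\|\mv\xi\|^2)^{-\alpha}$ need not be locally integrable near $\mv 0$ when $\beta\geq d/2$, so the individual reflection terms in $R_\mv j$ are only meaningful once combined with the diagonal part through the full kernel. I will handle this by splitting the lattice at a small radius $\delta>0$: on $\|\mv\xi_\mv j\|\leq\delta$ I keep the kernel intact and use the Taylor bound $L^d(e_\mv j(\mv s)-e_\mv j(\mv t))^2\lesssim\|\mv\xi_\mv j\|^2\|\mv s-\mv t\|^2$ to bound this portion by $O(\delta^{d+2-2\beta})$ uniformly in $L$; on $\|\mv\xi_\mv j\|>\delta$ the weight is bounded, so the decomposition above and the Riemann--Lebesgue estimate apply termwise. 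Sending first $L\to\infty$ and then $\delta\to 0$ concludes the proof.
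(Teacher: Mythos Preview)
Your approach is essentially the same as the paper's: recognise $\gamma_L$ as a Riemann sum on the Neumann frequency lattice, decompose $L^d(e_{\mv j}(\mv s)-e_{\mv j}(\mv t))^2$ into the stationary kernel $2\bigl(1-\prod_i\cos(\xi_{\mv j,i}(s_i-t_i))\bigr)$ plus sign-alternating boundary terms, and show the latter vanish while the former converges to the spectral integral, which is then rewritten in polar coordinates via the Bessel identity. The paper carries this out explicitly only for $d=1$ and asserts that general $d$ is analogous; your direct $d$-dimensional treatment and the $\delta$-cutoff near the origin (needed since the bare weight $\|\mv\xi\|^{-2\beta}$ may fail to be locally integrable when $\beta\ge d/2$) are welcome refinements, and your ``Riemann--Lebesgue argument'' for the remainder is really a discrete Abel-summation estimate on the alternating factors $(-1)^{j_i}$, which is exactly what is required.
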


The proof is given in Appendix \ref{App:Variogram}. Note that $\Lambda_d(z)$ simplifies for low dimensions. For example, $\Lambda_1(z) = \cos(z)$, 
$\Lambda_2(z) = J_0(z)$ and $\Lambda_3(z) = \sin(z)/z$. This can be used to compute explicit forms of the variogram for particular parameter combinations. A simple example is that we indeed obtain the fractional variogram when $\alpha=0$.

\begin{example}\label{ex:fractional}
    For $\alpha = 0$ and $d/2 < \beta < 1 + d/2$, we obtain the fractional variogram
$\gamma(h) = c_d^{-1}\tau^{-2}h^{2\beta - d}$, 
where $c_d$ is a dimension-dependent constant which for $d=1$ and $d=2$ is
$c_1= \Gamma(2\beta)\cos\{\pi(\beta+1)\}$ and  $c_2 = 2^{2\beta-1}\Gamma(\beta)^2\sin\{\pi(\beta+1)\}$, respectively. 
 \end{example}

Another example is the following non-fractional model which we will use later.
\begin{example}
	For $d=2$ and $\alpha=\beta=1$ we obtain
    \begin{align*}
\gamma(h) &= 
\frac{1}{\pi \kappa^2 \tau^2} \left\{ K_0(\kappa h) + \log \left( \frac{\kappa h}{2}\right) + \psi(1) \right\},
 \end{align*}
    where $K_0$ is the modified Bessel function of the second kind of order zero, and
    $\psi$ is the digamma function.
\end{example}

The variogram can be understood through its behavior for large and small $h$.
Below we write $f(x) \asymp g(x)$ if  $c_1,c_2 \in (0,\infty)$ exists such that $c_1 g(x) \leq f(x) \leq c_2g(x)$ for all $x$ sufficiently large, and $f(x) \sim g(x)$ if $\lim_x f(x)/g(x) = 1$.
\begin{proposition}\label{prop:LG} 
    If $\alpha+\beta>d/2$, then as $h \to 0$
    \begin{equation}\label{eq:LB}
    \gamma(h) \asymp h^{\min(2(\beta+\alpha)-d, 2)},
    \end{equation}
    and as $h \to \infty$ 
    \begin{equation}\label{eq:GB}
    \gamma(h) \asymp \begin{cases}
        1, \qquad &\text{ when } 0<\beta<d/2, \\
        \log(h), \qquad &\text{ when } \beta=d/2, \\
        h^{2\beta-d}, \qquad &\text{ when } d/2< \beta<d/2+1.
    \end{cases}
    \end{equation}
    In addition, when $d=2$ and $\beta=1$, 
    $\gamma(h) \sim (2 \pi \kappa^{2\alpha} \tau^2)^{-1} \log (h)$ as $h \to \infty$.
\end{proposition}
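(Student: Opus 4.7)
The plan is to start from the stationary variogram integral representation given in Proposition~\ref{lem:Vard2} and apply the substitution $u=hr$, which gives
\[
\gamma(h) \;=\; \frac{C_d}{\tau^{2}}\,h^{2\beta-d}\int_0^\infty \bigl(1-\Lambda_d(u)\bigr)\,\frac{u^{\,d-1-2\beta}}{(\kappa^2+u^2/h^2)^{\alpha}}\,\mathrm{d}u,\qquad C_d=\frac{4}{2^{d}\pi^{d/2}\Gamma(d/2)}.
\]
Throughout I exploit two facts about~$\Lambda_d$: near the origin $\Lambda_d(z)=1-z^{2}/(2d)+O(z^{4})$, and globally $\Lambda_d$ is bounded with the decaying oscillatory Bessel tail $|\Lambda_d(z)|=O(z^{-(d-1)/2})$ as $z\to\infty$. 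In particular, $0\le 1-\Lambda_d(z)\le C\min(z^{2},1)$ uniformly in~$z$, which serves as the main majorant.

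For the small-$h$ bound I would split on whether $\alpha+\beta$ is below or above the critical value $d/2+1$. In the subcritical case I rewrite $(\kappa^2+u^2/h^2)^{-\alpha}=h^{2\alpha}(\kappa^2 h^2+u^2)^{-\alpha}$ so that
\[
\gamma(h) \;=\; \frac{C_d}{\tau^{2}}\,h^{2(\alpha+\beta)-d}\int_0^\infty (1-\Lambda_d(u))\,\frac{u^{d-1-2\beta}}{(\kappa^{2}h^{2}+u^{2})^{\alpha}}\,\mathrm{d}u.
\]
As $h\to 0$, the integrand converges pointwise to $(1-\Lambda_d(u))\,u^{\,d-1-2\beta-2\alpha}$, which is integrable at $0$ (by the Taylor expansion together with $\alpha+\beta<d/2+1$) and at infinity (by $\alpha+\beta>d/2$). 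Dominated convergence with uniform majorant $C\min(u^{2},1)\,u^{\,d-1-2\beta-2\alpha}$ then yields $\gamma(h)/h^{2(\alpha+\beta)-d}\to$ positive constant. In the supercritical case $\alpha+\beta>d/2+1$, I revert to the unscaled integral; $(1-\Lambda_d(hr))/h^{2}\to r^{2}/(2d)$ pointwise, the majorant $C\min(h^{2}r^{2},1)\cdot r^{\,d-1-2\beta}(\kappa^2+r^{2})^{-\alpha}$ is controlled by $r^{\,d+1-2\beta}(\kappa^2+r^{2})^{-\alpha}$, which is integrable under the parameter assumptions, and DCT yields $\gamma(h)/h^{2}\to$ positive constant.

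For the large-$h$ bound I distinguish three sub-cases in $\beta$. When $\beta<d/2$, the weight $r^{\,d-1-2\beta}(\kappa^2+r^2)^{-\alpha}$ is integrable on $(0,\infty)$; a Riemann--Lebesgue/Hankel-type argument shows $\int_0^\infty\Lambda_d(hr)\,r^{\,d-1-2\beta}(\kappa^2+r^2)^{-\alpha}\mathrm{d}r\to 0$ as $h\to\infty$, so $\gamma(h)$ converges to a finite positive constant and therefore $\gamma(h)\asymp 1$. When $d/2<\beta<d/2+1$, the rescaled integrand tends pointwise to $(1-\Lambda_d(u))\,u^{\,d-1-2\beta}\kappa^{-2\alpha}$, which is integrable (order $u^{2}$ at $0$ and order $u^{\,d-1-2\beta}$ at infinity with exponent $<-1$), and DCT gives $\gamma(h)\sim C\,h^{2\beta-d}$. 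In the borderline case $\beta=d/2$ the rescaled integrand has critical weight $u^{-1}$; splitting at $u=\kappa h$, on $(1,\kappa h)$ the factor $(\kappa^2+u^2/h^{2})^{-\alpha}$ is comparable to $\kappa^{-2\alpha}$ and the average of $1-\Lambda_d(u)$ tends to $1$, producing the leading contribution $\kappa^{-2\alpha}\int_1^{\kappa h}u^{-1}\mathrm{d}u\asymp\log h$, while the ranges $(0,1)$ and $(\kappa h,\infty)$ contribute only $O(1)$ via the uniform bound $1-\Lambda_d\lesssim\min(z^2,1)$ and the decay $|\Lambda_d(z)|=O(z^{-(d-1)/2})$.

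The sharp asymptotic at $d=2$, $\beta=1$ follows by executing the $\beta=d/2$ calculation more precisely using the classical Hankel identity of the form $\int_0^{M}(1-J_{0}(u))u^{-1}\,\mathrm{d}u=\log M+\textrm{const}+o(1)$ as $M\to\infty$, applied at $M=\kappa h$; combining with the prefactor $C_{2}\tau^{-2}\kappa^{-2\alpha}$ recovers the stated leading coefficient in front of $\log h$. The main obstacle is the borderline case $\beta=d/2$: neither pointwise convergence of the rescaled integrand nor a uniform $L^{1}$ dominator is available, and one must carry out the split at the correct scale $u\sim\kappa h$ and control the oscillatory Hankel tail of $\Lambda_d$ on $(\kappa h,\infty)$. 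Obtaining the exact leading constant in the $d=2$, $\beta=1$ asymptotic is then the most delicate piece of bookkeeping; an alternative (and arguably cleaner) route for the constant is to expand the closed-form variogram derived in the preceding example using $K_0(\kappa h)\to 0$ as $h\to\infty$ and $\log(\kappa h/2)+\psi(1)=\log h+O(1)$.
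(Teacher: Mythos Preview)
Your argument is correct and covers all cases; the overall strategy differs from the paper's in a useful way. The paper works only with the unscaled integral in $r$, splits at $r=\lVert h\rVert^{-1}$, and sandwiches $1-\Lambda_d$ by two-sided pointwise bounds $c\min(z^{2},1)\le 1-\Lambda_d(z)\le C\min(z^{2},1)$ (carried out only for $d=2$, with the remark that other $d$ are ``similar''). Your rescaling $u=hr$ together with dominated convergence extracts genuine limits rather than mere two-sided bounds, so you actually obtain $\gamma(h)\sim c\,h^{\bullet}$ in all the non-borderline regimes, which immediately implies the stated $\asymp$. This buys you a proof valid for every $d$ at once, and it sidesteps one awkward point in the paper's approach: the pointwise lower bound $1-\Lambda_d(z)\ge c_2>0$ on $[1,\infty)$ fails for $d=1$ (since $\Lambda_1=\cos$), whereas your Riemann--Lebesgue argument for the case $\beta<d/2$ works uniformly in $d$. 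In the borderline case $\beta=d/2$ both approaches must split by hand; your split at $u=1$ and $u=\kappa h$ is essentially the paper's split at $r=1/h$ and $r=\kappa$ rewritten in the new variable, and the key fact you use, $\int_{1}^{M}(1-\Lambda_d(u))u^{-1}\,\mathrm{d}u\sim\log M$, is exactly what is needed. Note also that the paper's printed proof does not actually derive the sharp constant for $d=2$, $\beta=1$; your sketch does.

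One cautionary remark: if you carry out your last step carefully you will get $\gamma(h)\sim (\pi\kappa^{2\alpha}\tau^{2})^{-1}\log h$, not $(2\pi\kappa^{2\alpha}\tau^{2})^{-1}\log h$. The former agrees with the closed form in Example~2.7 (set $\alpha=1$ and use $K_0(\kappa h)\to 0$); the stated constant in the proposition appears to be off by a factor of~$2$. So your claimed ``recovers the stated leading coefficient'' should be checked against the example rather than the proposition. A second minor point: your DCT majorants tacitly assume $\alpha\ge 0$ (e.g., $(\kappa^{2}h^{2}+u^{2})^{-\alpha}\le u^{-2\alpha}$); the paper's proof has the same implicit assumption, so this is not a defect relative to the paper, but you should either state it or supply the easy modification for $\alpha<0$. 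Finally, neither your proof nor the paper's treats the exact threshold $\alpha+\beta=d/2+1$ in the small-$h$ regime, where a logarithmic correction appears.
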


The proof is given in Appendix \ref{App:Variogram}. This result shows that the family of intrinsic Whittle--Mat\'ern variograms is very flexible in the sense that it allows different short- and long-range behaviors; this is in contrast to existing models such as the fractional variogram where both short- and long-range behaviors are controlled by the single parameter $\beta$.
The variogram $\gamma(h)$ from Proposition \ref{lem:Vard2} is illustrated in Figure \ref{Fig:SC1}. The left panel illustrates the effect of $\kappa$, which determines the rate at which the $\gamma$ transitions from its short-range ($h^2$ in this example) to its long-range behavior ($\log(h)$). When $\kappa$ is large the transition occurs quickly and the shape of $\gamma$ is determined by its long-range behavior, whereas when $\kappa$ is small the transition occurs slowly and the shape is determined by its short-range behavior. In the center panel the long-range behavior is fixed to linear, and we observe the effect of changing $\alpha$ on the short-range behavior. Finally, in the right panel the short-range behavior is held fixed (linear), and we observe the effect of changing $\beta$ on the long-range behavior.

\begin{figure}
\includegraphics[width=0.328\textwidth]{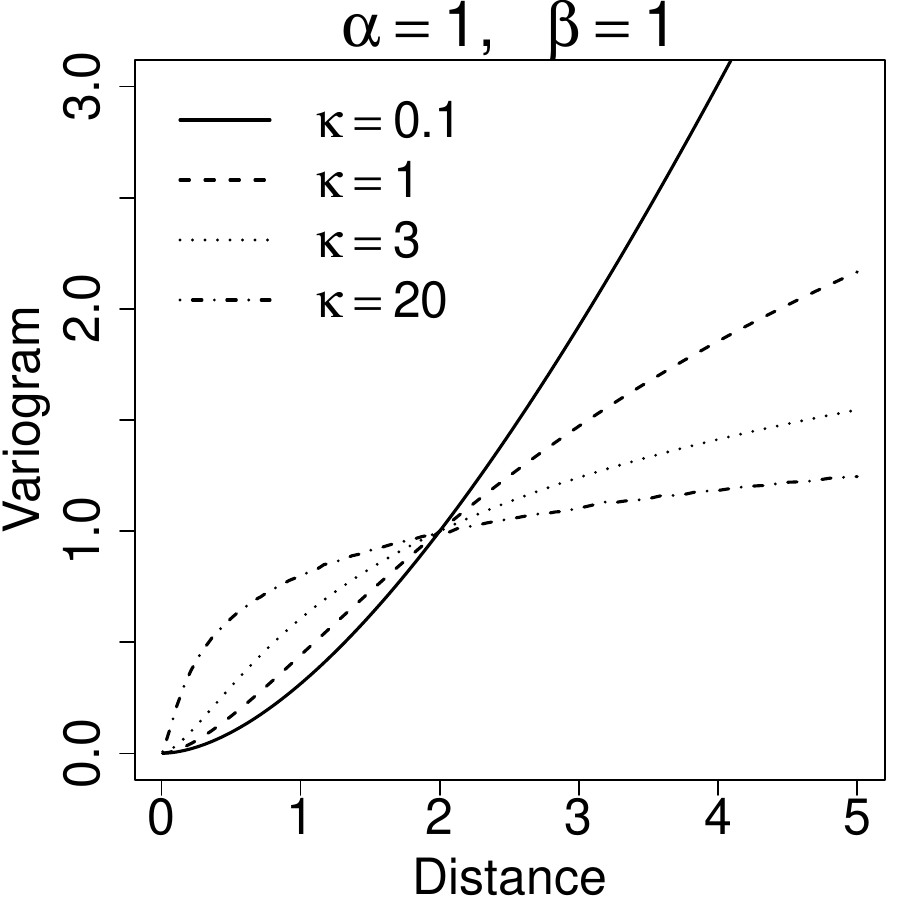}
\includegraphics[width=0.328\textwidth]{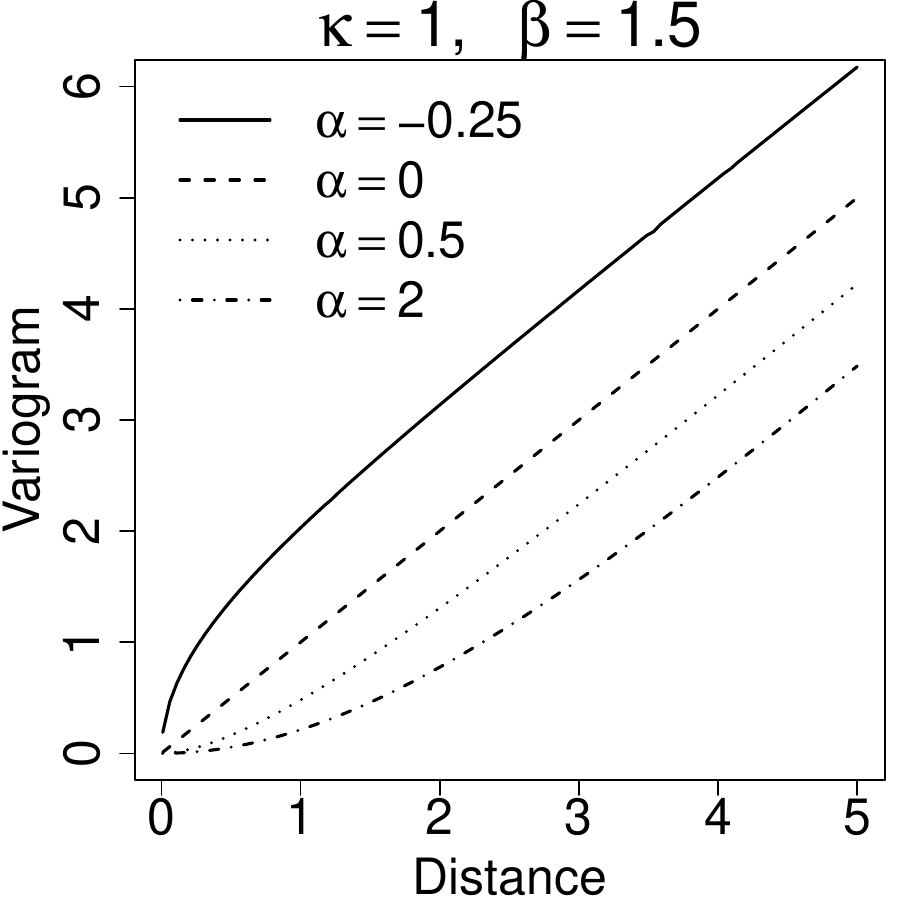}
\includegraphics[width=0.328\textwidth]{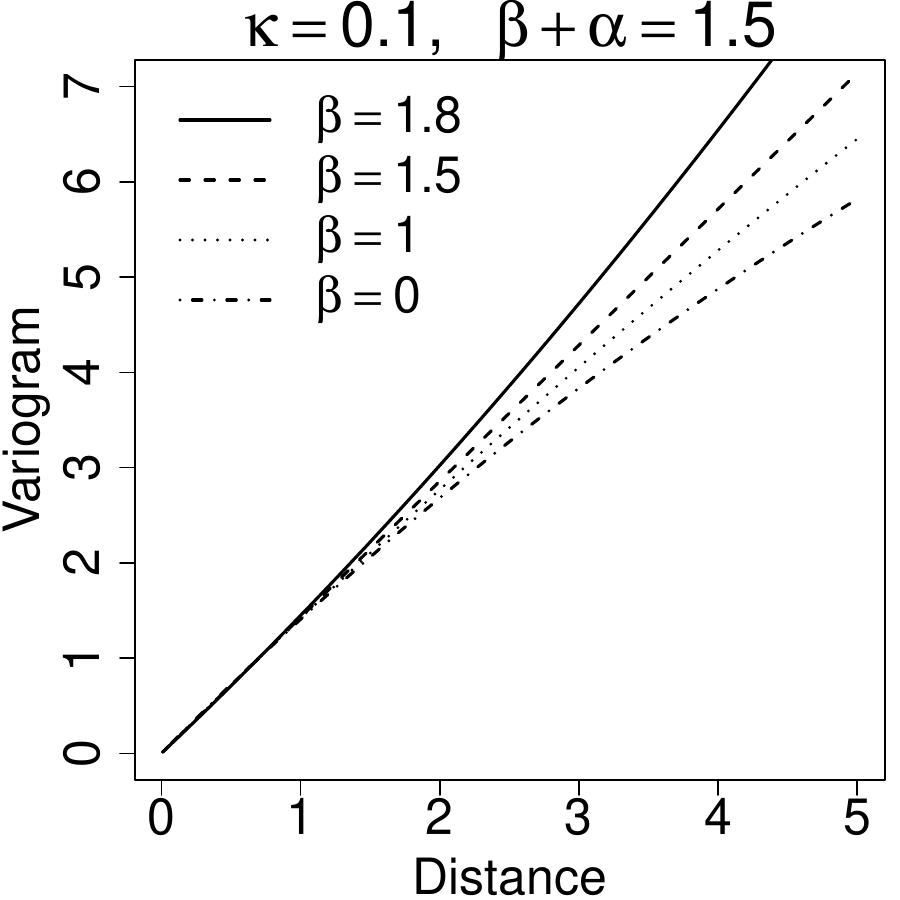}
\caption{\label{Fig:SC1} The variogram $\gamma(h)$ in Proposition \ref{lem:Vard2} when $d=2$. We choose $\tau$ such that $\gamma(2)=1$ (left), $\gamma(5)-\gamma(4)=1$ (center), and $\gamma(0.7)=1$ (right).}
\end{figure}

\section{Statistical inference}\label{sec:inference}

Statistical inference for the intrinsic Whittle--Mat\'ern field $u = (u(\mv s): \mv s\in\mathcal D)$ from Definition~\ref{def_WM} with variogram $\gamma$ relies on its finite-dimensional distribution at observation locations $s_1, \dots, s_k \in \mathcal{D}$. 
This distribution is characterized by the variogram matrix $\Gamma$ with elements  $\Gamma_{i,j} = \gamma(s_i, s_j)$, which has a corresponding precision matrix that is generally dense. 
Likelihood inference then becomes computationally prohibitive for large  $k$ as  the cost of likelihood evaluation is of order $k^3$.
Our goal is now to construct an approximation of the intrinsic Whittle--Mat\'ern fields which has a sparse precision matrix and therefore scales better to large $k$.

\subsection{Finite element and rational approximation}\label{FEM}

To construct a sparse approximation we use a triangulation $\mathcal{T}$ of the domain $\mathcal{D}$ with $N$ vertices $\mv{v}_1,\dots, \mv{v}_N$, and let $\{\varphi_j\}_{j=1,\dots, N}$ be the set of continuous and piecewise linear functions with $\varphi_j(\mv{v}_j) = 1$ at one vertex of the triangulation and 0 at all other vertices. The approximation is constructed through a FEM approximation, resulting in the following approximation of the continuous field
\begin{equation}\label{eqn:FEMp}
u(\mv{s}) \approx \widehat u(\mv{s}) := \sum_{j=1}^N W_j \varphi_j(\mv{s}), \quad \mv{s} \in \mathcal{D},
\end{equation}
which applies a linear interpolation between the values on vertices.

For integer $\alpha$ and $\beta$, an extension of the SPDE approach proposed by \cite{Lin11} to this more general SPDE model yields that the weights $\mathbf{W}=(W_1, \dots, W_N)$ are centered multivariate normal with a sparse precision matrix, 
\begin{align}\label{prec_discrete}
    {Q}_{\kappa,\alpha,\beta} = {Q}_{\kappa,0,\beta}{C}^{-1}{Q}_{\kappa,\alpha,0}.
\end{align}
Here, ${C}$ is the diagonal mass matrix with elements ${C}_{i,i} = (\varphi_i,1)_{L_2(\mathcal{D})}$, and defining ${G}$ as the stiffness matrix with elements ${G}_{i,j} = (\nabla\varphi_i,\nabla\varphi_j)_{L_2(\mathcal{D})}$, for $i,j=1,\dots, N$, we have that ${Q}_{\kappa,0,1} = {G}$ and ${Q}_{\kappa,0,\beta} = {G}{C}^{-1}{Q}_{\kappa,0,\beta-1}$  for ${\beta\in\{2,3,\ldots\}}$. Further, ${Q}_{\kappa,1,0} = \kappa^2{C} + {G}$ and 
${Q}_{\kappa,\alpha,0} = (\kappa^2{C} + {G}){C}^{-1}{Q}_{\kappa,\alpha-1,0}$ for ${\alpha\in\{2,3,\ldots\}}$. Thus, $\mv{W}$ is a GMRF if $\beta = 0$ and because $\mv{G}\mv{1} = 0$, it is a first order iGMRF if $\beta > 0$. The details of the derivation are given in Appendix \ref{app:fem}.

When $\alpha,\beta\geq 0$ such that $\alpha+\beta>d/2$, we extend the rational approximation method in \citet{xiong2022} to intrinsic fields. 
This results in an approximation where $\mathbf W = \mathbf W_1 +\dots + \mathbf W_M$ is a sum of first-order intrinsic GMRFs $\mathbf{W}_1,\ldots, \mathbf{W}_M$ for some $M\in\mathbb N$, with sparse precision matrices, which will facilitate computationally efficient inference. 
The details are provided in Appendix~\ref{app:fem}, but in short, we first perform the same FEM approximation as in the integer case. We can then represent the covariance operator of this approximation as 
$
\widetilde{L}_\meshwidth^{-\beta}L_\meshwidth^{-\alpha} = \widetilde{L}_\meshwidth^{-\lfloor \beta \rfloor}L_\meshwidth^{-\lfloor \alpha \rfloor} \widetilde{L}_\meshwidth^{-\{\beta\}} L_\meshwidth^{-\{\alpha\}},
$
where $\widetilde{L}_\meshwidth$ is the FEM approximation of $-\widetilde{\Delta}$,  $L_\meshwidth$ is the FEM approximation of $\kappa^2-\widetilde{\Delta}$ and $\{a \} = a - \lfloor a \rfloor$ denotes the fractional part of $a\in\mathbb R$.
The final approximation is then obtained by approximating the fractional operators $\widetilde{L}_\meshwidth^{-\{\beta\}}$ and $L_\meshwidth^{-\{ \alpha \}}$ as sums of $\widetilde{m}$ respectively $m$ non-fractional operators, where each term in the sums corresponds to a first-order intrinsic GMRF. 
This results in an approximation, 
$\widetilde{L}_\meshwidth^{-\beta}L_\meshwidth^{-\alpha} \approx 
\sum_{i=1}^{M} Q_{\meshwidth,i}^{-1}$,
where $M = 1 + m + \widetilde{m} + m\widetilde{m}$, and the matrix representation of each  $Q_{\meshwidth,i}$ can be computed solely based on the mass and stiffness matrices; see Appendix~\ref{app:fem}.

The accuracy of the variogram $\gamma_{\meshwidth,m,\widetilde{m}}$ corresponding to this approximation depends on the triangulation $\mathcal{T}$, and on the number of terms $m$ and $\widetilde m$ used in the approximation~\eqref{rational_app} of the functions $x^{\{ \alpha \}}$ and $x^{\{ \beta \}}$, respectively. 
To express this formally, we consider a family of triangulations $(\mathcal{T}_\meshwidth)_{\meshwidth \in (0,1)}$ indexed by the mesh width 
    $\meshwidth:=\max_{T \in \mathcal{T}_\meshwidth} \meshwidth_T$, 
where $\meshwidth_T = diam(T)$ is the diameter of $T$. To avoid the presence of long thin triangles, we assume that this family is quasi-uniform (that is, if $\rho_T$ denotes the radius of the largest ball inscribed in $T$, there exist constants $K_1,K_2>0$ such that $\rho_T \geq K_1 \meshwidth_T$ and $\meshwidth_T \geq K_2 h$ for all $T\in \mathcal{T}_\meshwidth$).
We can then prove the following explicit rate of convergence of the approximate variogram to the true variogram; see also Figure~\ref{Fig:fem} for a graphical illustration of this result, and of the quality of the FEM approximation.

\begin{theorem}\label{cov_fem_approx_rate_general}
    Let $\gamma$ denote the variogram of the solution $u$ to~\eqref{SPDE_def} with parameters $\alpha\in\mathbb R$, $\beta \geq 0$, $\kappa, \tau >0$ such that ${\alpha + \beta > d/2}$.  Then, for each $\varepsilon>0$, 
    \begin{equation*} \label{eq:variogram_error}
	\|\gamma - \gamma_{\meshwidth,m,\widetilde{m}}\|_{L_2(\cD \times \cD)}  \leq c \meshwidth^{r} + c\meshwidth^{-\frac{d}{2}}\left(1_{\beta \notin \mathbb{N}}  e^{-2\pi\sqrt{\{\beta\} \widetilde{m}}} + 1_{\alpha \notin \mathbb{N}} e^{-2\pi\sqrt{\{\alpha\} m}}\right),
\end{equation*}    
    where $r = \min\{2\alpha + 2\beta-\frac{d}{2} -\varepsilon,2\}$ and $c>0$ does not depend on $m,\tilde m$ and $\meshwidth$, but may depend on $\varepsilon,\alpha, \beta, \kappa$ and~$\mathcal{D}$. 
\end{theorem}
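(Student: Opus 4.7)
The plan is to split the error into a pure finite-element discretization error and a rational approximation error via the triangle inequality
\begin{align*}
\|\gamma - \gamma_{\meshwidth,m,\widetilde{m}}\|_{L_2(\cD \times \cD)}
&\leq \|\gamma - \gamma_\meshwidth\|_{L_2(\cD \times \cD)}
+ \|\gamma_\meshwidth - \gamma_{\meshwidth,m,\widetilde{m}}\|_{L_2(\cD \times \cD)},
\end{align*}
where $\gamma_\meshwidth$ denotes the variogram associated with \emph{exact} fractional powers of the FEM operators $\widetilde{L}_\meshwidth$ and $L_\meshwidth$, while $\gamma_{\meshwidth,m,\widetilde{m}}$ uses the rational surrogates from Section~\ref{FEM}. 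Since $\gamma(\mv s,\mv t) = \rho(\mv s,\mv s)+\rho(\mv t,\mv t)-2\rho(\mv s,\mv t)$, an $L_2(\cD\times\cD)$ bound on variograms is controlled by the corresponding bound on covariance kernels, which in turn equals the Hilbert--Schmidt norm of the associated covariance-operator differences.

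For the first term, the target is $\|\cC - \cC_\meshwidth\|_{\mathrm{HS}(H)}$, with $\cC = \tau^{-2}(-\widetilde\Delta)^{-\beta}(\kappa^2-\widetilde\Delta)^{-\alpha}$ (Hilbert--Schmidt on $H$ by the condition $\alpha+\beta>d/2$, see Proposition~\ref{prop:uniqueness}), and $\cC_\meshwidth = \tau^{-2}\widetilde L_\meshwidth^{-\beta} L_\meshwidth^{-\alpha}$. This extends the single-operator analysis of \citet{BK2020rational} and \citet{xiong2022} to the product of two fractional operators. The ingredients are: (i)~Sobolev regularity of the Neumann-Laplacian eigenfunctions on the convex polytope $\cD$; (ii)~a standard Aubin--Nitsche duality argument on the mean-zero subspace~$H$, yielding order $\meshwidth^{2}$ convergence against smooth targets; and (iii)~a real-interpolation step along the Sobolev scale to obtain the exponent $r = \min\{2(\alpha+\beta)-d/2-\varepsilon,2\}$, with the arbitrarily small $\varepsilon$-loss coming from a limiting Sobolev embedding.

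For the second term, the key step is the factorization
\[
\widetilde L_\meshwidth^{-\beta}L_\meshwidth^{-\alpha}
= \widetilde L_\meshwidth^{-\lfloor\beta\rfloor}L_\meshwidth^{-\lfloor\alpha\rfloor}\, \widetilde L_\meshwidth^{-\{\beta\}}L_\meshwidth^{-\{\alpha\}},
\]
which isolates the two fractional factors. Replacing each by a best uniform rational approximant and invoking Stahl's theorem gives errors on the corresponding spectra of order $\exp(-2\pi\sqrt{\{\beta\}\widetilde m})$ and $\exp(-2\pi\sqrt{\{\alpha\} m})$. Two spectral facts are required: the smallest nonzero eigenvalue of $\widetilde L_\meshwidth$ is bounded away from zero uniformly in $\meshwidth$ (via a mesh-independent discrete Poincar\'e inequality on the mean-zero FEM subspace), and the largest eigenvalues are $O(\meshwidth^{-2})$ by the standard inverse inequality, so that the spectra lie in a fixed interval after a mild rescaling. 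Passing from the operator norm on the $N$-dimensional FEM space to the Hilbert--Schmidt norm introduces a factor $\sqrt{N}$, and with $N \asymp \meshwidth^{-d}$ this produces the $\meshwidth^{-d/2}$ prefactor in the theorem; the integer factors $\widetilde L_\meshwidth^{-\lfloor\beta\rfloor}L_\meshwidth^{-\lfloor\alpha\rfloor}$ contribute only bounded operator norms.

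The main obstacle will be step~(ii) in the first bound: transferring the rational-SPDE FEM theory to the intrinsic setting. The analyses in \citet{BK2020rational,xiong2022} rely on strict positivity of the base operator, whereas $-\widetilde\Delta$ is only positive \emph{on} $H$. This forces the establishment of a mesh-independent discrete Poincar\'e inequality on the mean-zero FEM subspace so that $\widetilde L_\meshwidth$ is uniformly coercive, and it also requires careful handling of the Galerkin projection onto the intersection of the FEM space with $H$. Once this is in place, the interaction between the two fractional factors in the interpolation step is the remaining delicate piece; the rest of the argument then follows the pattern of \citet{xiong2022}, with additional bookkeeping for the product structure.
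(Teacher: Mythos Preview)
Your proposal is correct and follows essentially the same route as the paper: reduce the variogram error to the covariance-kernel error in $L_2(\cD\times\cD)$ (equivalently the Hilbert--Schmidt norm of the covariance-operator difference), then split by the triangle inequality into a pure FEM error and a rational-approximation error, handling the latter via Stahl's theorem and the $\sqrt{N}\asymp\meshwidth^{-d/2}$ passage from operator to Hilbert--Schmidt norm. Your identification of the intrinsic-setting difficulty (coercivity of $\widetilde L_\meshwidth$ only on the mean-zero subspace and the interplay of the Galerkin projection with $H$) is exactly the issue the paper isolates via dedicated lemmata on the commuting projections $\widetilde\Pi_\meshwidth=\widetilde\Pi\Pi_\meshwidth$ and the norm equivalence $\dot H^\theta_{\widetilde L}\hookrightarrow\dot H^\theta_L$. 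One point to anticipate: the paper's FEM bound does not follow from a single Aubin--Nitsche plus interpolation step but requires a case split depending on whether $\beta<1/2+d/4$ or $\alpha>1/2$ (with several subcases), because the product structure forces different choices of intermediate Sobolev index when bounding terms like $\|\widetilde L^{\gamma/2}\widetilde L_\meshwidth^{-\beta}\widetilde\Pi_\meshwidth\|_{\cL_2(H)}$; your ``interaction between the two fractional factors'' remark is the right warning, but the actual bookkeeping is heavier than the outline suggests.
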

The proof is given in Appendix~\ref{app:fem}. The terms with $1_{\alpha \notin \mathbb{N}}$ and $1_{\beta \notin \mathbb{N}}$ in the bound are due to the rational approximations; they are not present if $\alpha$ and $\beta$ are integers. 
\begin{remark}\label{rem:mtm}
The  rate of convergence $\min\{2\alpha + 2\beta-d/2-\varepsilon,2\}$ is obtained by calibrating $m$ and $\widetilde{m}$ with $\meshwidth$, by choosing $m = g(\alpha)$ and $\widetilde{m} = g(\beta)$, where
\begin{align*}
g(\eta) &= \lceil(\min\{2\eta-d/2 - \varepsilon,2\} + d/2)^2 \frac{(\log{\meshwidth})^2}{4\pi^2\{\eta\}}\rceil. 
\end{align*}
\begin{figure}[t!]
\centering
\includegraphics[width=0.38\textwidth]{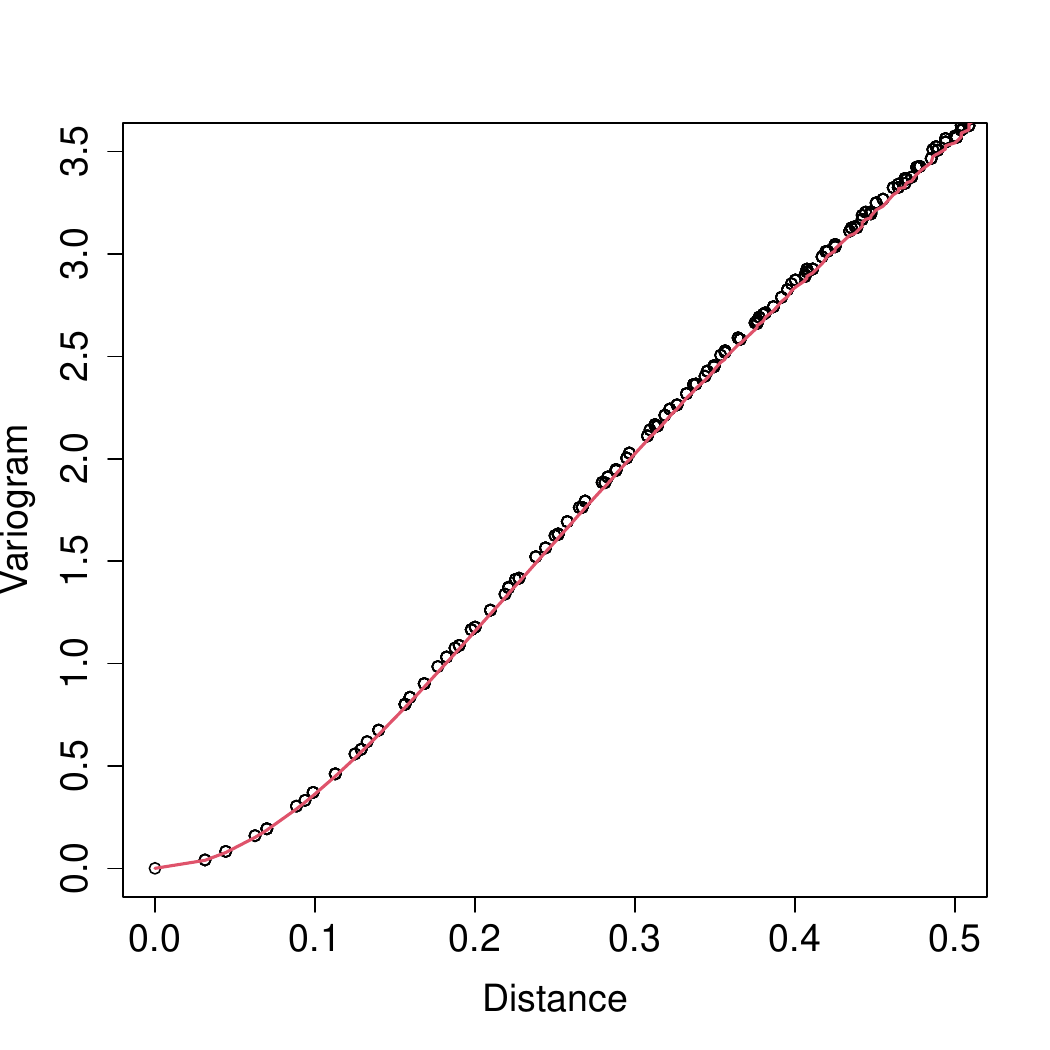}
\includegraphics[width=0.38\textwidth]{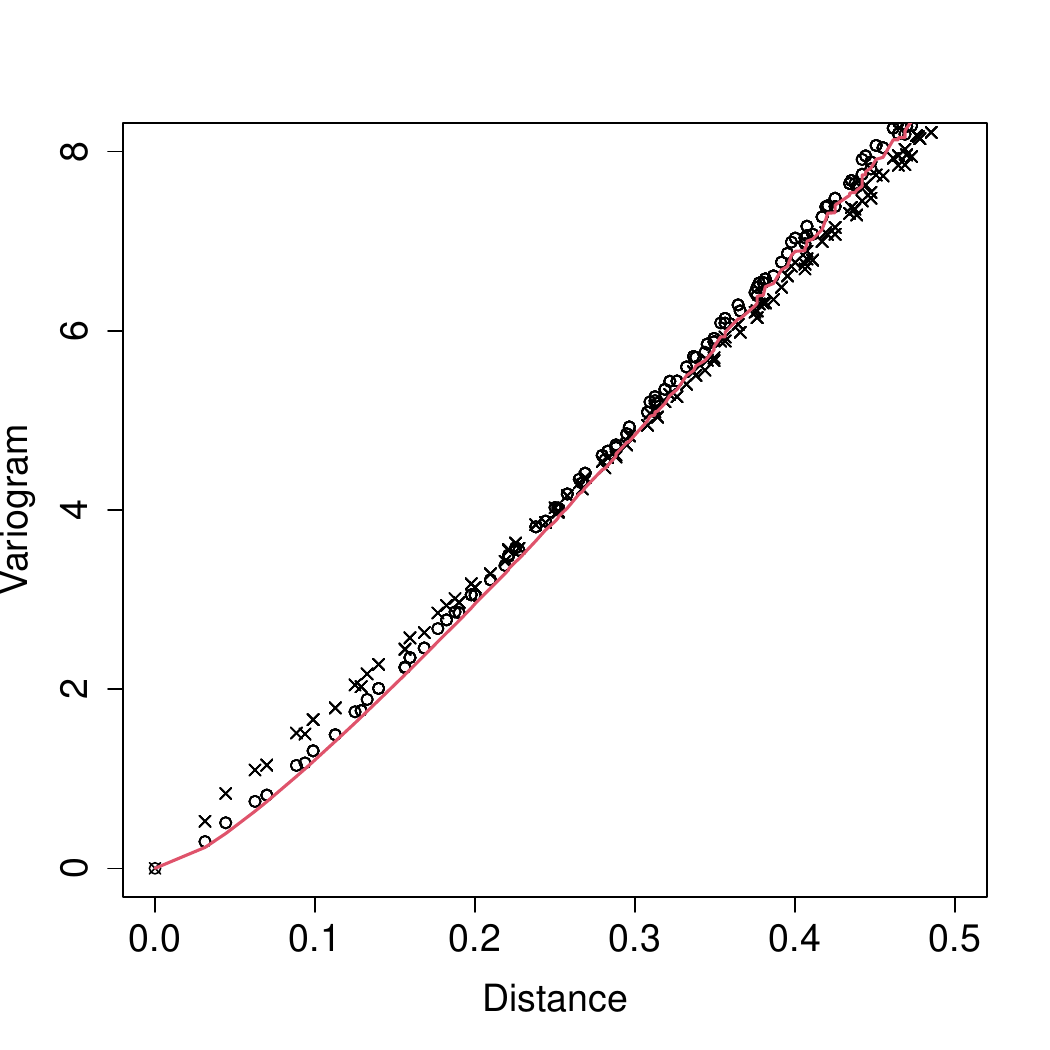}
\caption{\label{Fig:fem} FEM approximation (points) of an intrinsic \WM{} variogram with $\alpha=3$ and $\beta=1$ (left) and $\alpha=0.3$ and $\beta=1.5$ (right) when $d=2$ and $\kappa=15$. The true variogram is in solid red. The same mesh was used for the FEM approximation in both panels. The right panel displays a fractional approximation with $m=\tilde m=1$ (crosses) and $m=\tilde m=2$ (circles).}
\end{figure}
\end{remark}

\subsection{Likelihood estimation}\label{sec:LE}

To perform computationally efficient likelihood-based estimation, we need to be able to evaluate the likelihood of the model. We describe here how to evaluate the likelihood of the model based on noisy observations $\widetilde{\mathbf{w}} = (w_1,\ldots, w_k)^\top$ of the field $u$ at observation locations $\mv{s}_1, \dots, \mv{s}_k \in \mathcal{D}$ assuming that $w_i = u(\mv{s}_i) + \varepsilon_i$ with $\mv{\varepsilon} = (\varepsilon_1,\ldots, \varepsilon_k)^\top \sim N(\mv{0}, {Q}_\varepsilon^{-1})$ for some sparse precision matrix ${Q}_\varepsilon$. 
The methodology is similar to the rational SPDE approach for proper Gaussian fields \citep[][]{xiong2022} but is included for comparison with our methodology for extreme value models in Section~\ref{sec:extremes}. 

Using the FEM approximation of $u(\mv{s})$, we formulate the latent model in terms of the stochastic weights  
$\overline{\mv{W}} = (\mv{W}_1^\top, \ldots, \mv{W}_M^\top)^\top$,
which is an iGMRF of size $MN$ with precision matrix ${Q}_{\kappa,\alpha,\beta} = \text{blockdiag}({Q}_1, \ldots, {Q}_M)$. 
With the notation $\text{GMRF}(\boldsymbol{\mu}, Q)$ for a proper Gauss--Markov random field with mean $\boldsymbol{\mu}$ and precision $Q$, and $\text{iGMRF}(\boldsymbol{\mu}, Q)$ for an intrinsic analogue, we specify the hierarchical model as 
\begin{equation}\label{eqn:Hich}
\widetilde{\mathbf{W}}\mid \overline{\mathbf{W}} \sim \text{GMRF}(\overline{A} \overline{\mathbf{W}}, {Q}_\varepsilon), \qquad \overline{\mathbf{W}} \sim \text{iGMRF}(\mv{0}, {Q}_{\kappa,\alpha,\beta}),
\end{equation}
where 
$\overline{A} = (A, \ldots, A)$ is a concatenation of $M$ projection matrices with elements $A_{ij}=\varphi_j(\mv{s}_i)$, $i=1,\dots, k, j=1, \dots, N$, mapping the weights to the observation locations.  
 
A common choice for ${Q}_\varepsilon$ is $\frac{\sigma^2}{2} {I}$, where ${I}$ is the $k\times k$ identity matrix, which corresponds to adding a nugget effect to the variogram of the latent model
\begin{equation}\label{eqn:nugg}
\widetilde \Gamma_{ij} = \text{Var}(\widetilde{{W}_i}-\widetilde{{W}_j}) = \text{Var}\{(A\mv{W})_i-(A\mv{W})_j\}+\sigma^2, \quad i,j=1,\dots, k,  i\neq j.
\end{equation}
To compute the likelihood of $\widetilde{\mathbf{W}}$ first observe that for the general fractional case,
$\overline{\mathbf{W}} \mid \widetilde{\mv{W}}=\widetilde{\mv{w}} \sim \text{iGMRF}(\mv{\mu}_{\overline{\mv{w}}|\widetilde{\mv w}}, {Q}_{\overline{\mv{w}}|\widetilde{\mv w}})$,
where 
$\mv{\mu}_{\overline{\mv{w}}|\widetilde{\mv{w}}} = {Q}^{+}_{\mv{w}|\widetilde{\mv w}} A^\top {Q}_\varepsilon \widetilde{\mv{w}}$  and ${Q}_{\overline{\mv{w}}|\widetilde{\mv w}} = {Q}_{\kappa,\alpha,\beta} + A^\top {Q}_\varepsilon A$.
Here ${Q}^{+}_{\mv{w}|\widetilde{\mv w}}$ denotes the Moore--Penrose pseudoinverse of ${Q}_{\mv{w}|\widetilde{\mv w}}$ as the posterior precision is also rank deficient as it has $M-1$ zero eigenvalues. The only case where the posterior of $\overline{\mv{W}}$ is a proper GMRF is if $\alpha$ and $\beta$ both are integers and thus $M = 1$. By standard computations for latent GMRFs \citep[e.g.][]{xiong2022}, we obtain 
\begin{equation}\label{eqn:LCom}
\begin{aligned}
2 \log f_{\widetilde{\mv W}}(\widetilde{\mv{w}}) &= -(d-1) \log 2 \pi + \log | {Q}_{\kappa,\alpha,\beta}|^* + \log | {Q}_\varepsilon | - \log|{Q}_{\overline{\mv{w}}|\widetilde{\mv{w}}}|^{*}  \\
&\quad - \mv{\mu}^\top_{\mv w| \widetilde{\mv w}} {Q}_{\kappa,\alpha,\beta} \mv{\mu}_{\mv w| \widetilde{\mv w}} - (\widetilde{\mv{w}} - A \mv{\mu}_{\mv w| \widetilde{\mv w}})^\top {Q}_\varepsilon (\widetilde{\mv{w}} - A \mv{\mu}_{\mv{w}| \widetilde{\mv w}}),
\end{aligned}
\end{equation}
for the log-likelihood of the data.
The only non-standard terms in \eqref{eqn:LCom} are the generalized determinants and the calculation with the pseudoinverse in the posterior mean, which can be computed efficienty through sparse LDL factorization and back substitution and thus taking advantage of the sparsity of the matrices. 
Thus, all components in \eqref{eqn:LCom} can be computed solely based on sparse matrix operations, making likelihood evaluation feasible also for high-dimensional latent models.

\section{Kriging}\label{sec:kriging}

A common task in geostatistics is 
prediction of $u(\mv s_0)$ at an unobserved location $\mv s_0 \in \mathcal D$ based
the observed vector $\mv W= (u(\mv s_1),\dots, u(\mv s_k))$.
The most popular method to perform such predictions is
kriging due to its optimality properties \citep{cressie2015statistics}.
While proper, non-intrinsic fields yield kriging estimates 
that revert to the overall mean, in this section
we show that intrinsic models
such as our intrinsic \WM{} model allow for a much broader
behavior, particularly for extrapolation. 
We quantify this extrapolation behavior theoretically and
show that our new model outperforms proper fields
and the fractional model on a kidney disease data set.

\subsection{Optimal prediction and extrapolation behavior}
Let $\mv W_0 = (u(\mv s_0), u(\mv s_1), \dots, u(\mv s_k))$ be the joint vector including the unobserved location. Ordinary kriging computes the conditional mean $\hat{u}(s_0):={\rm E}(u(\mv s_0) \mid \mv W)$ and variance  $\hat\sigma^2(\mv s_0):={\rm Var}(u(\mv s_0) \mid \mv W)$, under the assumption that $\mv W_0 \sim \text{GMRF}(\mu \mv 1, Q)$, with $Q$ known and $\mu$ estimated from the observed values in $\mv W$. 
While $\hat u(\mv s_0)$ is often written using the variogram matrix of $\mv W_0$ (see \eqref{lam_krig}), we express it here using the precision matrix to highlight the difference between kriging with intrinsic and non-intrinsic fields. If $Q$ is a first-order intrinsic precision (i.e., $Q \equiv \Theta$ with $\Theta\mv1 =\mv 0$) then
\begin{equation}\label{krig2}
     \hat{u}(\mv s_0) = -\sum_{i=1}^k \frac{\Theta_{0i}}{\Theta_{00}} u(\mv s_i).
\end{equation}
Since $\mu \mv 1$ is in the null space of $\Theta$, $\hat u(\mv s_0)$ does not depend on $\mu$; hence \eqref{krig2} is the standard formula for the conditional expectation of a multivariate GMRF with mean~$\mv 0$. Since $\Theta \mv 1=\mv 0$ the weights $-\Theta_{0i}/\Theta_{00}$ sum to 1, so if $\Theta$ is sparse then $\hat u(\mv s_0)$ can be interpreted as a (weighted) local average. On the other hand, if $Q$ is a positive definite precision matrix, then \eqref{krig2} can again be applied, but with $\Theta=Q-Q\mv 1 \mv 1^\top Q/(\mv 1^\top Q \mv 1)$, which, as shown in Proposition \ref{thm:GR}, is the first order intrinsic precision matrix with the same variogram matrix as $\mv W_0$.
Note that, if $Q$ is sparse, then $\Theta$ is typically dense, so $\hat u(\mv s_0)$ can be interpreted as a (weighted) global average. Intuitively, this is because all observations in $\mv W$ contain information about $\mu$ and thus contribute to the computation of $\hat u(\mv s_0)$. We observe a similar phenomenon in Section \ref{sec:extkrig} on kriging for extremes.

\begin{figure}[t!]
\centering
\includegraphics[width=0.38\textwidth]{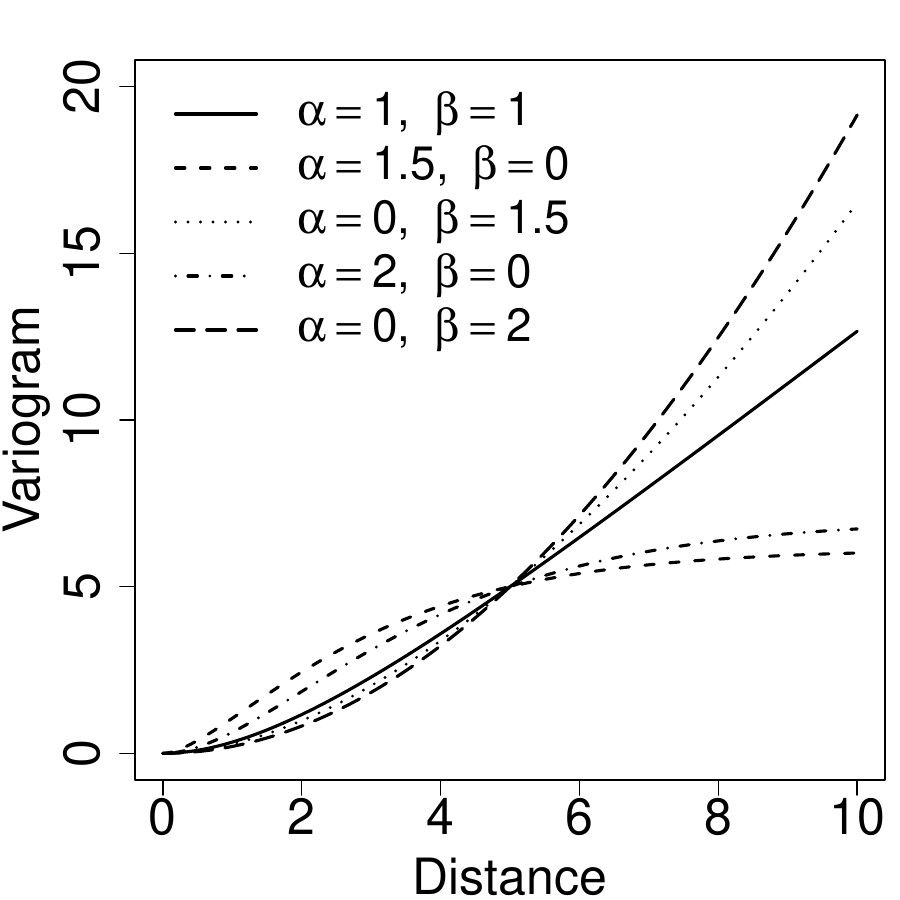}
\includegraphics[width=0.38\textwidth]{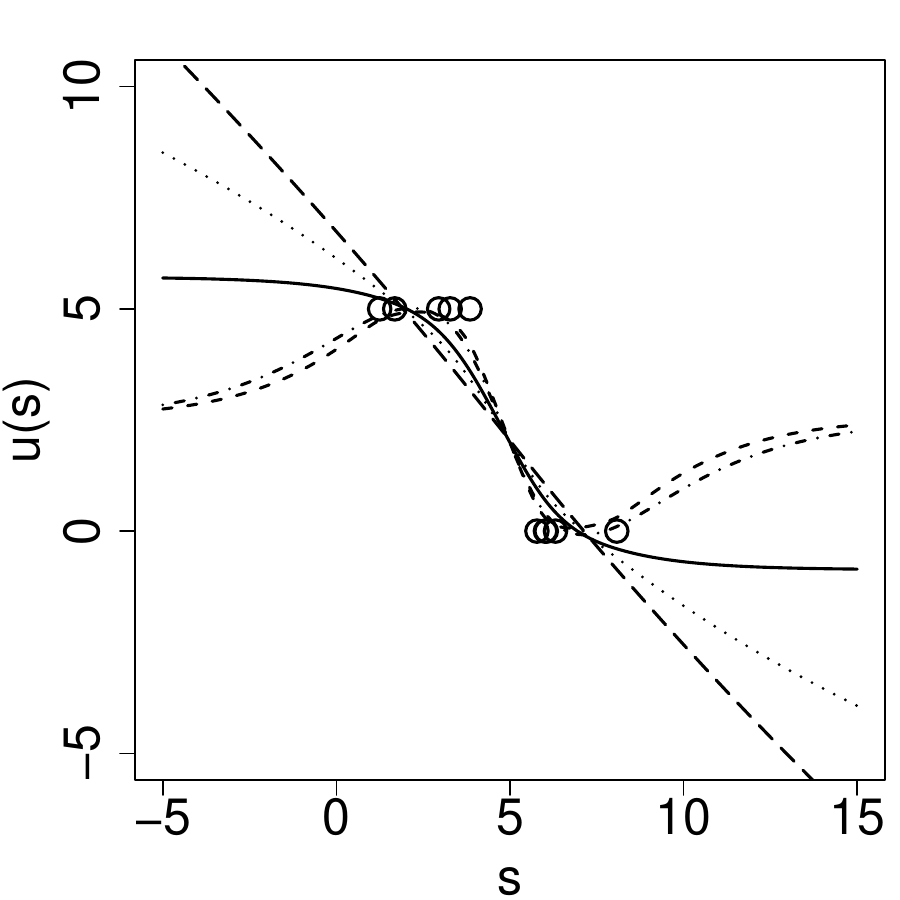}
\caption{\label{kriging_ex} Left: Variogram with $\kappa=0.5$ and $\tau$ such that $\gamma(3)=5$. Right: Kriging estimate $\hat{u}(s)$ with observations $Y_i=u(s)+\varepsilon_i$, where $\varepsilon_i \sim N(0,1)$.}
\end{figure}

The difference in $\hat u(\mv s_0)$ for different choices of $Q$ (i.e., different variograms) is often most pronounced when $\mv s_0$ lies outside the range of $\mv s_1, \dots \mv s_k$, i.e., when we extrapolate.
Figure~\ref{kriging_ex} displays variograms (left) and ordinary kriging estimates (right) for a toy example with $k=10$ observations. There are two proper Whittle--Mat\'ern fields ($\beta=0$) which have bounded variograms, an intrinsic Whittle--Mat\'ern field ($\beta=1$) which, by Proposition \ref{prop:LG}, has a  linearly growing variogram at large distances, and two fractional variograms that grow faster than linearly ($\beta=1.5,2$).
The next result shows how the growth of the variogram determines the extrapolation behavior of the kriging estimate. To state this result, let $\hat{\bar u} = \mv {1}^\top \Gamma^{-1} \mv{W}/(\mv 1^\top \Gamma^{-1} \mv 1)$, where $\Gamma=(\gamma( \lVert \mv s_i- \mv s_j \rVert))_{i,j=1,\dots,k}$. Lemma \ref{lem:ovmean} shows that $\hat{\bar u}$ is the conditional expectation of the average value of the field over an asymptotically large spherical region centered at the origin. In Figure \ref{kriging_ex}, $\hat{\bar u}\approx 2.5$ for each of the five variograms.

\begin{proposition}\label{prop:extrap}
Suppose that $\mv s_1, \dots, \mv s_k \in \mathbb{R}^d$, $\mv w = (u(\mv s_1), \dots, u(\mv{s}_k))$, $\mv v \in \mathbb{R}^d$ with $\lVert \mv v \rVert =1$, and $\ell(\cdot)$ is a slowly varying function. If $\gamma(h)=\ell(h)h^{b}$ with $0<b<2$, then
\[
\hat u(L \mv v) - \hat{\bar u}  \sim c(\mv s_1, \dots,\mv s_k, \mv w, \mv v) \ell{(L)}L^{b-1}, \quad \text{as } L \to \infty, 
\]
provided $c(\mv s_1, \dots,\mv s_k, \mv w, \mv v) \neq 0$; a formula for $c(\cdot)$ is in Appendix \ref{App:kriging}. Furthermore, if $\ell(h) \equiv \ell$ is constant or $b \leq 1$, then $\hat\sigma^2(L \mv v)={\rm Var}(\hat u(L \mv v)) \sim \ell(h) L^b$ as $L \to \infty$.
\end{proposition}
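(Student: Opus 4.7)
The plan is to reduce both claims to an asymptotic expansion of the cross-variogram vector $\bm\gamma_0(L):=(\gamma(\|L\mv v-\mv s_i\|))_{i=1}^k$ as $L\to\infty$ and to exploit the algebraic structure of the ordinary kriging system. Solving $\bm\lambda=\Gamma^{-1}(\bm\gamma_0-\mu\mv 1)$ with $\mu=(\mv 1^\top\Gamma^{-1}\bm\gamma_0-1)/S_0$ and $S_0=\mv 1^\top\Gamma^{-1}\mv 1$, and combining with $\hat{\bar u}=\mv 1^\top\Gamma^{-1}\mv w/S_0$, a short rearrangement yields
\begin{equation*}
\hat u(L\mv v)-\hat{\bar u}\;=\;\mv z^\top\bm\gamma_0(L),\qquad \mv z:=\Gamma^{-1}(\mv w-\hat{\bar u}\mv 1),
\end{equation*}
with the crucial orthogonality $\mv 1^\top\mv z=0$. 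This property is what will cancel the dominant constant-in-$i$ contribution to $\bm\gamma_0(L)$ and promote the first non-constant correction to the leading role.

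Next I would expand each component of $\bm\gamma_0(L)$. Writing $\|L\mv v-\mv s_i\|=L-\mv v^\top\mv s_i+O(L^{-1})$ and using $\gamma(h)=\ell(h)h^b$, the target expansion is
\begin{equation*}
\gamma(\|L\mv v-\mv s_i\|)\;=\;\ell(L)L^b\;-\;b\,\ell(L)L^{b-1}\,\mv v^\top\mv s_i\;+\;\delta_i(L),
\end{equation*}
valid by direct Taylor expansion when $\ell$ is constant, and by the uniform convergence theorem for regularly varying functions in general. Substituting and using $\mv z^\top\mv 1=0$ to annihilate the $\ell(L)L^b$ term produces
\begin{equation*}
\hat u(L\mv v)-\hat{\bar u}\;=\;-b\,\ell(L)L^{b-1}\,\mv z^\top\mv s_v\;+\;\mv z^\top\bm\delta(L),\qquad \mv s_v:=(\mv v^\top\mv s_i)_{i=1}^k.
\end{equation*}
The same orthogonality lets me replace $\bm\delta(L)$ by its increments $\delta_i(L)-\delta_j(L)$; Karamata's representation of $\ell$ then shows these are $o(\ell(L)L^{b-1})$, so $\mv z^\top\bm\delta(L)=o(\ell(L)L^{b-1})$ and the first claim follows with $c(\mv s_1,\ldots,\mv s_k,\mv w,\mv v)=-b\,\mv s_v^\top\Gamma^{-1}(\mv w-\hat{\bar u}\mv 1)$, valid whenever $c\neq 0$.

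For the variance, substituting $\Gamma\bm\lambda=\bm\gamma_0-\mu\mv 1$ into the error-variance formula $\hat\sigma^2(\mv s_0)=\bm\lambda^\top\bm\gamma_0-\tfrac12\bm\lambda^\top\Gamma\bm\lambda$ produces the identity $\hat\sigma^2(\mv s_0)=\tfrac12(\bm\lambda^\top\bm\gamma_0+\mu)$. Because $\bm\lambda^\top\mv 1=1$, the leading $\ell(L)L^b\mv 1$ part of $\bm\gamma_0(L)$ contributes $\ell(L)L^b$ to $\bm\lambda^\top\bm\gamma_0$; the same expansion gives $\mu=\ell(L)L^b+o(\ell(L)L^b)$, and the desired asymptotic $\hat\sigma^2(L\mv v)\sim\ell(L)L^b$ follows after collecting terms.

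The main obstacle, and precisely the reason for the extra hypothesis ``$\ell$ constant or $b\le 1$'' in the variance claim, is the sharp control of the quadratic remainder $\mv r(L)^\top(\Gamma^{-1}-\Gamma^{-1}\mv 1\mv 1^\top\Gamma^{-1}/S_0)\mv r(L)$ with $\mv r(L):=\bm\gamma_0(L)-\ell(L)L^b\mv 1$. When $\ell$ is constant, Taylor expansion bounds this by $O(L^{2b-2})$, which is $o(L^b)$ because $b<2$. When $b\le 1$, eventual Lipschitz-type behaviour of $\gamma$ combined with slow variation of $\ell$ still yields the bound needed. For $b\in(1,2)$ with a general slowly varying $\ell$, however, the uniform convergence theorem only delivers $\delta_i(L)=o(\ell(L)L^b)$ rather than $O(\ell(L)L^{b-1})$, so the quadratic remainder can match the leading order and the asymptotic equivalence need not hold.
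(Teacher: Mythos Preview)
Your approach is essentially the same as the paper's: both expand the cross-variogram vector $\bm\gamma_0(L)$ to two orders, substitute into the ordinary kriging formula, and use that the leading $\ell(L)L^b\mv 1$ contribution cancels; for the variance, both plug the same expansion into the kriging-variance identity and bound the quadratic remainder under the two stated regimes ($\ell$ constant, giving an $O(L^{2b-2})$ remainder, or $b\le 1$, giving an $o(L^{2b-1})$ remainder). Your packaging via $\mv z=\Gamma^{-1}(\mv w-\hat{\bar u}\mv 1)$ with $\mv 1^\top\mv z=0$ is a clean way to expose the cancellation, and your constant $c=-b\,\mv s_v^\top\mv z$ carries the $-b$ prefactor from the Taylor step that the paper's displayed expansion appears to drop: the paper writes the second-order term as $+\ell(L)L^{b-1}\mv s^{\|\mv v}$ and accordingly records $c=(\mv s^{\|\mv v})^\top(I-\mv 1\mv 1^\top\Gamma^{-1}/S_0)^\top\Gamma^{-1}\mv u$, which is exactly your $\mv s_v^\top\mv z$ without the factor $-b$.
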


When $u$ is a proper field, $\hat u(L \mv v)$ converges to the overall mean $\hat \mu = \hat{\bar u}$ as $L \to \infty$. By contrast, Proposition \ref{prop:extrap} shows that intrinsic fields exhibit a range of extrapolation behavior. 
Indeed,  when $\alpha=\beta=1$, Proposition \ref{prop:LG} suggests that the variogram grows linearly ($b=2 (\beta -d/2)=1$); hence by Proposition \ref{prop:extrap}, we expect $\hat u(L \mv v)$ to remain a constant distance from $\hat{\bar{u}} \approx 2.5$. Similarly, when $\beta=1.5,2$ we have $b>1$, so $\hat u(L \mv v)$ diverges from $\hat{\bar u}$, allowing the kriging estimate to capture an overall trend.

\subsection{Application: clinical kidney function data}\label{sec:application.krig}
Our theoretical results suggest that intrinsic models often provide meaningful predictions when extrapolating (as opposed to interpolating) beyond the observation domain. We illustrate this by applying our new SPDE model \eqref{SPDE} to a large clinical dataset of repeated kidney function measurements for patients from the city of Salford in the United Kingdom (UK), who are at high risk of renal failure. This data set was already extensively studied by \citet{diggle2015realtime} and \citet{asar2020linear}, and we here reanalyze it with a view toward exploring the prediction performance of intrinsic and non-intrinsic models in the context of a linear mixed effects for longitudinal data.

For computational reasons, we consider a subset of $300$ randomly selected patients from this database with at least $30$ observations in time, resulting in $15887$ observations in total. For these patients, the estimated glomerular filtration rate (eGFR), a proxy for renal function, is measured repeatedly at irregular follow-up times ranging from $0$ to $6.64$ years. For illustration, the data for two selected patients are plotted in Figure~\ref{fig:CVpreds}.

\begin{figure}[t!]
\centering
\includegraphics[width=0.9\textwidth]{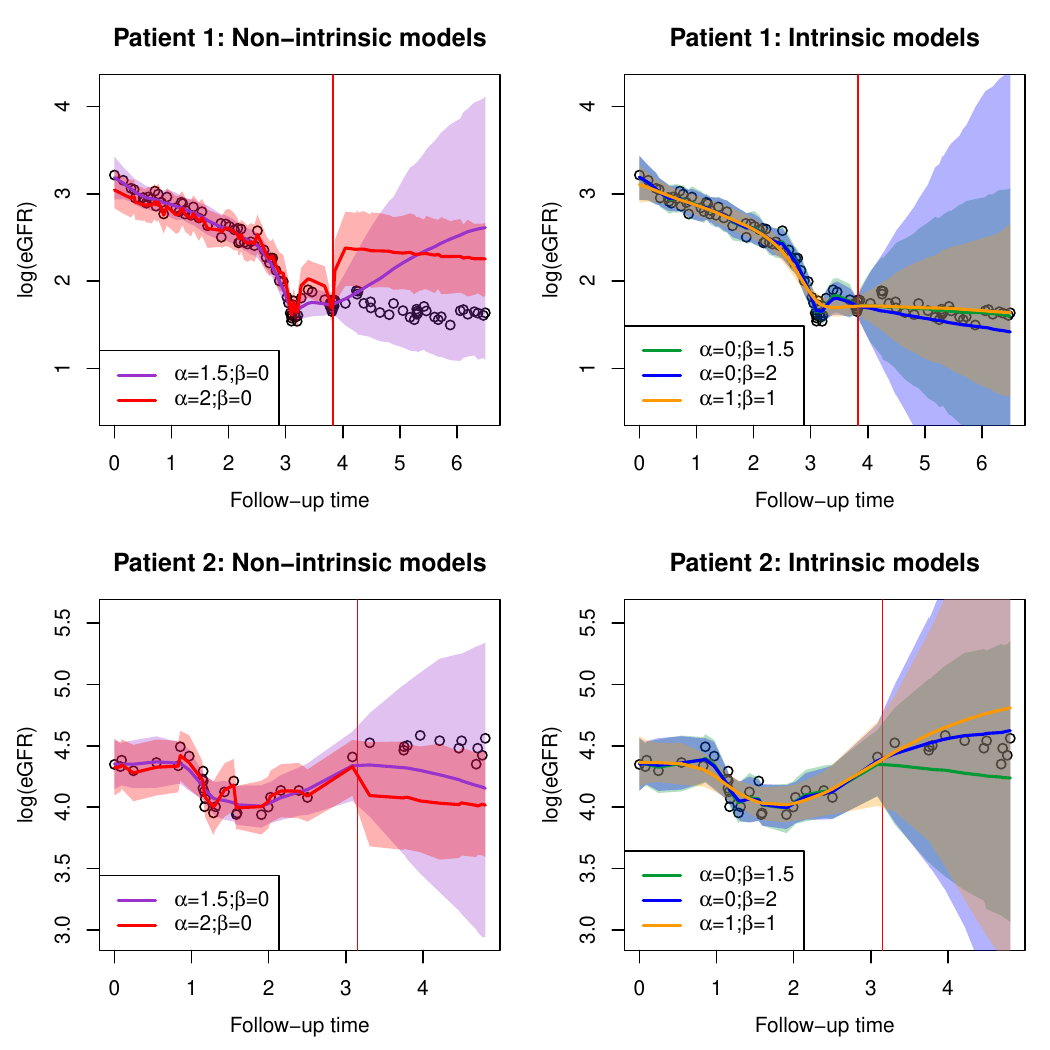}
\caption{\label{fig:CVpreds} Data (black dots) for two selected patients (top and bottom), and fitted/predicted values obtained by non-intrinsic (left) and intrinsic (right) models when holding out the patient's last $30\%$ of the data during training. Solid lines are posterior means and shaded areas are $90\%$ credible intervals. The vertical red lines separates data used for model fitting (left-hand side) and for validating predictions (right-hand side).}
\end{figure}

Following \citet{diggle2015realtime} and \citet{asar2020linear}, we assume that the logarithm of eGFR for patient $i$ and follow-up time $t_{ij}$, denoted by $Y_{ij}$, follows the model 
$Y_{ij}=\mv{x}_{ij}^\top\mv{\beta} + W_i(t_{ij})+Z_{ij}$, 
where $\mv{x}_{ij}$ is a vector of fixed covariates including sex, age at the start of the study, and a piecewise linear age correction, $\mv{\beta}$ is the corresponding vector of regression coefficients with a Gaussian prior, 
$W_i$ is a continuous-time Gaussian process capturing temporal trends and dependence for the $i$-th patient, and $Z_{ij}$ is Gaussian noise capturing measurement error. When $W_i$ is non-intrinsic, we specify its mean in terms of a patient-specific intercept $U_i$ modeled as a Gaussian random effect; this is not necessary when $W_i$ is first-order intrinsic as the model is invariant to the addition of constants, which removes one hyperparameter (the variance of $U_i$) to estimate. Conditionally on all model hyperparameters, which include the random effects' variances and dependence parameters (shared among all patients), the data $Y_{ij}$ are assumed to be independent across patients. For further details (e.g., prior and hyperparameter selection), see \citet{diggle2015realtime}. We focus on the $W_i$ process and the effect of its intrinsic/non-intrinsic specification on predictions extrapolating forward in time.

According to \citet{asar2020linear}, clinical care guidelines in the UK require patients whose kidney function decays faster than a relative rate of $5\%$ per year to be referred to specialist secondary care. To detect or predict such situations, one needs to estimate the slope of the estimated $W_i$ effect for each patient and (future) time point. Therefore, the random effect $W_i$ must be specified so that it has differentiable sample paths.  
To satisfy this clinically-motivated constraint, both \citet{diggle2015realtime} and \citet{asar2020linear} used an integrated random walk for $W_i$, but this model is limited in the smoothness and short- and long-range behaviors that it can capture. We instead focus on our new SPDE model~\eqref{SPDE}, and consider a range of submodels with differentiable sample paths. Precisely, we fit models with 
$(\alpha,\beta)=(0,1.5)$, $(1.5,0)$, $(0,2)$, $(2,0)$, and $(1,1)$ held fixed. 
The first two models have $\nu=\alpha+\beta-1/2=1$, thus reaching the lower boundary to get differentiable sample paths, while the last three models have slightly smoother sample paths with $\nu=1.5$. Furthermore, while the models with $\beta>0$ are intrinsic, the ones with $\beta=0$ are not. We do not add the integrated random walk model from \citet{asar2020linear} to our comparison because it is essentially the same (up to boundary conditions) as our model with $(\alpha,\beta) =(0,2)$. Estimating $\alpha$ and $\beta$ instead of fixing them as above did not lead to significant improvements in model fits and predictions under the differentiability constraint. To estimate the $300$ processes $W_i$ simultaneously we approximate the SPDEs with a common mesh discretizing the time interval $[0,6.64]$ into $500$ equispaced nodes, ensuring sufficient accuracy for each patient.

We perform an extensive cross-validation study to assess the ability of each model to predict observations forward in time. For each patient, we split the observations into 10 time intervals containing approximately the same number of observations, and for each $k=1,\ldots,9$, we fit the models using all data (from all patients) but holding out data from the intervals ${\{k+1,\ldots,10\}}$ for that particular patient, and we then predict data in the next ${(k+1)}$th interval for the same patient.  Figure~\ref{fig:CVpreds} shows predictions from all models for two patients, using the first $k=7$ time intervals for model fitting. The results shown for these two selected patients are symptomatic of the general behavior: while predictions using non-intrinsic models tend to have lower uncertainty, they are also often strongly biased. On the other hand, predictions using intrinsic models are often more accurate, but the prediction uncertainty increases quickly as $\beta$ increases. Therefore, for this particular dataset, intrinsic models with $\beta\approx1$, which have a meaningful extrapolation behavior (recall Figure~\ref{kriging_ex}) and display a reasonable bias-variance trade-off, often perform better. To validate this, we assess the models' extrapolation capability using three well-known proper scoring rules, namely the log-score, the continuous ranked probability score, and a scaled version of the latter (SCRPS), averaged over all predicted intervals for all patients (i.e., $9\times300$ folds in total); see \citet{gneiting2007strictly} and \citet{bolin2023local} for details on these proper scoring rules. The results are reported in Table~\ref{tab:results.application.kidney}. Interestingly, irrespective of the prediction metric, the new intrinsic model with $\beta=1$ indeed performs better than the other ones, in particular the non-intrinsic ones with $\beta=0$ and the intrinsic model with $\beta=2$, whose extrapolations either revert back to the global mean or diverge quickly, respectively.  
Overall, our new SPDE model with $\alpha=\beta=1$ is thus the best model here, and while these results depend on the precise behavior of the data at hand, our results provide support for using it in other domains of application.

\begin{table}[t!]
\caption{Out-of-sample metrics to evaluate the extrapolation ability of different intrinsic or non-intrinsic submodels of the SPDE~\eqref{SPDE}. Chosen metrics (log-score, CRPS, and SCRPS) are proper scoring rules. Scores are negatively oriented and the best models are highlighted in bold for each metric.\label{tab:results.application.kidney}}
\centering
\begin{tabular}{r|r|cccc}
Model & Type & Log-Score & CRPS & SCRPS \\ \hline
$\alpha=0,\beta=1.5$ & Intrinsic & -1.136 & 0.162 & 0.461 \\
$\alpha=1.5,\beta=0$ & Non-intrinsic & -1.111 & 0.167 & 0.468 \\
$\alpha=0,\beta=2$ & Intrinsic & -0.952 & 0.179 & 0.505 \\
$\alpha=2,\beta=0$ & Non-intrinsic & -1.152 & 0.172 &  0.471\\
$\alpha=1,\beta=1$ & Intrinsic & \textbf{-1.225} & \textbf{0.162} &  \textbf{0.444} 
\end{tabular}
\end{table}

\section{Sparse spatial extremes}\label{sec:extremes}

\subsection{Background}\label{sec:BR1}

There are two main approaches in statistics of extremes for modeling large values of a spatial process $X = \{X(\mv s): \mv s\in \mathcal D\}$: that based on block-maxima and that based on peaks-over-threshold \citep[e.g.,][]{davison2015statistics,huser2022advances}. We recall background on these two approaches, before developing sparse models for spatial extremes in Section~\ref{sec:WM.BR.processes}. To concentrate on models for the extremal dependence structure, we assume that the univariate marginal distributions of $X = \{X(\mv s) :\mv s\in\mathcal D\}$ has been standardized to a unit exponential distribution. 

\vspace{5pt}

\noindent\textbf{Block-maxima.} The block-maxima approach studies the limiting distribution of pointwise maxima of $n$ independent copies $X_1,\dots ,X_n$ of $X$.
In practice, the processes $X_i$, $i=1,\ldots,n$, may represent daily observations of a spatial variable, e.g., temperature, in which case taking $n=365$ may be interpreted as studying the process of annual temperature maxima at each point in space. 
When the $X_i$'s have unit exponential margins, the stochastic process $\eta = \{\eta(\mv s): \mv s\in \mathcal D\}$ is called max-stable if the convergence
\begin{align}\label{def_eta} \eta(\mv s) = \lim_{n\to\infty} \max_{i=1}^n \{X_i(\mv s) - \log(n)\} , \quad \mv s \in\mathcal D, 
\end{align}
holds in the sense of finite-dimensional distributions. The process $X$ is then in the max-domain of attraction of $\eta$, which has standard Gumbel margins and admits the stochastic representation 
\begin{align}\label{rep_eta}
    \eta(\mv s) = \max_{i=1}^\infty \{U_i + V_i(\mv s)\}, \quad \mv s\in \mathcal D,
\end{align}
where $\{U_i : i\in \mathbb N\}$ are the points of an independent Poisson point process on $\mathbb{R}$ with intensity $\exp(-y) {\rm d}y$, and the spectral functions $V_i$ are independent copies of a stochastic process $V = \{V(\mv s) : \mv s\in\mathcal D\}$ \citep{deh1984}. The max-stable field $\eta$ is characterized by the joint distribution of all increments $\{V(\mv s) -V(\mv t)\}$ of $V$. 
The process $V$ is therefore not unique (but must satisfy $\mathbb E[\exp\{V(\mv s)\}] = 1$ for any $\mv s\in\mathcal D$) and is naturally modeled by an intrinsic field. 

\vspace{5pt}

\noindent\textbf{Peaks-over-threshold.} The peaks-over-threshold approach studies individual extreme events that are defined with respect to a risk functional $r: \mathcal C(\mathcal D) \to \mathbb R$ as defined in \cite{dombry2015functional}, which satisfies $r(f + u) = u + r(f)$
for any $u\in\mathbb R$ and $f$ in the space $C(\mathcal D)$ of continuous functions. 
An $r$-Pareto process is defined as the limit in the sense of finite-dimensional distributions of the conditional exceedances
\begin{align}\label{def_pareto}
Y^{(r)}(\mv s) = \lim_{u \to \infty} \left( X(\mv s) - u  \mid r(X) > u \right), \quad \mv s\in \mathcal D.
\end{align}
Popular risk functionals include
    $r(f) = f(\mv{s}_0)$, and $r(f)= \log \int_{\mathcal D} \exp\{f(\mv s)\} \mathrm d \mv s$, 
which correspond to extremes at one location $\mv{s}_0 \in \mathcal D$ and extremes of the integrated process when the marginals are on the unit Pareto scale, respectively; if the domain $\mathcal D$ is a finite set, the integral is to be understood as a sum. 
The distribution of $Y^{(r)}$ is determined by the spectral function $V$.  
 Existences of the limits in~\eqref{def_eta} and~\eqref{def_pareto} are equivalent \citep{thibaud2015efficient} and the limiting processes are then called associated.

\vspace{5pt}

\noindent\textbf{Brown--Resnick processes.}
The most popular spatial extreme model is the class of Brown--Resnick processes since they are the only possible limits under suitable rescaling of pointwise maxima of Gaussian processes \citep{Kab09}.
\begin{definition}\label{def:BR}   
    The max-stable and $r$-Pareto Brown--Resnick process with variogram $\gamma$ is characterized by the spectral function 
    \begin{align}\label{spec_Gauss}
        V(\mv s) = W(\mv s) - \sigma^2(\mv s)/2, \quad \mv s\in \mathcal D,
    \end{align} 
    where $W = \{W(\mv s): \mv s\in\mathcal D\}$ is a zero-mean Gaussian processes with variance function $\sigma^2(\mv s)$ and variogram function $\gamma(\mv s,\mv t)$. 
\end{definition}
The distribution of Brown--Resnick processes only depends on the variogram $\gamma$ and not the covariance function of $W$ \citep{Kab09}. 
If the Gausssian process $W$ in~\eqref{spec_Gauss} is intrinsically stationary and isotropic, that is, $\gamma(\mv s,\mv t) = \gamma(\|\mv s-\mv t\|)$ only depends on the distance between $\mv s,\mv t\in \mathcal D$, then the corresponding max-stable Brown--Resnick process $\eta$ is stationary and isotropic. 
A popular summary for the strength of extremal dependence
is the extremal correlation coefficient \citep{col1999}, which for a stationary and isotropic Brown--Resnick processes takes the form 
\begin{align}\label{chi_coeff}
    \chi(h) = \lim_{x\to\infty}{\rm Pr}\{\eta(\mv t)>x\mid \eta(\mv s)>x\}=2 - 2\Phi(\sqrt{\gamma(h)}/2) \in [0,1],
\end{align} 
where $h = \|\mv s- \mv t\|$ and $\Phi$ is the standard normal distribution function. 

The finite-dimensional distributions of the \BR{} process at locations
${\mv s_1,\dots, \mv s_k \in \mathcal D}$  is called a \citet{hueslerReiss1989} distribution, characterized by the variogram matrix $\Gamma_{ij} = \gamma(\mv s_i,\mv s_j)$, $i,j=1,\dots ,k.$
Alternatively, the model can be parameterized by the $k \times k$, rank $k-1$, symmetric positive positive semi-definite {\HR{} precision matrix} $\Theta$ satisfying $\Theta \mathbf{1} = \mathbf{0}$ \citep{hentschel2024statistical}.
The density of the \HR{} model is characterized via its exponent measure density
\begin{equation}\label{eqn:ExpDN}
\lambda(\mathbf{y}) = \exp(-y_m) k^{-1/2} f(\mathbf{y} + \Gamma_{\cdot, m} / 2 ; \Theta),
\end{equation}
where $f(\cdot; \Theta)$ is the density of an intrinsic GMRF with precision $\Theta$ as in Definition~\ref{def:FOI} and $m\in\{1,\dots, k\}$ is an arbitrary index. Here we use Lemma \ref{lem:Eqv} to write $\lambda(\mv y)$ in terms of an intrinsic $k$-dimensional Gaussian density rather than a proper $(k-1)$-dimensional Gaussian density; see \citet[][Appendix C.1]{rottger2023total} for details on the exponent measure. 
For a risk functional $r:\mathbb R^k \to \mathbb R$, defined
on functions on the finite set $\{\mv s_1,\dots, \mv s_k\}$, the density of the $r$-Pareto \HR{} distribution is 
\begin{equation}\label{eqn:frd}
f^{(r)}(\mv{y} ; \Theta) = \frac{1}{C_{r,\Theta}}\lambda(\mathbf{y}; \Theta), \quad \mathbf y \in \mathcal L_r,
\end{equation}
where $\mathcal L_r = \{\mathbf{x}: r(\mathbf{x})>0\}$, and $C_{r,\Theta} = \int_{\mathcal L_r} \lambda(\mathbf{x}) \rm{d}\mathbf{x}$ is the normalizing constant.

Sparsity in $\Theta$ corresponds to a sparse extremal conditional independence structure according to the notion introduced by \cite{Eng20} (and denoted with the symbol $\indep_e$) 
for the $r$-Pareto distribution $\mathbf{Y}^{(r)}$, that is    
\begin{equation}\label{eqn:HRCI}
Y_i \indep_e Y_j \mid \mathbf{Y}_{\setminus \{i,j\} } \quad \Leftrightarrow \quad \Theta_{ij}=0.
\end{equation}
Here we drop the superscript since this property is independent of the choice of risk functional $r$. Similarly to Gaussian graphical models, an extremal graphical model on a graph $\mathcal{G}=(\mathcal{V},\mathcal{E})$ is defined through the pairwise Markov property, i.e., it satisfies extremal conditional independence~\eqref{eqn:HRCI} for any $\{i,j\}\notin \mathcal E$. 
The exponent measure density $\lambda$ and the density of the $r$-Pareto distribution $\mathbf{Y}^{(r)}$ of an extremal graphical model factorize on $\mathcal G$ into lower-dimensional terms, which enables efficient inference \citep{engelke2021sparse,engelke2024infinite}.

\subsection{\WM{} \BR{} processes}
\label{sec:WM.BR.processes}

Current models for spatial extremes are mostly based on Brown--Resnick processes with suitable parametric families for the variogram $\gamma$.
In the sequel we focus on stationary and isotropic models with $\gamma(\mv s,\mv t) = \gamma(\|\mv s-\mv t\|)$, $\mv s,\mv t\in \mathcal D$, and discuss possible extensions in Section~\ref{sec:discussion}.
To address limitations of existing parametric models, we first state properties that are desirable for a model for spatial extreme events.  
\begin{itemize}
    \item[(i)] The model should allow for the full range of the extremal correlation function $\chi(h)$ in~\eqref{chi_coeff},
and in particular for large distances the margins should become independent, i.e., $\chi(h)\to 0$, or equivalently, $\gamma(h) \to\infty$  as $h\to\infty$. 
    \item[(ii)] The model should be flexible with different parameter values resulting in different local and global dependence behaviors, and natural extensions to address practical challenges such as non-stationarity or non-Euclidean domains (e.g., spheres).
    \item[(iii)] Statistical inference for the model class should be efficient even for large datasets measured at many spatial locations. 
\end{itemize}

The most popular model in spatial extremes applications is the class of fractional variograms in Example~\ref{ex:fractional}; see Figure~\ref{Fig:extremal_correlation} (left panel) for two examples of the corresponding extremal correlation functions. While they do satisfy (i), they fail to satisfy (ii) because their local and global behavior is governed by the same parameter $\beta$. They also do not satisfy (iii) for likelihood estimation as they have dense \HR{} precision matrices and current methods cannot exploit sparse computations. 

\begin{figure}[t!]
    \centering
	\includegraphics[width=0.45\textwidth]{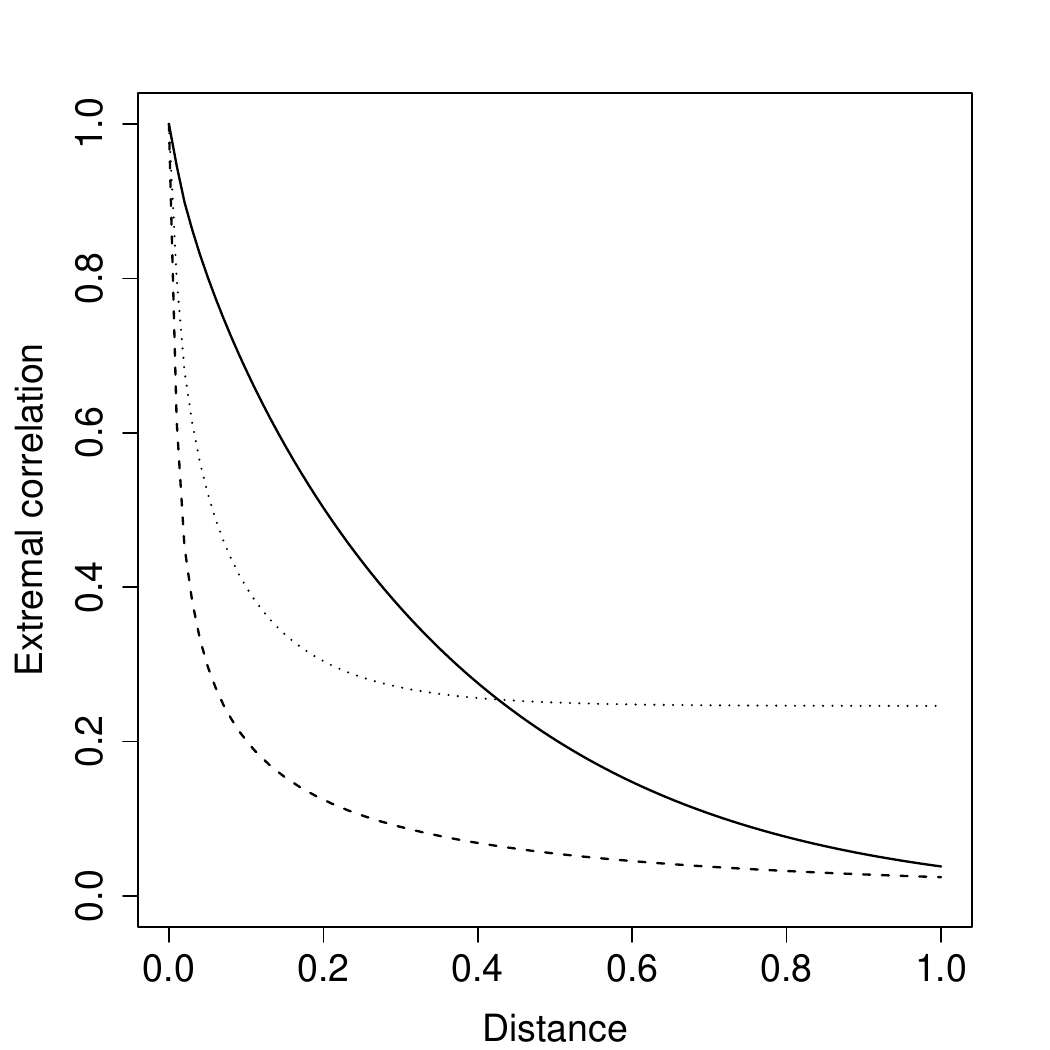}\hspace{1cm}
	\includegraphics[width=0.45\textwidth]{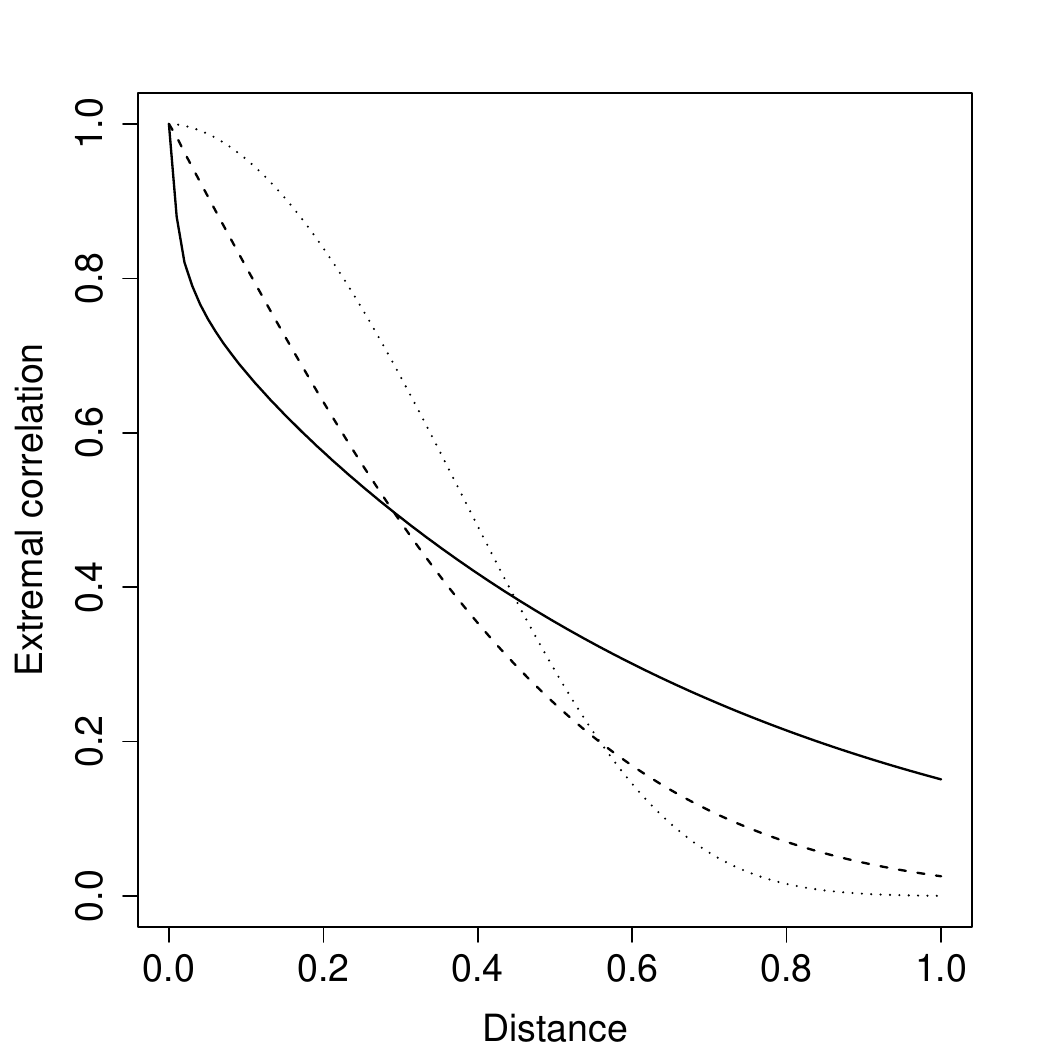}
	\caption{\label{Fig:extremal_correlation} Extremal correlation, $\chi(h)$, as a function of distance, $h$, of the \BR{} processes when $d=2$. Left: Classical fractional model ($\alpha=0$) with $\beta=1.4$ (solid) and $\beta=1.2$ (dashed), and the proper \WM{} model ($\beta=0$) with $\alpha=1.4$ (dotted line). Right: Our new \WM{} \BR{} model with $\alpha=-0.5$, $\beta=1.7$ (solid), $\alpha=\beta=1$ (dashed), and $\alpha=1.5$, $\beta=1.4$ (dotted).}
\end{figure}

Alternatively, a seemingly natural idea to overcome limitations (ii) and (iii) is to use a stationary proper \WM{} field as Gaussian process $W(\mv s)$ with covariance function $C(h)$ in the \BR{} construction \eqref{spec_Gauss}; see, e.g., \cite{hea2024} who use such a model. While this provides flexibility, there are strong limitations concerning~(i): since the variogram $\gamma(h) = 2\{C(0) - C(h)\}$ of a stationary field is bounded, there is a lower bound on the extremal correlation given by 
$\chi(h) \geq  2 - 2\Phi(\sqrt{C(0)/2})$, $h\geq 0$, 
which induces long-range dependence characteristic of non-ergodic fields. 
This is illustrated by the dotted line in the left panel of Figure~\ref{Fig:extremal_correlation}.
Moreover, it turns out that if a sparse approximation (e.g., through finite elements) is used for the stationary Gaussian field, then the corresponding finite-dimensional \HR{} marginals of the \BR{} process is dense in the sense of extremal graphical models; see Proposition~\ref{thm:GR} in the appendix for details.

We now introduce the \WM{} \BR{} process that combines both model classes and solves the issues above in a natural and elegant way.

\begin{definition}
\WM{} \BR{} processes are the class of \BR{} processes with parameters $\beta\in [0,2)$, $\alpha, \tau, \kappa \geq 0$  and spectral function
\begin{align}\label{spectral_WM}
    V(\mv s) = u(\mv s) - u(\mv s_0) - \gamma(\|\mv s - \mv s_0\|)/2, \quad \mv s\in\mathcal D,
\end{align}
where $u$ is the intrinsic \WM{} field
with variogram $\gamma$
from Definition~\ref{def_WM}, and $\mv s_0\in\mathcal D$ is an arbitrary location.
\end{definition}

Note that the field $u(\mv s) - u(\mv s_0)$ is a 
proper Gaussian field with variance function $\sigma^2(\mv s) = \gamma(\|\mv s - \mv s_0\|)$, explaining the link with~\eqref{spec_Gauss}.
The classes of fractional and proper \WM{} Brown--Resnick processes correspond to the special cases with $\alpha = 0$ and $\beta =0$, respectively.
Our new model generalizes these classes by decoupling the local and global behavior of the variogram: the long range behavior is determined by $\beta$, the short range behavior is determined by $\alpha+\beta$, and the transition between the two scales is determined by $\kappa$; recall Proposition \ref{prop:LG}.
The right panel of Figure~\ref{Fig:extremal_correlation} shows the 
new flexibility of this model class in terms of extremal correlation functions
for three parameter combinations. 
The model therefore addresses the desired properties (i) and (ii) above; natural ways to adapt the process to non-stationary and non-Euclidean settings will be discussed in Section~\ref{sec:discussion}.

Our \WM{} \BR{} prcocess is the first spatial extremes model
that allows for sparse computations
and enables inference even for
datasets with many spatial locations, thus satisfying point (iii) above. 
Indeed, relying on the FEM approximation developed in Section~\ref{FEM}, we show that its finite-dimensional \HR{} 
margins are well approximated by
sparse extremal graphical models on a suitable
triangulation $(\mathcal T_\meshwidth)_{\meshwidth\in(0,1)}$ be a triangulation with diameter $\meshwidth$ and $N=N_\meshwidth$ vertices $\mv v_1,\dots, \mv v_N$; efficient estimation is discussed in the next section. Let 
\begin{align}\label{fem_V}
V_\meshwidth(\mv s) = \widehat u_\meshwidth(\mv s) - \widehat u_\meshwidth(\mv s_0) - \gamma_\meshwidth(\|\mv s-\mv s_0\|)/2,    
\end{align}
be the finite-element spectral function, for some $\mv s_0\in\mathcal D$, 
where $\widehat u_\meshwidth$ is the FEM approximation of the intrinsic field $u$ in~\eqref{eqn:FEMp} and $\gamma_\meshwidth$ its variogram.

\begin{theorem}\label{thm:Par_int_link}
    If $m$ and $\widetilde m$ are chosen as in Remark \ref{rem:mtm}, then the max-stable and $r$-Pareto \WM{} \BR{} processes $\eta_\meshwidth$ and $Y^{(r)}_\meshwidth$, respectively, with spectral function~\eqref{fem_V} converge to the corresponding \WM{} \BR{} process with the same parameters, i.e.,
    $\eta_\meshwidth \Rightarrow \eta$ and $Y^{(r)}_\meshwidth \Rightarrow Y^{(r)}$, as $\meshwidth\to 0$, where the convergence is in terms of finite-dimensional distributions.
\end{theorem}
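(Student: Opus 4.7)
The plan is to exploit that Brown--Resnick processes are fully determined by their variograms, so that convergence of finite-dimensional distributions reduces to convergence of variograms at arbitrary finite collections of locations, together with continuity of the Hüsler--Reiss law in the variogram matrix. Fix locations $\mv s_1,\dots,\mv s_k\in\cD$. By Definition~\ref{def:BR} and the subsequent discussion, the joint law of $(\eta(\mv s_1),\dots,\eta(\mv s_k))$ (resp.\ $(\eta_\meshwidth(\mv s_1),\dots,\eta_\meshwidth(\mv s_k))$) is the HR distribution with variogram matrix $\Gamma_{ij}=\gamma(\mv s_i,\mv s_j)$ (resp.\ $\Gamma_{\meshwidth,ij}=\gamma_\meshwidth(\mv s_i,\mv s_j)$). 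Hence it suffices to establish entrywise convergence $\Gamma_\meshwidth\to\Gamma$ and then invoke continuity of the HR density in~\eqref{eqn:ExpDN} with respect to its variogram (equivalently, precision) parameter.

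For the variogram convergence, I would first upgrade the $L_2(\cD\times\cD)$ estimate of Theorem~\ref{cov_fem_approx_rate_general} to pointwise convergence at fixed pairs. The key observation is that under $\alpha+\beta>d/2$ the operator $(-\widetilde\Delta)^{-\beta}(\kappa^2-\widetilde\Delta)^{-\alpha}$ has eigenvalues decaying fast enough that the solution $u$ to~\eqref{eq:fractional_intrinsic} has sample paths in $C(\overline{\cD})$ by a Sobolev embedding argument, and an analogous statement holds for $\widehat u_\meshwidth$ under the quasi-uniform mesh assumption. Revisiting the proof of Theorem~\ref{cov_fem_approx_rate_general} with the $L_2(\cD\times\cD)$-norm replaced by the sup-norm on compact subsets (using inverse estimates to convert $L_2$ errors of finite-element functions into $L_\infty$ errors, at the cost of the $\meshwidth^{-d/2}$ factor that is already present) yields $|\gamma(\mv s,\mv t)-\gamma_\meshwidth(\mv s,\mv t)|\to 0$ uniformly in $(\mv s,\mv t)\in\cD\times\cD$ for the choice of $m,\widetilde m$ in Remark~\ref{rem:mtm}. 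In particular $\Gamma_\meshwidth\to\Gamma$ entrywise, and the corresponding HR precision matrices $\Theta_\meshwidth$ converge to $\Theta$.

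With entrywise convergence of the variogram matrices in hand, the HR density in~\eqref{eqn:ExpDN} converges pointwise because it is a continuous function of $(\Theta,\Gamma_{\cdot,m})$ built from a Gaussian density. A standard dominated-convergence argument (using that the HR density decays exponentially in the extreme coordinate) promotes pointwise density convergence to convergence of the finite-dimensional laws, giving $\eta_\meshwidth\Rightarrow\eta$ in the sense of finite-dimensional distributions. For the $r$-Pareto version, the density~\eqref{eqn:frd} equals $\lambda(\mv y;\Theta)/C_{r,\Theta}$, where for the risk functionals considered here (point evaluation or the log-integral) the normalizing constant $C_{r,\Theta}$ is finite, strictly positive, and continuous in $\Theta$ by dominated convergence. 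Combining continuity of numerator and denominator, $Y^{(r)}_\meshwidth\Rightarrow Y^{(r)}$ follows.

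The main obstacle will be the second paragraph: turning the $L_2(\cD\times\cD)$ rate of Theorem~\ref{cov_fem_approx_rate_general} into a pointwise (or uniform) variogram estimate. This requires sharpening the FEM analysis—combining elliptic regularity for the fractional operator, a Sobolev embedding that uses the slack $\alpha+\beta>d/2$, and quasi-uniformity of the mesh—rather than a purely soft argument; once this is in place the passage through the HR density and the extension to $r$-Pareto processes are essentially continuity statements.
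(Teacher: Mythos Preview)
Your approach is essentially the paper's: the paper's proof is a one-liner invoking Theorem~\ref{cov_fem_approx_rate_general} for variogram convergence together with continuity of the max-stable and $r$-Pareto \HR{} distributions in $\Gamma$. You go further in flagging and handling the passage from $L_2(\cD\times\cD)$ to pointwise variogram convergence at fixed locations, a step the paper leaves implicit.
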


Theorem \ref{thm:Par_int_link} follows from Theorem \ref{cov_fem_approx_rate_general} and the fact that both the max-stable and $r$-Pareto \HR{} distributions are continuous in $\Gamma$. Equation \eqref{eqn:HRCI} implies that the \HR{} distribution inherits is extremal conditional independence structure from the conditional independence structure of its corresponding intrinsic GMRF. For integer $\beta \geq 1$ and $\alpha$, the conditional independence structure of $\widehat u_\delta$ evaluated at the mesh nodes $\mv v_1, \dots, \mv v_N$ is sparse and thus the extremal conditional independence structure of the $r$-Pareto process $Y_\delta^{(r)}$ is also sparse. For example, if $\alpha=\beta=1$ then $Y^{(r)}_\delta$ evaluated at the vertices $\mv v_1, \dots, \mv v_N$ follows a \HR{} distribution that is an extremal graphical model on the graph corresponding to the triangulation $\mathcal T_\meshwidth$.

\subsection{Estimation}\label{sec:LE1}

Suppose we have $n$ observations $\mathbf y_1, \dots, \mathbf y_n \in \mathbb R^k$ of the $r$-Pareto process $Y^{(r)} = \{Y^{(r)}(\mv s): \mv s\in \mathcal D\}$ at locations $\mv s_1,\dots, \mv s_k \in \mathcal D\subseteq \mathbb R^d$. We describe how to use sparse computations to approximate the likelihood function and estimate the parameters of the \WM{} \BR{} model with parameters $\beta\in [0,2)$, $\alpha, \tau, \kappa \geq 0$.

The log-likelihood of the observations $\mathbf y_1, \dots, \mathbf y_n$ of $Y^{(r)}$ may be approximated by the log-likelihood of the finite-element Pareto process $Y^{(r)}_\meshwidth$ given by
\begin{align*}
        L(\tau,\kappa, \alpha, \beta, \sigma^2) &= - n\log C_{r,\widetilde \Theta} +  \sum_{i=1}^n \log \lambda(\mathbf y_i; \widetilde{\Theta})
\end{align*}
where $\widetilde \Theta$ is the precision matrix of the intrinsic GMRF $\widetilde{\mathbf W}$ in~\eqref{eqn:Hich}. In general the normalizing constant $C_{r,\widetilde \Theta}$ from~\eqref{eqn:frd} depends on the model parameters. In the sequel, we consider the two popular risk functions mentioned after~\eqref{def_pareto}, since then the normalizing constant is independent of the model parameters \citep{Eng15}. 
Together with the representation~\eqref{eqn:ExpDN} of the exponent measure density, this yields 
\begin{align}\label{sparse_lik}
        L(\tau,\kappa, \alpha, \beta, \sigma^2) &\equiv \sum_{i=1}^n  \log f_{\widetilde{\mathbf W}}(\mathbf y_i + \widetilde{\Gamma}_{\cdot, j_0} /2),
\end{align}
where $j_0\in\{1,\dots,k\}$ is an arbitrary index, $f_{\widetilde{\mathbf W}}$ is the density  
of the observations of the intrinsic \WM{} field, and $\widetilde \Gamma$ is the variogram in~\eqref{eqn:nugg}; here $\equiv$ means up to an additive constant. Maximizing the \HR{} likelihood is thus equivalent to maximizing the likelihood of the intrinsic GMRF $\widetilde{\mathbf W}$, except for $\widetilde{\Gamma}_{\cdot, j_0}$, the $j_0$-th column of the variogram matrix, which depends on model parameters.
The term $\widetilde{\Gamma}_{\cdot, j_0}$ can be computed efficiently using $\widetilde{\Gamma}_{-j_0, j_0}=\text{diag}\{(\Theta^{(j_0)})^{-1}\}+\sigma^2\mv 1$, where $\Theta^{(j_0)}$ is $\Theta$ with the $j_0$-th row and column deleted and $\sigma^2$ is the nugget. We can then efficiently evaluate \eqref{sparse_lik} using \eqref{eqn:LCom}. In practice, the term $\sigma^2\mv 1$ can be dropped because $f_{\mv{\widetilde w}}(\mv{\widetilde{w}})=f_{\mv{\widetilde{w}}}(\mv w+c\mv1)$ for all $c \in \mathbb{R}$ since it is a first-order intrinsic density. 
The computational savings
for simulation and estimation methods generally mirror the $O(k^{3/2})$ versus $O(k^3)$ computational costs for sparse versus dense proper GMRFs.

\subsection{Conditional simulation and extremal kriging}\label{sec:extkrig}

Given a set of $k'$ unobserved locations $\mv u_1,\dots, \mv u_{k'}\in \mathcal D$, conditional simulation of the $r$-Pareto process can be performed efficiently for the \WM{} \BR{} model.
We let the risk functional 
$r: \mathbb R^{k+k'} \to \mathbb R$
only depend on the variables
in $\mathbb R^k$ at the observed locations 
$\mv s_1,\dots, \mv s_k$, that is, $r(\mv y, \mv y') = \tilde r(\mv y)$
for some risk $\tilde r: \mathbb R^{k} \to \mathbb R$.
For any $\mv y\in\mathbb R^k$ such that $\tilde r(\mv y) > 0$, the conditional density is then 
\begin{equation}\label{cond_dens}
f^{(r)}(\mv{y}' \mid \mv y ; \Theta) = \frac{\lambda(\mv y, \mv y'; \Theta)}{\lambda(\mv y; \tilde \Theta)}, \quad \mathbf y' \in \mathbb R^{k'},
\end{equation}
where $\Theta$ is the joint precision matrix of all 
locations implied by the model, and the precision matrix
of the marginal model corresponding only to observed locations 
can be written as the Schur complement $\tilde \Theta = \Theta_{OO} - \Theta_{OU}\Theta_{UU}^{-1} \Theta_{UO}$ \citep{eng_taeb}; here $O$ and $U$ are the index sets
of observed and unobserved locations, respectively.

\begin{proposition}\label{prop:HR_cond_distr}
    The conditional density $f^{(r)}(\mv{y}' \mid \mv y ; \Theta)$ in~\eqref{cond_dens}
    is the density of a proper multivariate normal distribution with mean and precision matrix given by 
    \begin{align*}
    \mu_{U\mid O} = - \Theta_{U,U}^{-1} \Theta_{U,O}\mathbf y_{O} - \Theta_{U,U}^{-1} \mathbf v_U,\quad  
    \Theta_{U\mid O} &= \Theta_{U,U},
\end{align*}
where the vector $\mv v = \Theta \diag(\Theta^+)/2 + \mv 1/d$ is called the resistance
curvature \citep{DEVRIENDT202468}; here $\Theta^+$ denotes the Moore--Penrose pseudoinverse of $\Theta$. If the set $U$ consists of a single point indexed by 0, then the formula simplifies to
    \begin{align}\label{eqn:ext_krig_form}
    \mu_{\{0\}\mid O} = -\sum_{i\in O}\frac{\Theta_{0,i}}{\Theta_{0,0}}y_i  - \frac{v_0}{\Theta_{0,0}},\quad  
    \Theta_{\{0\}\mid O} &= \Theta_{0,0},
\end{align}
\end{proposition}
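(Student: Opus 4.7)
The plan is to reduce the conditional $r$-Pareto density~\eqref{cond_dens} to a conditional intrinsic Gaussian density via the representation~\eqref{eqn:ExpDN}, apply the standard block-conditioning formula for a Gaussian with precision matrix $\Theta$, and then recognize the resulting shift term as the resistance curvature $\mv v$.

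First I fix an arbitrary index $m\in O$ and apply~\eqref{eqn:ExpDN} to the numerator of~\eqref{cond_dens}. Since the denominator $\lambda(\mv y;\tilde\Theta)$ does not depend on $\mv y'$, the conditional density satisfies
\begin{equation*}
f^{(r)}(\mv y'\mid \mv y;\Theta) \propto_{\mv y'} f\bigl((\mv y,\mv y') + \Gamma_{\cdot,m}/2;\Theta\bigr),
\end{equation*}
where $f(\cdot;\Theta)$ is the intrinsic Gaussian density with precision $\Theta$ of Definition~\ref{def:FOI}. Setting $\mv z_O = \mv y + \Gamma_{O,m}/2$ and $\mv z_U = \mv y' + \Gamma_{U,m}/2$, the right-hand side is the conditional density of $\mv z_U$ given $\mv z_O$ in the intrinsic Gaussian. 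A direct completion of the square in the exponent $-\tfrac{1}{2}\mv z^\top \Theta\mv z$, carried out as a function of $\mv z_U$ only, yields a proper normal density with precision $\Theta_{UU}$ and mean $-\Theta_{UU}^{-1}\Theta_{UO}\mv z_O$. Substituting back $\mv y' = \mv z_U - \Gamma_{U,m}/2$ shows that $\mv y'\mid\mv y$ is Gaussian with precision $\Theta_{U\mid O} = \Theta_{UU}$ and mean
\begin{equation*}
\mu_{U\mid O} = -\Theta_{UU}^{-1}\Theta_{UO}\mv y_O - \Theta_{UU}^{-1}\bigl(\Theta\,\Gamma_{\cdot,m}/2\bigr)_U .
\end{equation*}
In particular the precision is already in the claimed form, so the remaining task is to identify $(\Theta\,\Gamma_{\cdot,m}/2)_U$ with $\mv v_U$.

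For this identification I would use the classical variogram–pseudoinverse identity $\Gamma = \diag(\Theta^+)\mv 1^\top + \mv 1\diag(\Theta^+)^\top - 2\Theta^+$, which holds for any $\Theta\mv 1=\mv 0$ (this is just the variogram of the centered Gaussian with covariance $\Theta^+$; cf.\ Appendix~\ref{App:Intrinsic_intro}). Combining it with $\Theta\mv 1=\mv 0$ and $\Theta\Theta^+ = I - \mv 1\mv 1^\top/n$, where $n=k+k'$ is the total number of variables, gives
\begin{equation*}
\Theta\,\Gamma_{\cdot,m} = \Theta\,\diag(\Theta^+) - 2\,e_m + 2\mv 1/n .
\end{equation*}
Because $m\in O$, the entries $(e_m)_i$ vanish for $i\in U$, so on the block $U$ the above equals $\Theta\diag(\Theta^+) + 2\mv 1/n$ evaluated on $U$. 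Dividing by two yields $(\Theta\,\Gamma_{\cdot,m}/2)_U = \mv v_U$ with $\mv v = \Theta\diag(\Theta^+)/2 + \mv 1/n$, which is the resistance curvature in the sense of \citet{DEVRIENDT202468}. Substituting back produces the announced expression for $\mu_{U\mid O}$. The scalar reduction~\eqref{eqn:ext_krig_form} in the single-point case $U=\{0\}$ is then immediate.

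The main obstacle is the identity $(\Theta\,\Gamma_{\cdot,m})_U = (\Theta\,\diag(\Theta^+) + 2\mv 1/n)_U$: it must hold independently of the choice of $m\in O$, which is exactly what the cancellation of the $e_m$ term on the $U$-block guarantees, and it is the reason why the conditional mean admits a representation through an object ($\mv v$) that does not depend on the arbitrary reference index $m$ used in~\eqref{eqn:ExpDN}. The Gaussian algebra is routine; the conceptual content is in recognizing the resistance curvature as the natural geometric object absorbing the $\Gamma$-shift.
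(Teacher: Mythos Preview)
Your proposal is correct and follows the same overall strategy as the paper: reduce the conditional $r$-Pareto density to a Gaussian conditional via~\eqref{eqn:ExpDN}, read off precision $\Theta_{UU}$ and a mean involving $(\Theta\Gamma_{\cdot,m}/2)_U$, and then identify this shift with the resistance curvature $\mv v_U$.

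The execution differs in two places, and in both your route is the more direct one. First, the paper invokes Lemma~\ref{lem:Eqv} to pass from the intrinsic density $f(\cdot;\Theta)$ to a proper $(k{+}k'{-}1)$-dimensional Gaussian and then applies covariance-based conditioning with $\Sigma^{(m)}$, converting back to precision form through block inversion; you bypass this detour by completing the square in $\mv z_U$ directly in the exponent $-\tfrac12\mv z^\top\Theta\mv z$, which immediately produces the precision $\Theta_{UU}$ and mean $-\Theta_{UU}^{-1}\Theta_{UO}\mv z_O$. Second, the paper cites the Fiedler--Bapat identity $\Theta\Gamma=-2I+2\mv v\mv 1^\top$ from \citet{devriendt2022a} to identify $(\Theta\Gamma)_{U,m}$ with $2\mv v_U$, whereas you derive the same thing from the explicit variogram formula $\Gamma=\diag(\Theta^+)\mv 1^\top+\mv 1\diag(\Theta^+)^\top-2\Theta^+$ combined with $\Theta\mv 1=\mv 0$ and $\Theta\Theta^+=I-\mv 1\mv 1^\top/n$; your computation is in fact a proof of the Fiedler--Bapat identity. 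Your observation that the $e_m$ term vanishes on the $U$-block precisely because $m\in O$ is exactly the mechanism that makes $\mu_{U\mid O}$ independent of the arbitrary reference index, and it is the same point the paper makes in its last sentence.
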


This result allows for efficient conditional simulation since the
matrix $\Theta_{U,U}$ is sparse. A similar formula has been obtained in
\cite{dom2012}, which has however a more involved form and does not 
use the \HR{} precision matrix. 
Our formula uses the Fiedler--Bapat identity \citep[Corollary 3.7]{devriendt2022a}
to make the density independent of a conditioning variable $m$ as used in~\eqref{eqn:ExpDN}.

Our formula yields \eqref{eqn:ext_krig_form} as a special case, which is a simple expression for extremal kriging for the $r$-Pareto process. 
The conditional mean $\mu_{\{0\}\mid O}$ above is equivalent to classical kriging formula in~\eqref{krig2} up to the $-v_0/\Theta_{0,0}$ term. Moreover, \eqref{eqn:ext_krig_form} provides insight into the definition of extremal conditional independence introduced in \eqref{eqn:HRCI}: at an unobserved location $0$, extremal conditional independence reduces to (non-extreme) conditional independence. This is because \eqref{eqn:ext_krig_form} implies that the conditional distribution of the field at location $0$ depends only on the observed locations $i$ with $\Theta_{0,i} \neq 0$.
Figure~\ref{kriging_ex_extremes} illustrates extremal kriging for two constellations of observations using the five different
models also considered in Figure~\ref{kriging_ex}, with and without a nugget.
The intrinsic models with $\beta >0$ have a downward trend 
when moving away from the observations. This is highly desirable in spatial
extremes applications, since on the original data scale
locations farther away then do not inherit the extremeness of the observed locations. 
This is a natural assumption if there is no strong long-range dependence.

\begin{figure}[t!]
\centering
\includegraphics[width=0.33\textwidth]{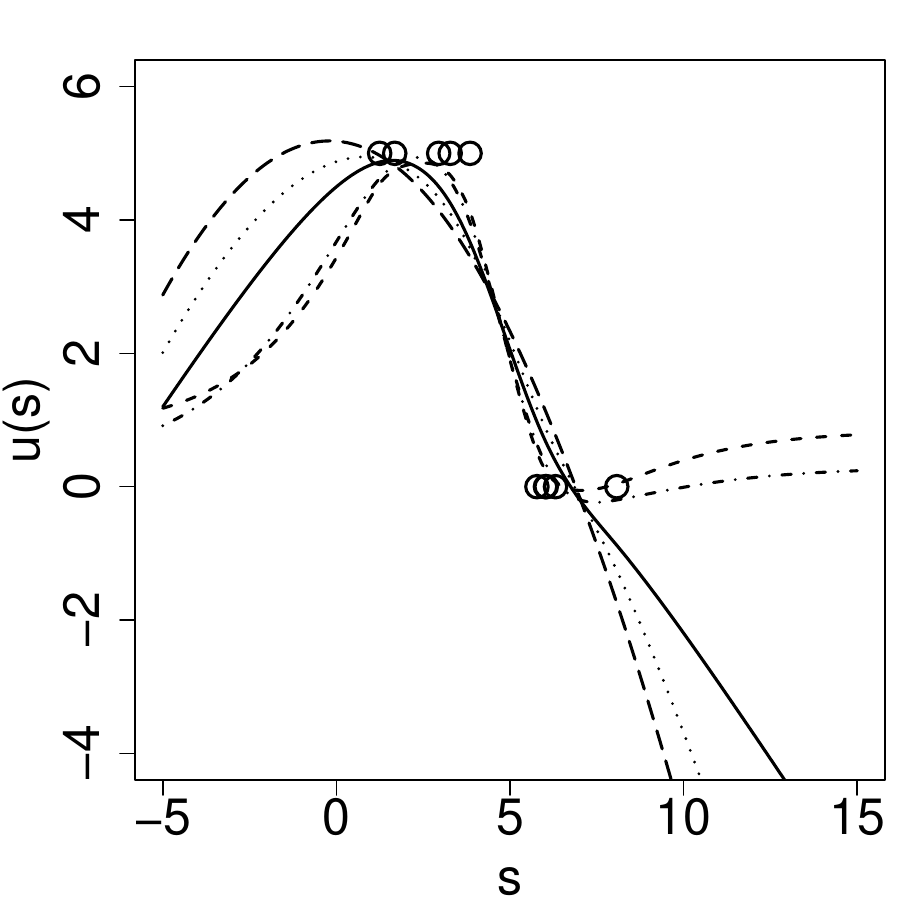}
\includegraphics[width=0.33\textwidth]{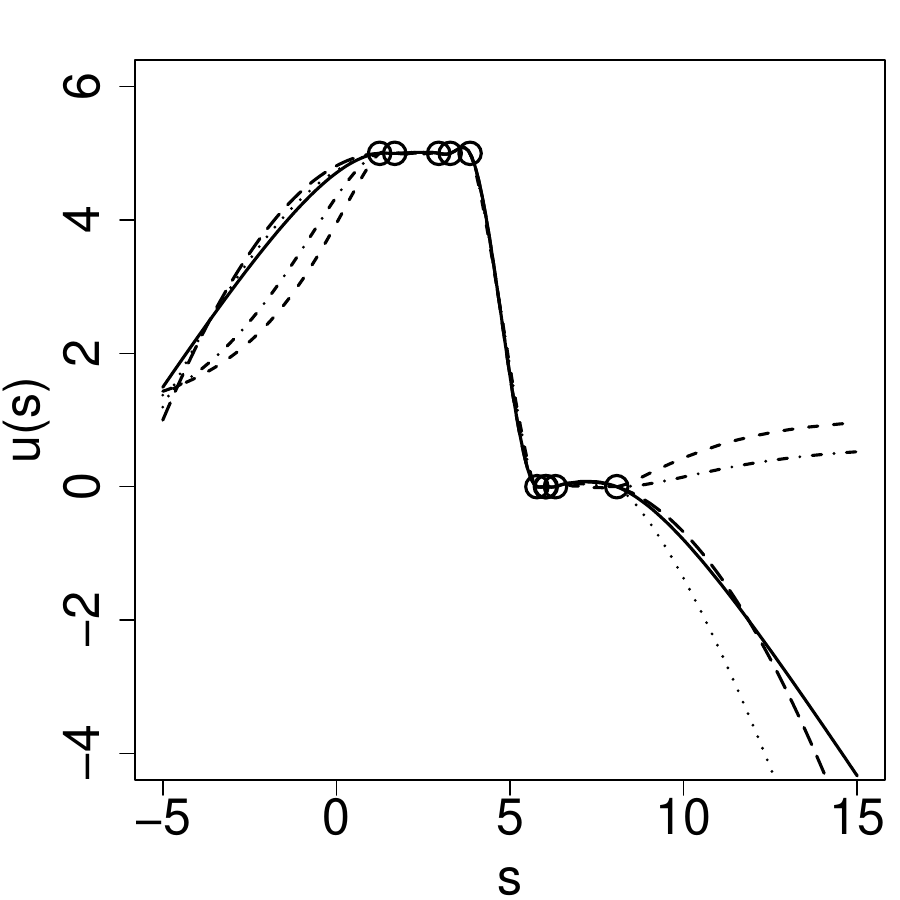}
\includegraphics[width=0.33\textwidth]{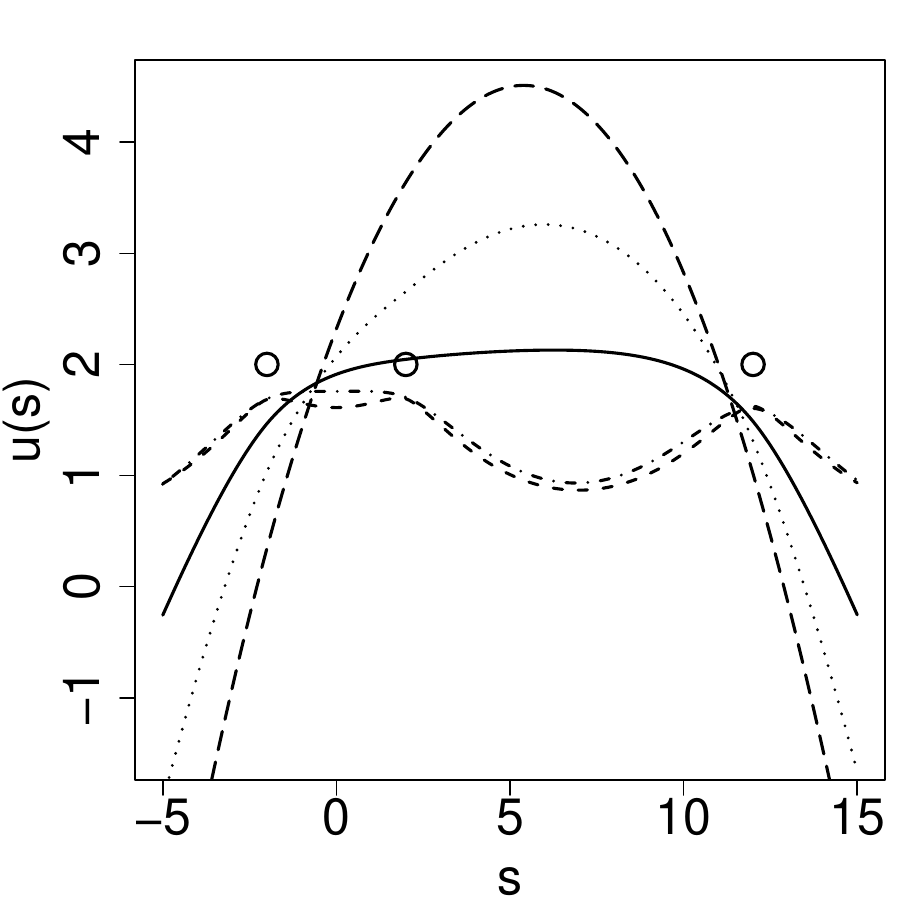}
\includegraphics[width=0.33\textwidth]{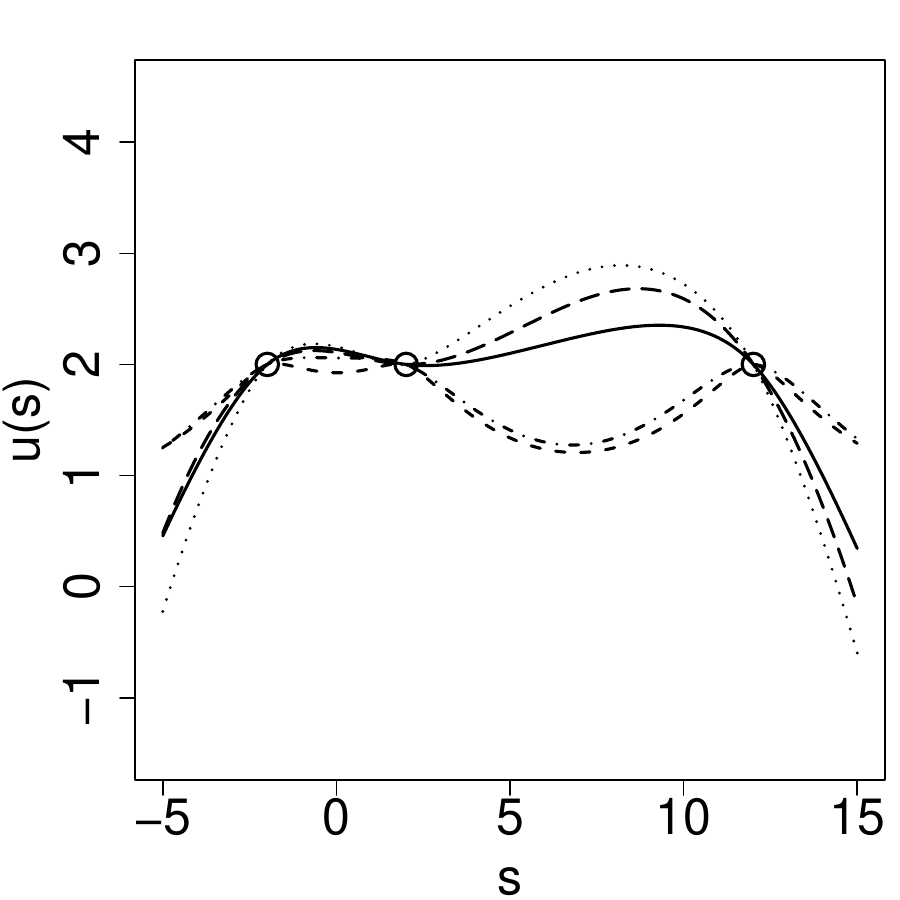}
\caption{\label{kriging_ex_extremes} Extremal kriging estimates computed using the varigoram displayed in Figure~\ref{kriging_ex} with a nugget of size 1 (left) and without a nugget (right).}
\end{figure}

\subsection{Application: Marine heat waves}\label{sec:A1}

We apply our methodology to study the spatial extent of marine heat waves over the Southwest Pacific Ocean, off the eastern coast of Australia; see Figure~\ref{fig:austrlia}. Marine heat waves are prolonged periods of high sea surface temperature (SST) anomalies and they have become increasingly frequent and intense due to climate change \citep{Hughes2017, Holbrook2019}. The study region includes the Coral Sea, which contains the Great Barrier Reef, the world's largest coral reef system. The latter is particularly vulnerable to marine heat waves and has experienced several events in the last century, which have led to mass coral bleaching, reduced biodiversity, and long-term reef degradation \citep{Oliver2018}.
The spatial pattern of coral bleaching has been directly linked to 
the spatial distribution of SST anomalies during marine heat waves; see \citet[][Figure 1]{Hughes2017} for examples of the 1998, 2002 and 2016 marine heat waves. It is thus crucial to build statistical tools to better model and predict these large-scale spatial extreme events \citep[e.g.,][]
{hazra2021estimating}.
Our \WM{} \BR{} model is well suited for this purpose.

\begin{figure}[t!]
	\includegraphics[width=0.32\textwidth]{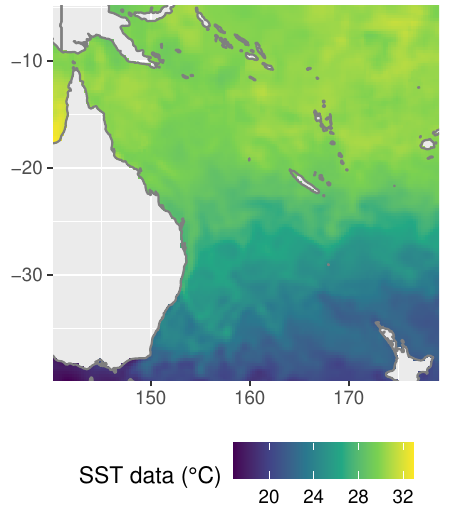}
    \includegraphics[width=0.32\textwidth]{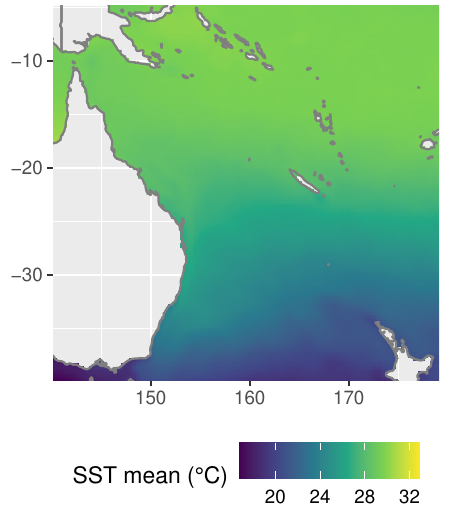}
    \includegraphics[width=0.32\textwidth]{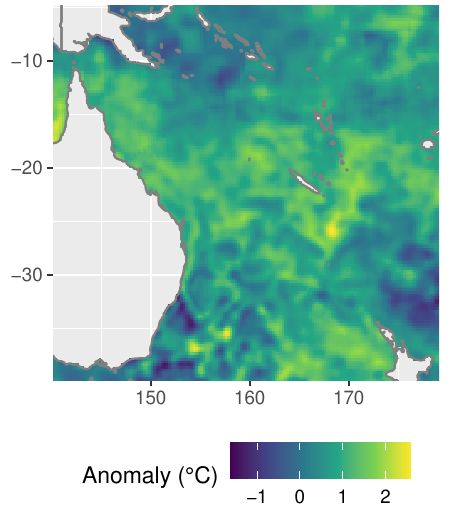}
	\caption{Map of the study region (east of Australia) and plot of sea surface temperature data for March 5, 2015, the day with the largest spatial SST average over the domain. Left: original SST data; Middle: estimated mean; Right: SST anomalies.}
	\label{fig:austrlia}
\end{figure}

We analyze data from the Optimum Interpolation Sea Surface Temperature (OISST) global product\footnote{Data available from \url{https://www.ncei.noaa.gov/products/optimum-interpolation-sst}}, which incorporates observations from different sources (satellites, ships, buoys and Argo floats) into a high-resolution regular grid over the World Ocean. Specifically, we consider daily SST data from January 1, 1982, to December 31, 2019, available on a latitude-longitude grid at $0.25^\circ\times0.25^\circ$ resolution. Our area of interest includes $17530$ grid cells. We use the first 37 years for model fitting and keep data from 2019 to illustrate spatial prediction with our model. 
In order to obtain SST anomalies, we first estimate the spatiotemporal mean temperature by fitting a linear regression model in each grid cell using twelve cyclic B-splines to capture the seasonal component and a linear time trend to capture the effect of climate change; daily anomalies are then obtained by subtracting the estimated mean from the data. Figure~\ref{fig:austrlia} illustrates the original data, the estimated mean, and the resulting anomalies for a particularly extreme day. Since we focus on assessing the spatial extent of marine heat waves, we model the extremal spatial dependence of SST anomalies averaged over non-overlapping 14-day windows, resulting in $964$ spatial observations from 1982--2018. This biweekly aggregation also helps to significantly reduce temporal dependence, which would be complex to model.

After standardizing the aggregated data to the unit Fr\'echet scale via marginal rank transforms, we compute for each 14-day window the spatial average of the data over the entire domain and extract the $50$ spatial fields with the largest spatial average (i.e., about $5\%$ spatial extreme events). Treating these as independent realizations from the same extremal model, we fit our proposed \WM{} \BR{} $r$-Pareto process with parameters $(\tau,\kappa,\alpha,\beta)$, defined through \eqref{spec_Gauss} based on the SPDE~\eqref{SPDE}. We discretize the domain into a fine triangulated mesh with $4834$ nodes and perform maximum (full) likelihood inference using the selected $17530$-dimensional observed vectors. Interestingly, when fitting the general model, $\beta$ is estimated close to one, the value at which the long-range behavior of the variogram transitions from being bounded to unbounded. The value $\beta=1$ is in fact special as it is the only case where the long-range behavior is logarithmic; recall Proposition~\ref{prop:LG}. To isolate the effect of $\beta$ on the fitted model, we fit submodels where $(\tau,\kappa,\alpha)$ are estimated with $\beta=0,0.5,1,1.5,2$ held fixed. For comparison, we also consider the simpler model where $\alpha=\beta=1$, and the popular fractional variogram model where $(\tau,\kappa,\beta)$ are estimated with $\alpha=0$ held fixed.

\begin{figure}[t!]
    \centering
	\includegraphics[width=0.4\textwidth]{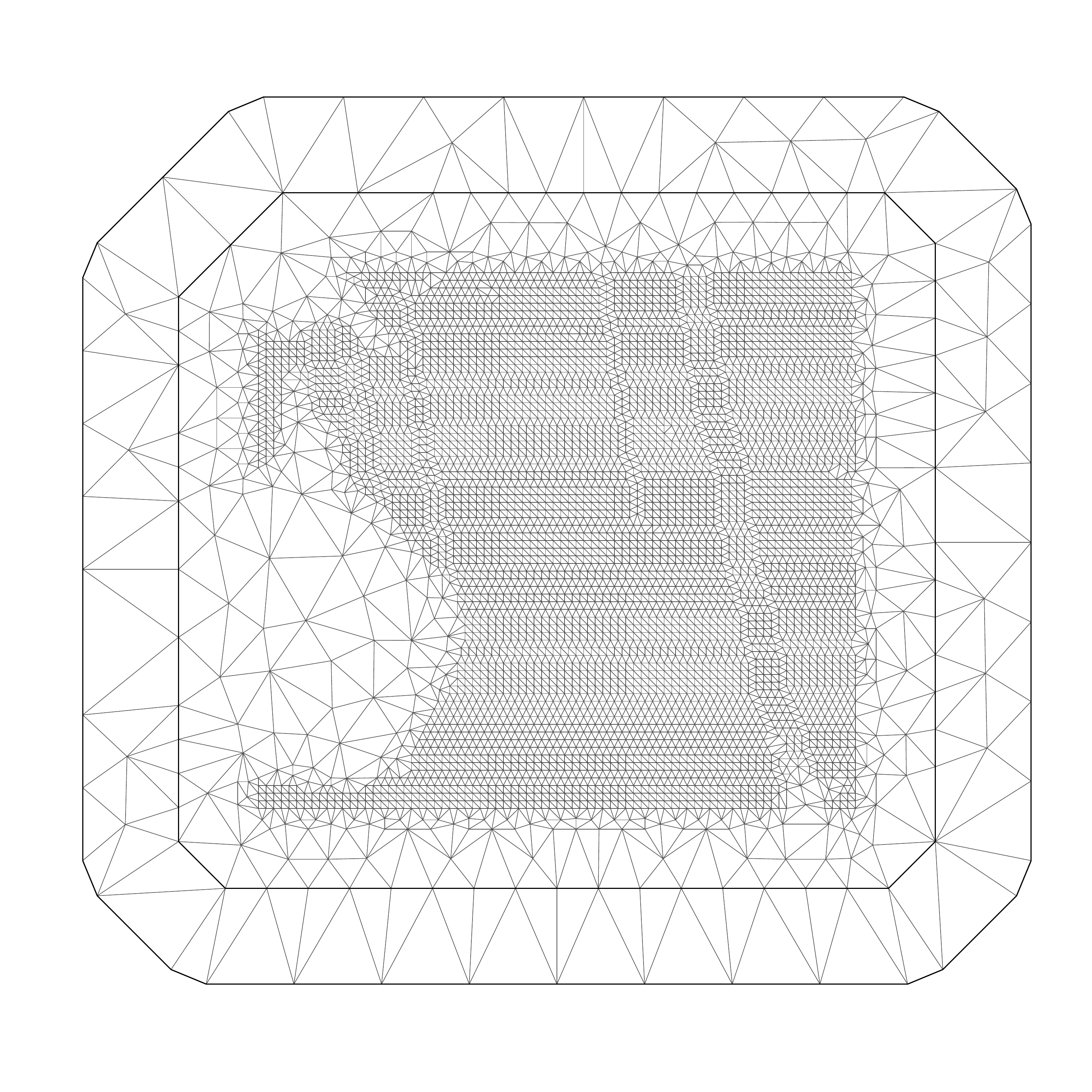}
    \includegraphics[width=0.59\textwidth]{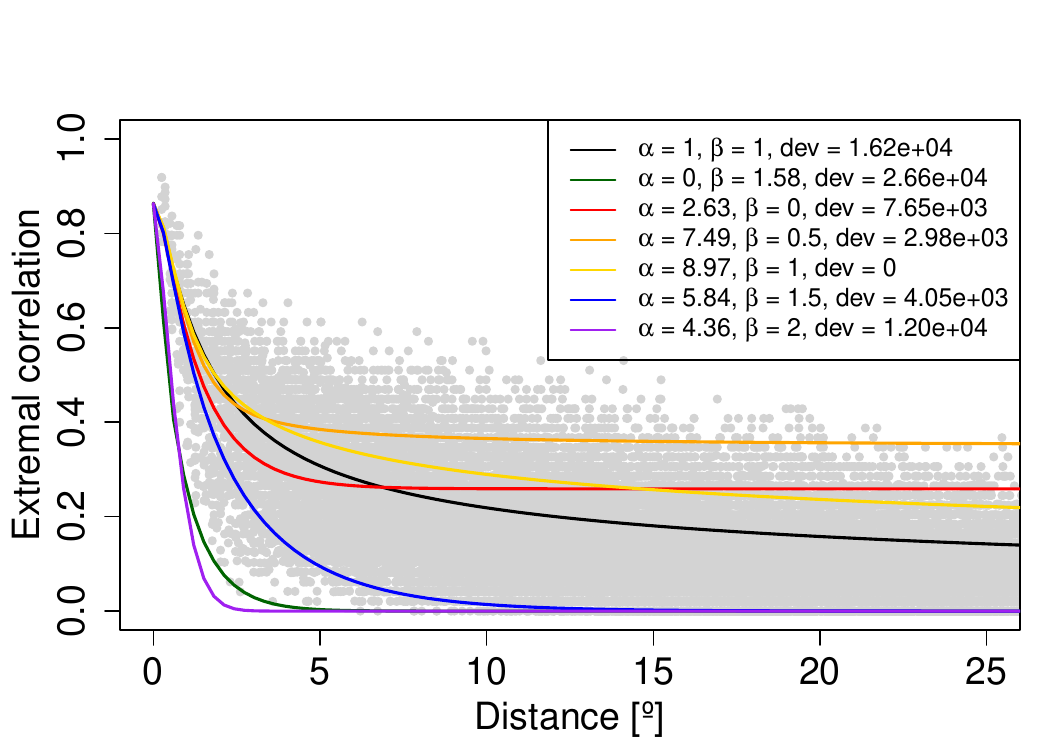}
	\caption{Left: Mesh used for model fitting. Right: Empirical (gray dots) and model-based (colored lines) extremal correlation function, plotted as a function of spatial distance. Empirical estimates are based on a random subset of 300 locations. The fitted curves correspond to the models with $\alpha=\beta=1$ (black); $\alpha=0$ and $\beta$ estimated (dark green); and $\alpha$ estimated with $\beta=0,0.5,1,1.5,2$ (red, orange, gold, blue, purple, respectively). Specific (fixed or estimated) values of $(\alpha,\beta)$ and the deviance with respect to the best model are provided in the legend.}
	\label{fig:fittedextremalcorrelation}
\end{figure}

Figure~\ref{fig:fittedextremalcorrelation} shows the mesh used for model fitting and the estimated extremal correlation functions \eqref{chi_coeff} resulting from the fitted models. The right panel displays the (fixed or estimated) $\alpha$ and $\beta$ parameters, and the deviance (i.e., twice the difference in log-likelihood) of each fitted model with respect to the best one ($\alpha$ estimated with $\beta=1$ fixed). Models with a low value of $\beta$ (i.e., the non-intrinsic models with $\beta=0,0.5$) clearly lack flexibility to capture long-range independence, and this leads to poor fits at both short and long distances. On the other hand, models with a large value of $\beta$ (i.e., intrinsic models with $\beta>1$) also display poor fits as the fitted extremal correlation function decays too fast. Interestingly, the fractional variogram model with $\alpha=0$, often used in spatial extremes analysis, provides the worst fit across all models, owing to its inability to separately control the short- and long-range behavior. Since the data are locally smooth in this data application, this forces the $\beta$ parameter to be quite large (with an estimate of $1.58$), which cannot be easily compensated for by the other parameters. Overall, the best fits are thus obtained with $\beta=1$. While the likelihood is maximized when $\alpha$ is estimated (with $\beta=1$), the extremal correlation function seems to be slightly better captured at long distances with the model where $\alpha=\beta=1$.

To illustrate the ability to perform extremal kriging with our model (recall Section~\ref{sec:extkrig}), we consider the 14-day window in 2019 that has the largest SST anomaly average (i.e., January 1--14). We then mask $25\%$ of the observations at exactly $4382$ locations, and predict these masked values by conditional simulation from each of the fitted models. Given that our model characterizes the joint upper tail, we focus on predicting moderately large values and therefore select the $25\%$ masked locations in areas where the actual true observation exceeds its corresponding marginal median. Extremal kriging is done on the unit Fr\'echet scale with the spatial average as risk functional, and the simulated values are then back-transformed to the original scale of temperature anomalies. Figure~\ref{fig:conditionalsim} displays the true averaged SST anomalies for the selected time window, the mask (i.e., missingness pattern), the corresponding conditioning values, as well as the conditional mean and two conditional simulations from the best model ($\alpha$ estimated and $\beta=1$ fixed).
\begin{figure}[t!]
    \centering
	\includegraphics[width=0.32\textwidth]{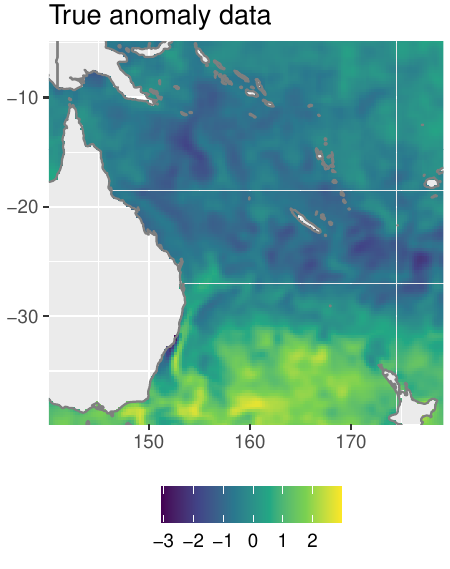}
    \includegraphics[width=0.32\textwidth]{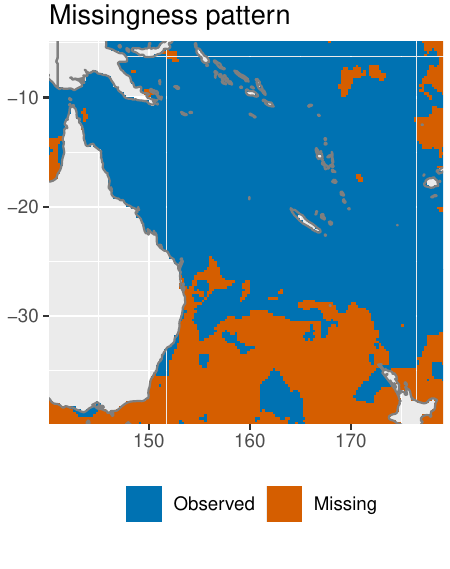}
    \includegraphics[width=0.32\textwidth]{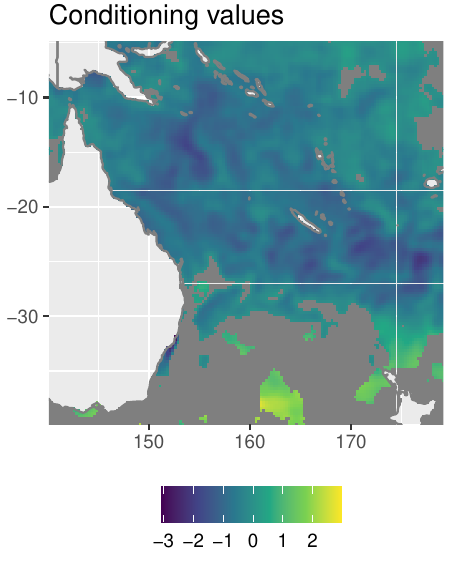}
    \includegraphics[width=0.32\textwidth]{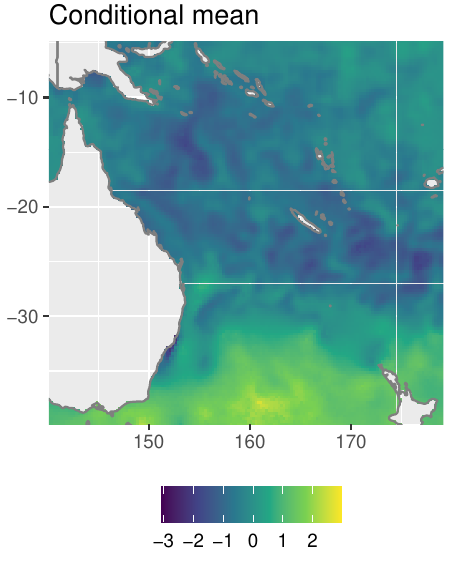}
    \includegraphics[width=0.32\textwidth]{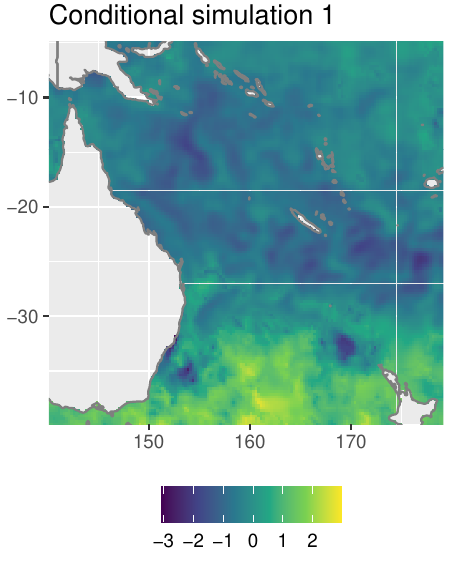}
    \includegraphics[width=0.32\textwidth]{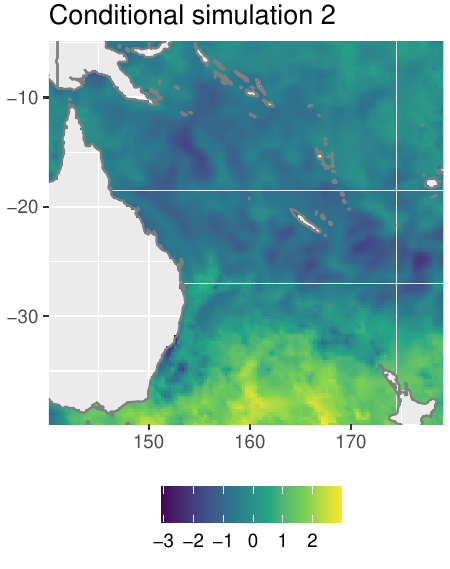}
	\caption{Top left: Averaged SST anomaly data for January 1--14, 2019. Top middle: mask (i.e., missingness pattern). Top right: Corresponding values used for conditional simulation. Bottom: Conditional mean (left) and two conditional simulations (middle and right) from the best fitted model ($\alpha$ estimated and $\beta=1$ fixed), displayed on the original scale of the averaged SST anomalies.}
	\label{fig:conditionalsim}
\end{figure}
Given that we masked a substantial area containing the majority of high values, these conditional simulations are impressively good. The magnitude and spatial pattern of extreme events are emulated well, although the simulations appear somewhat grainier than the true data due to the measurement error (nugget) term. We compare all fitted models on this spatial prediction task by generating $100$ conditional simulations (with the same random seed for each model) and computing the root mean squared prediction error (RMSPE) on the original scale of the data, averaged across all masked locations. Table~\ref{tab:RMSPE.SST} reports the results.
\begin{table}[t!]
\centering
\caption{Root mean squared prediction error (RMSPE) for all fitted models based on 100 conditional simulations in the setting of Figure~\ref{fig:conditionalsim}, along with the (fixed or estimated) $\alpha$ and $\beta$ values. The best model (lowest RMSPE) is highlighted in bold.\label{tab:RMSPE.SST}}
\begin{tabular}{rr|ccccccc}
Model & $\alpha$ & 1 & 0 & 2.63 & 7.49 & 8.97 & 5.84 & 4.36 \\
 & $\beta$ & 1 & 1.58 & 0 & 0.5 & 1 & 1.5 & 2 \\ \hline
\multicolumn{2}{r|}{RMSPE} & \textbf{.560} & .707 & .722 & .827 & .564 & .708 & .656
\end{tabular}
\end{table}
Interestingly, we see that models with $\beta=1$ significantly outperform the other models, and the model with $\alpha=\beta=1$ is the best overall. Conditional means and simulations from the other models are shown for comparison in Appendix~\ref{sec:sst_anom}. Conditional simulations from the two models with $\beta=1$ have a comparable behavior. 

\section{Discussion}\label{sec:discussion}

\textbf{Summary.} Representing intrinsic Gaussian random fields as solutions to SPDEs is a powerful tool for data applications requiring spatial, temporal, or spatio-temporal modeling, and it opens up a wide range of possibilities across many subfields of statistics. We propose the flexible class of intrinsic \WM{} processes and studied their kriging properties with a particular focus on their extrapolation capabilities. Developing the ``SPDE approach'' for intrinsic fields with novel FEM approximation and software, we illustrate the benefits of this new framework in two very different areas of application: longitudinal data analysis with an application to kidney function data, and spatial extremes modeling with an application to marine heatwave assessment. Both applications provide strong evidence that the flexibility of our new model leads to improvements in model fits and predictions compared to existing methods, and that estimation and conditional simulation can be performed at a reasonable computational cost given the size of the problem. Our new sparse extremes approach allowed modeling sea surface temperature data at more than 17000 locations, which is unprecedented in the field. Our applications showcase just a few possibilities of our new framework. We briefly list possible extensions and avenues for future research.

\vspace{5pt}

\noindent\textbf{Extensions of our intrinsic SPDE framework.} One of the main benefits of working with SPDEs is that they can easily be extended to more complex settings. In the classical context of proper Gaussian random fields, a wide range of SPDE extensions have been proposed to handle anisotropic and nonstationary \citep{Lin11,fuglstad2015exploring,bakka2019nonstationary} or (nonseparable) spatiotemporal \citep{lindgren2020diffusion} dependence on Euclidean domains, the sphere \citep{Lin11}, metric graphs \citep{BSW2022}, or other manifolds \citep{mejia2020bayesian}. With careful adjustments, similar extensions could be applied in the context of our intrinsic \WM{} fields and, by extension, of \WM{} \BR{} random fields for extremes.

\vspace{5pt}

\noindent\textbf{Avenues for spatial extremes.}
Unconditional simulation of $r$-Pareto and max-stable processes requires generating multiple samples of the spectral functions \citep{Dom16,Liu19,Oes22}. For Brown--Resnick processes this amounts to the simulation of the Gaussian process in~\eqref{spec_Gauss}. Using our intrinsic \WM{} SPDE model and its sparse FEM approximation, efficient simulation from $r$-Pareto and max-stable processes becomes possible in very high dimensions.

When modeling spatial extremes through $r$-Pareto processes, it is customary to censor observations falling below a selected marginal threshold, to prevent them from biasing estimation of the tail dependence structure. When multivariate censoring is required in high dimensions, or when general risk functionals $r$ are considered, likelihood-based inference is typically slowed down significantly. Our sparse SPDE approach could be naturally combined with the alternative score-matching method of \citet{defondeville2018high} for significant computational savings.

Max-stable and $r$-Pareto processes are characterized by asymptotic tail dependence. To extend our sparse spatial extremes approach to the asymptotic  tail independence case, it would be interesting to adapt our SPDE-based \BR{} intensity to the domain-scaled regular variation framework of \citet{strokorb2025domain}, or by letting its spatial dependence range decrease with the value of the risk functional $r$, similar to the approach of \citet{zhong2022modeling} for max-infinitely divisible processes. This would extend the theoretical framework of \citet{engelke2024infinite} to the spatial context.
\vspace{5pt}

\noindent\textbf{Avenues for the modeling of areal data.}
In ecology and geoenvironmental science \citep{lombardo2020spacetime}, socio-economics and political science \citep{horan2024multilevel}, as well as public health (disease mapping) and epidemiology \citep{Konstantinoudis2022regional}, data are often observed or aggregated at the county/state/regional level, or over grid cells or alternative mapping units, and thus require specialized models for areal data. In such cases, popular spatial processes (often used as priors) include the intrinsic conditional autoregressive \citep[iCAR;][]{besag1974spatial,besag1975statistical}, the Besag--York--Molli\'e \citep[BYM;][]{besag1991bayesian} and the Leroux \citep{leroux2000estimation} models. These models often lack flexibility in capturing spatial dependence and are not invariant under a change of support. In particular, they are not consistent as the spatial discretization becomes finer, which yields model interpretation issues. To fix these problems, our flexible intrinsic \WM{} model can be incorporated as a latent process, linked to observed data only through spatial aggregates, similar to \citet{Gotway2002combining} in the case of proper Gaussian fields. This significantly broadens the class of available intrinsic areal models for a wide range of spatial applications.

\vspace{5pt}

\noindent\textbf{Avenues for simulation-based inference.}
Our SPDE-based modeling framework not only facilitates likelihood-based inference, but it also accelerates model simulation. Therefore, it has direct benefits for simulation-based inference methods and amortized learning approaches \citep[e.g.,][]{zam2025}, often used when the likelihood function is unavailable or computationally intractable. An increasingly popular approach is neural Bayes estimation \citep[e.g.,][]{Len23, sainsbury2024likelihood, richards2024neural, roe2025}, where an estimator is represented as a neural network that is trained from thousands or millions of parameter-data pairs simulated from the model. Using sparse models (such as the one we propose) in this context can thus significantly reduce the training time, making these estimators applicable in higher-dimensional settings without domain partitioning \citep{hector2024blackbox}.

\vspace{5pt}

\noindent\textbf{Avenues for inverse problems.} 
The SPDE approach is becoming more popular for defining priors in inverse problems \citep[e.g.,][]{alonso2022}. Because intrinsic fields often are used as priors in Bayesian models, a natural avenue for future research is to use the intrinsic Whittle--Mat\'ern fields as priors in Bayesian inverse problems. This would allow prior restrictions on the smoothness of the latent fields without restricting their mean levels. Interesting theoretical problems include deriving posterior contraction rates as in \citet{alonso2022b} or studying theoretical properties of Gaussian process regression as in \citet{teckentrup2020}.

\section*{Acknowledgments}
This research was supported by King Abdullah University of Science and Technology (KAUST) Office of Sponsored Research (OSR) under Award No.~OSR-CRG2020-4394.
Sebastian Engelke was also supported by the Swiss National Science Foundation under Grant 186858.

\begingroup
\singlespacing

\bibliographystyle{chicago}
\bibliography{Intrinsic_references}

\endgroup

\appendix

\section{Mathematical details on intrinsic GMRFs}\label{App:Intrinsic_intro}

Definition~\ref{def:FOI} can be interpreted in a limiting sense.
Similar to \citet[][Equation~3.19]{Rue05}, consider the centered (proper) GMRF $\mathbf{W}^{[\gamma]}$ with precision matrix 
$\Theta^{[\gamma]}:=\Theta + \gamma \mathbf{1}\mathbf{1}^\top$.
The matrix $\Theta$ has an orthonormal basis $\{\mathbf{e}_i\}_{1 \leq i \leq d}$ whose eigenvalues $\{\lambda_{i} \}_{1 \leq i \leq d}$ can be arranged in a non-decreasing order, so that $\lambda_1=0$ and $\mathbf{e}_1 =\mathbf{1}$. 
The spectral decomposition of $\Theta^{[\gamma]}$ is the same as that of $\Theta$ except with $\lambda_1=\gamma$.
Consequently, for any $\mathbf{w} \in \mathbb{R}^d$,
\begin{equation}\label{eqn:Gt0}
f(\mathbf{w}) = \lim_{\gamma \to 0} \frac{ f^{[\gamma]}(\mathbf{w})}{\sqrt{2 \pi \gamma}},
\end{equation}
where $f^{[\gamma]}(\cdot)$ is the density of $\mathbf{W}^{[\gamma]}$ and $f(\cdot)$ is the density in Definition~\ref{def:FOI}.
For a probabilistic interpretation of \eqref{eqn:Gt0}, note that if we let $\xi_i$ be independent $N(0,1)$ random variables, then
\begin{equation} \label{eqn:WgD}
\mathbf{W}^{[\gamma]} \stackrel{d}{=} \frac{\mathbf{1}}{\sqrt{\gamma}} \xi_1  + \frac{\mathbf{e}_2}{\sqrt{\lambda_2}} \xi_2 + \dots + \frac{\mathbf{e}_d}{\sqrt{\lambda_d}} \xi_d.
\end{equation}
As $\gamma \to 0$, the variance in the direction $\mathbf{e}_1 = \mathbf{1}$ increases to infinity. This is reflected in the fact that $f(\mathbf{w}) = f(\mathbf{w}+ c\mathbf{1})$ for all $c \in \mathbb{R}$, i.e., that $\mathbf{W}$ is first-order intrinsic.
However, as $\gamma \to 0$, the joint distribution of $( W^{[\gamma]}_i - W^{[\gamma]}_j)_{ij}$ does not change, because \eqref{eqn:WgD} implies that these differences are independent of $\gamma$  (since $\xi_1/\sqrt\gamma$ cancels when the difference is taken).
Consequently, $\mathbf{W}$ can be understood as a random variable with infinite variance in the direction $\mathbf{e}_1 = \mathbf{1}$, which is characterized by the joint distribution of its differences $(W_i - W_j)_{ij}$.

The first-order intrinsic random variable $\mathbf{W}$ can also be understood through conditioning. If $\mathbf{h} \in \mathbb{R}^d$ with $\mathbf{h}^\top \mathbf{1} \neq 0$, then 
\begin{equation}\label{eqn:Wh}
\mathbf{W}^{[\mathbf{h}]}:= (\mathbf{W} \mid\, \mathbf{h}^\top \mathbf{W}=0)\stackrel{d}{=} \mathbf{W}- \frac{\mathbf{h}^\top \mathbf{W}}{\mathbf{h}^\top \mathbf{1}} \mathbf{1},
\end{equation}
and $\mathbf{W}^{[\mathbf{h}]}$ is a proper random variable.
Similar to above, we can establish \eqref{eqn:Wh} by showing $\mathbf{W}^{[\mathbf{h}]} \stackrel{d}{=} \lim_{\gamma \to 0} (\mathbf{W}^{[\gamma]} \mid\mathbf{h}^\top \mathbf{W}^{[\gamma]} =0)$.
Equation \eqref{eqn:Wh}, implies that the variogram of $\mathbf{W}^{[\mathbf{h}]}$ does not depend on $\mathbf{h}$, that is,
\begin{equation} \label{eqn:varbh}
{\rm Var}(W^{[\mathbf{h}]}_i - W^{[\mathbf{h}]}_j) = {\rm Var}(W_i - W_j)=\Gamma_{ij}.
\end{equation}
Indeed, since $\mathbf{W}$ is characterized by the joint distribution of its differences $(W_i - W_j)_{ij}$ and \eqref{eqn:Wh} implies that the differences $(W_i-W_j)_{ij}$ remain unchanged after conditioning,
we can view $\mathbf{W}^{[\mathbf{h}]}$ as the restriction of $\mathbf{W}$ to the space $\{ \mathbf{w} \in \mathbb{R}^d : \mathbf{h}^\top \mathbf{w} =0 \}$.
There are two choices of $\mathbf{h}$ that are noteworthy:
\begin{enumerate}
\item[\textit{(a)}] If $\mathbf{h}=\mathbf{e}_1=\mathbf{1}$ then 
\begin{equation}\label{eqn:covbh}
{\rm Cov} (\mathbf{W}^{[\mathbf{h}]}) = \sum_{j=2}^d \lambda_j^{-1} \mathbf{e}_j \mathbf{e}_j^\top,
\end{equation}
where $(\lambda_i)_{1 \leq i\leq d}$ and $(\mathbf{e}_i)_{1 \leq i\leq d}$ are the eigenvalues and vectors of $\Theta$.
Any other choice of $\mathbf{h}$ will lead to a covariance matrix whose eigenvectors are not shared with $\Theta$. Note that this corresponds to $\lim_{\gamma \to \infty} {\rm Cov}( \mathbf{W}^{[\gamma]})$ for $\mathbf{W}^{[\gamma]}$ in \eqref{eqn:WgD}.

\item[\textit{(b)}] Let $\boldsymbol{\delta}_k$ be a vector with $k$-th entry $1$ and all other entries 0. If $\mathbf{h}= \boldsymbol{\delta}_k$ then 
\begin{equation}\label{eqn:prebh}
{\rm Prec}(\mathbf{W}^{[\mathbf{h}]}_{-k}) = \Theta^{(k)} := (\Theta_{ij})_{i,j \neq k}.
\end{equation}
Thus, the conditional dependence graph of $\mathbf{W}^{[\mathbf{\boldsymbol{\delta}}_k]}_{-k}$ is the same as that of $\mathbf{W}$, i.e., $(\mathcal{V}, \mathcal{E})$, except that node $k$ and its adjacent edges are deleted.
\end{enumerate}
    Point \textit{(a)} can be helpful in establishing theoretical properties of the first-order intrinsic GMRF because it enables us to work with a proper GMRF with the same variogram and and spectral decomposition as the intrinsic GMRF. Point \textit{(b)} can be helpful for inference as it allows us to work with a proper GMRF with the same variogram as the intrsic GMRF while maintaining sparsity of the precision matrix---note that the precision matrix that arises from the proper field in \textit{(a)} is generally dense. Indeed, if we let $f_d(\cdot; \Theta)$ denote the density of a $d$ dimensional GMRF with precision $\Theta$, then the following result ensures that likelihood inference with the proper GMRF in \textit{(b)} and the first-order intrinsic GMRF in Definition~\ref{def:FOI} is equivalent.
\begin{lemma}\label{lem:Eqv}
If $\Theta$ is a first-order intrinsic precision matrix then for any $\mv{w}\in\mathbb{R}^d$ and $k \in \{1, \dots, d\}$,
$f_{d}(\mathbf{w}; \Theta) = d^{1/2}f_{d-1}((\mathbf{w}-\mathbf{1}w_k)_{-k}; \Theta^{(k)})$.
\end{lemma}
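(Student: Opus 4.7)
The plan is to verify the identity by separately matching the two ingredients of the Gaussian densities: the quadratic form in the exponent, and the (generalized) determinant in the normalizing constant.

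First, I would rewrite the quadratic form. Set $\mathbf{w}' = \mathbf{w} - \mathbf{1}w_k$, so that the $k$-th entry of $\mathbf{w}'$ is zero. Using $\Theta\mathbf{1} = \mathbf{0}$ and the symmetry of $\Theta$,
$\mathbf{w}'^\top \Theta \mathbf{w}' = \mathbf{w}^\top\Theta\mathbf{w} - 2w_k\mathbf{1}^\top\Theta\mathbf{w} + w_k^2\mathbf{1}^\top\Theta\mathbf{1} = \mathbf{w}^\top\Theta\mathbf{w}$.
Since the $k$-th entry of $\mathbf{w}'$ vanishes, deleting the $k$-th row and column of $\Theta$ leaves the quadratic form unchanged, so $\mathbf{w}^\top\Theta\mathbf{w} = \mathbf{v}^\top\Theta^{(k)}\mathbf{v}$ with $\mathbf{v} = (\mathbf{w}')_{-k}$.

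Second, I would prove the matrix-tree-like identity $|\Theta|^* = d \cdot |\Theta^{(k)}|$ for every $k$. Expanding the characteristic polynomial $\det(tI - \Theta) = t\prod_{i=2}^d (t-\lambda_i)$ and equating its coefficient of $t^1$ with the sum of all principal $(d{-}1)\times(d{-}1)$ minors gives $|\Theta|^* = \sum_{k=1}^d |\Theta^{(k)}|$. To conclude, I would show each cofactor $|\Theta^{(k)}|$ is independent of $k$: the adjugate $\mathrm{adj}(\Theta)$ satisfies $\Theta\,\mathrm{adj}(\Theta) = \det(\Theta)I = 0$, so every column of $\mathrm{adj}(\Theta)$ lies in $\ker\Theta = \mathrm{span}(\mathbf{1})$. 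Since $\Theta$ is symmetric, so is $\mathrm{adj}(\Theta)$, hence it must have the form $c\,\mathbf{1}\mathbf{1}^\top$, and its diagonal entries, which are precisely the $|\Theta^{(k)}|$, are all equal to the same constant $c$.

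Combining the two steps with the explicit forms of $f_d$ from Definition~\ref{def:FOI} and of the proper Gaussian density $f_{d-1}$ yields
$f_d(\mathbf{w}; \Theta) = (2\pi)^{-(d-1)/2}(d\,|\Theta^{(k)}|)^{1/2}\exp\bigl(-\tfrac{1}{2}\mathbf{v}^\top\Theta^{(k)}\mathbf{v}\bigr) = d^{1/2} f_{d-1}(\mathbf{v}; \Theta^{(k)})$,
as required. The main obstacle is the pseudodeterminant identity in the second step: while it is a classical matrix-tree-type fact, one must carefully handle the rank deficiency of $\Theta$ and justify that every principal cofactor is identical (not only that their sum equals $|\Theta|^*$). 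The adjugate argument above is what makes this step clean.
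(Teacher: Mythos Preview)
Your proposal is correct and follows essentially the same two-step approach as the paper: reduce the quadratic form via $\Theta\mathbf{1}=\mathbf{0}$, and combine with the pseudodeterminant identity $|\Theta|^*=d\,|\Theta^{(k)}|$. The only difference is that the paper simply cites this determinant identity from \citet{rottger2023total}, whereas you supply a self-contained matrix-tree-type proof via the characteristic polynomial and the adjugate argument; your version is thus slightly more complete but otherwise identical in spirit.
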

\begin{proof}
Without loss of generality assume $k=1$. The result follows from the fact that $|\Theta|^*=d|\Theta^{(1)}|$ (see \cite{rottger2023total}) and 
\begin{align*}
\mv{w}^{\top} \Theta \mv{w} &= (\mv{w}-w_1 \mathbf{1})^\top \Theta (\mv{w}-w_1 \mathbf{1}) \\
&=(\mv{w}-w_1 \mathbf{1})^\top \begin{pmatrix} \sum_{i,j} \Theta^{(1)}_{ij} & -(\sum_{i} \Theta^{(1)}_{ij})_j^\top \\ -(\sum_{i} \Theta^{(1)}_{ij})_j & \Theta^{(1)} \end{pmatrix} (\mv{w}-w_1 \mathbf{1})  \\
&=(\mv{w}-w_1 \mathbf{1})^\top_{-1} \Theta^{(1)} (\mv{w}-w_1 \mathbf{1})_{-1}.
\end{align*}
\end{proof}

\section{Details on the intrinsic Whittle--Mat\'ern fields}\label{app:fractional}
The operator $\widetilde{L} = - \widetilde{\Delta}$, is clearly densely defined, positive definite, and has a compact inverse on $H$. Because, $e_1 \propto 1$, we have that $\int e_i(s) {\rm d}s = 0$ for all $i>1$, since $e_i \indep e_1$. 
Clearly, $\{e_j\}_{j=2}^{\infty}$ and $\{\lambda_j\}_{j=2}^{\infty}$ are the eigenvectors and eigenvalues of $\widetilde{L}$, and  $\{e_j\}_{j=2}^{\infty}$  form an orthonormal basis of $H$. Due to the Weyl asymptotics of the Neumann Laplacian, there exist $c,C>0$ such that for every $j\in\mathbb{N}$,
$c j^{2/d} \leq \lambda_j \leq C j^{2/d}.$
For $\beta>0$, we define the fractional power of $\widetilde{L}$ in the spectral sense. That is, let 
$$
\mathcal{D}(\widetilde{L}^\beta) = \{\phi\in H : \sum_{j>1} \lambda_j^{2\beta} (\phi, e_j)_H^2 < \infty\}
$$
denote the domain of the fractional operator $\widetilde{L}^{\beta} : \mathcal{D}(\widetilde{L}^{\beta} ) \to H$, whose action is  
$$
\widetilde{L}^{\beta}  \phi = \sum_{j>1}  \lambda_j^{\beta} (\phi, e_j)_H e_j.
$$
We introduce $\dot{H}_{\widetilde{L}}^{2\beta} = \cD(\widetilde{L}^\beta) \subset H$, which is a Hilbert space with respect to the inner product 
$(\phi,\psi)_{2\beta} = (\widetilde{L}^\beta \phi, \widetilde{L}^\beta \psi)_H$ and induced norm $\|\phi\|_{2\beta}^2 = \|\widetilde{L}^\beta \phi\|_H^2$. We let $\dot{H}_{\widetilde{L}}^{-2\beta}$ denote the dual space of $\dot{H}_{\widetilde{L}}^{2\beta}$ and note that $\dot{H}_{\widetilde{L}}^0 \cong H$.

Now, for $\kappa\geq 0$, define the operator $\hat{L} : \cD(\hat{L})\subset H \rightarrow H$ as
$\hat{L} = (-\widetilde{\Delta})^{\alpha/\beta}(\kappa^2 - \widetilde{\Delta})$. That is, 
$$
\hat{L}\phi = \sum_{j>1}\lambda_{j}^{\alpha/\beta}(\kappa^2 + \lambda_j) (\phi, e_j) e_j,
$$
for $\phi\in \cD(\hat{L})$.
Thus, this operator has eigenvalues $\{\hat{\lambda}_j\}_{j\in \bbN}$, with $\hat{\lambda}_j = \lambda_{j+1}^{\alpha/\beta}(\kappa^2 + \lambda_{j+1})$ and eigenvalues $\{\hat{e}_j\}_{j\in\bbN}$ with $\hat{e}_j = e_{j+1}$, and 
$$
c j^{\eta} \leq \hat{\lambda}_j \leq C j^{\eta}, \quad \eta = \frac{2}{d}(\frac{\alpha}{\beta} + 1).
$$
We define the fractional power of $\hat{L}$ in the spectral sense again, and introduce the corresponding space $\dot{H}_{\hat{L}}^{2\beta} = \cD(\hat{L}^\beta) \subset H$ as for $\widetilde{L}$.
Because $\hat{L}^{\beta/2} = (-\widetilde{\Delta})^{\beta/2}(\kappa^2 - \widetilde{\Delta})^{\alpha/2}$, the fractional equation \eqref{eq:fractional_intrinsic} is formally defined as 
\begin{equation}\label{eq:fractional_intrinsic1}
\hat{L}^{\beta/2} u = \widetilde{\mathcal{W}},
\end{equation}

\begin{proof}[Proof of Proposition~\ref{prop:uniqueness}]
For any $\alpha\in\bbR$ and $\beta>0$ such that $\alpha+\beta>0$, $\hat{L}$ is densely defined, positive definite and has a compact inverse on $H$. This follows directly from the fact that $-\widetilde{\Delta}$ has these properties on $H$. Therefore, 
by Lemma 2.1 of \cite{BKK2020}, the operator $\hat{L}^{\beta/2}$ can be extended to an insometric isomorphism $\hat{L}^{\beta/2} : \dot{H}_{\hat{L}}^s \rightarrow \dot{H}_{\hat{L}}^{s-\beta}$ for any $s\in\bbR$. Furthermore, by Proposition 2.3 of \cite{BKK2020}, $\widetilde{\mathcal{W}}\in L_2(\Omega, \dot{H}_{\hat{L}}^{-\frac1{\eta}-\epsilon})$ for any $\epsilon>0$. Thus, \eqref{eq:fractional_intrinsic1} has a unique solution $u\in  L_2(\Omega, \dot{H}_{\hat{L}}^{-\frac1\eta-\eps +\beta})$, and $u\in L_2(\Omega, H)$ if 
$
-\frac1{\eta} + \beta > 0 \Leftrightarrow \alpha + \beta > \frac{d}{2}.
$
\end{proof}

Since the operators $(-\widetilde{\Delta})$ and $(\kappa^2 - \widetilde{\Delta})$ commute, we may rewrite \eqref{eq:fractional_intrinsic} as
\begin{align*}
	(-\widetilde{\Delta})^{\beta/2} u &= \widetilde{x} \\
	(\kappa^2-\widetilde{\Delta})^{\alpha/2} \widetilde{x} &= \widetilde{\mathcal{W}},
\end{align*}
which will be of importance to the finite element analysis in Appendix~\ref{app:fem}. 

We may also rewrite the equation in therms of a standard Whittle--Mat\'ern field. To see this, 
let $L : \cD(L) \subset L_2(\cD) \rightarrow L_2(\cD)$ be defined as $L = \kappa^2 - \Delta$, where $\Delta$ is the Neumann Laplacian, and define fractional powers of $L$ in the spectral sense as above. Let $\dot{H}_L^{2\beta} = \cD(L^\beta)$ be the domain of the operator $L^\beta$ and introduce 
$\widetilde{\Pi} : L_2(D) \rightarrow H$ as the orthogonal projection of a vector in $L_2(D)$ onto $H$. 
Then, for any $\tau\in\mathbb{R}$, and for any $g\in \dot{H}_L^{2\tau}$, we have that 
$\widetilde{\Pi}L^{\tau}g = (\kappa^2-\widetilde{\Delta})^{\tau}\widetilde{\Pi}g$, i.e., the two operators commute. Indeed,
\begin{align*}
	\widetilde{\Pi}(\kappa^2 - \Delta)^\tau g &= \widetilde{\Pi} \sum_{i=1}^{\infty} (\kappa^2 + \lambda_j^2)^\tau (g, e_j)_{L_2(\cD)} e_j = \sum_{i=2}^{\infty} (\kappa^2 + \lambda_j^2)^\tau (g, e_j)_{L_2(\cD)} e_j \\
	&= 
	\sum_{i=1}^{\infty} (\kappa^2 + \lambda_j^2)^\tau (g, e_j)_{L_2(\cD)} \widetilde{\Pi}e_j = 
	(\kappa^2 - \Delta)^\tau \widetilde{\Pi} g.
\end{align*}

It is also clear that $\widetilde{\mathcal{W}}$ can be represented as $\widetilde{\Pi}\mathcal{W}$, which means that  we may also write  \eqref{eq:fractional_intrinsic} as 
\begin{align*}
	\widetilde{L}^{\beta/2} u &= \widetilde{\Pi}x \\
	L^{\alpha/2} x &= \mathcal{W},
\end{align*}
where $x$ is a standard Whittle--Mat\'ern field and $\mathcal{W}$ is Gaussian white noise on $L_2(\cD)$.

Thus, on $H$, we have that 
$(\kappa^2 - \Delta)^\tau = (\kappa^2 - \widetilde{\Delta})^\tau$, and since $\widetilde{\Pi}$ and $(\kappa^2 - \Delta)^\tau$ commute, we may also write the covariance operator of $u$ as $\widetilde{L}^{-\beta}L^{-\alpha}$. 
Finally, note that both $L$ and $\widetilde{L}$ are $H^2(\cD)$-regular, which means that if $f\in H$, then the solutions to $\widetilde{L} u = f$ and $L u = f$ are in $H_{\cN}^2(\cD)\cap H$.

\section{Variogram}\label{App:Variogram}

\begin{proof}[Proof of Proposition \ref{lem:Vard2}]
Let us start with the case $d=1$. In this case, the domain is $[-L/2, L/2]$ and the operator has the same eigenvalues as that on $[0,L]$ but eigenfunctions $\{\cos(j\pi s/L + j\pi/2)\}_{j=1}^\infty$. The expression for $\gamma_L$ is thus
$$
\gamma_L(s,t) = \frac1{\tau^{2}}\sum_{j=1}^{\infty} \frac1{\left(\frac{j\pi}{L}\right)^{2\beta}\left(\kappa^2 + \left(\frac{j\pi}{L}\right)^{2}\right)^{\alpha}}\left(\cos\left(\frac{j\pi s}{L} + \frac{j\pi}{2}\right) - \cos\left(\frac{j\pi t}{L} + \frac{j\pi}{2}\right)\right)^2.
$$
Introduce $w_j = j\pi/L$ and $\Delta_w = w_{j+1} - w_j = \pi/L$. Then 
$$
\gamma_L(s,t) = \frac{2}{\tau^{2}\pi} \Delta_w \sum_{j=1}^{\infty} w_j^{-2\beta}(\kappa^2 + w_j^2)^{-\alpha}(\cos(w_j s + j\pi/2) - \cos(w_j t + j\pi/2))^2.
$$
By elementary trigonometric identities, the square of the difference of the two cosines in this expression equals $1 - \cos(w_j(s-t)) + f_j(s,t)$, with 
\begin{align*}
f_j(s,t) = \frac{\cos(w_j s + j\pi) + \cos(w_j t + j\pi)}{2} - \cos(w_j(s+t) + j\pi).
\end{align*}
Therefore, $\gamma_L(s,t) = S_1(L) + S_2(L)$, where
$$
S_1(L) = \frac{2}{\tau^{2}\pi} \Delta_w \sum_{j=1}^{\infty} w_j^{-2\beta}(\kappa^2 + w_j^2)^{-\alpha}(1 - \cos(w_j (s - t))).
$$
This is a Riemann sum which for fixed $s$ and $t$ converges to 
$$
\frac{2}{\tau^{2}\pi} \int_0^{\infty} w^{-2\beta}(\kappa^2 + w^2)^{-\alpha}(1 - \cos(w (s - t))) dw
$$
as $L \rightarrow \infty$. Now, 
\begin{align*}
S_2(L) &= \frac{2}{\tau^{2}\pi} \Delta_w \sum_{j=1}^{\infty} w_j^{-2\beta}(\kappa^2 + w_j^2)^{-\alpha}f_j(s,t) 
= \frac{2}{\tau} \sum_{j=1}^\infty \frac{ L^{2(\alpha + \beta) - 1}}{(j\pi)^{2\beta}((\kappa L)^2 + (j\pi)^2)^{\alpha}}f_j(s,t).
\end{align*}
As $L\rightarrow\infty$, we have that $2 L^{2(\alpha + \beta) - 1} \rightarrow 0$ because $\alpha + \beta > 1/2$, further, 
$
(j\pi)^{-2\beta}((\kappa L)^2 + (j\pi)^2)^{-\alpha} \rightarrow 0
$
for any $j\in\mathbb{N}$
and 
$$
f_j(s,t) \rightarrow 
\frac{\cos(j\pi) + \cos(j\pi)}{2} - \cos(j\pi) = 0. 
$$
That is, $S_2(L) \rightarrow 0$ as $L \rightarrow \infty$ and we thus have showed that for fixed $s$ and $t$, 
\begin{align*}
\gamma_L(s,t) &\rightarrow \gamma(s,t) := \frac{2}{\tau^{2}\pi} \int_0^{\infty} w^{-2\beta}(\kappa^2 + w^2)^{-\alpha}(1 - \cos(w (s - t))) dw \\ 
&= 
\frac{2}{\tau^{2}(2\pi)} \int_{-\infty}^{\infty} w^{-2\beta}(\kappa^2 + w^2)^{-\alpha}(1 - \cos(w (s - t))) dw.
\end{align*}
Similar arguments show that for any $d\in\mathbb{N}$, we have that $\gamma_L(\mv{s},\mv{t}) \rightarrow  \gamma(\|\mv{s}-\mv{t}\|)$ with 
\begin{equation}\label{eq:vario_proof}
\gamma(\mv{h}) := \frac{2}{\tau^2(2\pi)^d} \int (1 - \cos(\mv{\omega}^\top \mv{h})) \frac{1}{\lVert \mv{\omega} \rVert^{2\beta}(\kappa^2 + \lVert \mv{\omega}\rVert^2)^{\alpha}} {\rm d}\mv{\omega}.
\end{equation}
For $d=1$, this directly proves the claim. 
For $d\geq 2$, we note that the variogram is isotropic, and it is enough to calculate the integral for $\mv{h} = (h,0,\ldots,0)$. Further, 
recall that a modified Bessel function of order $m=0, 1/2, 1, \ldots$ is defined as 
$$
J_m(z) = \frac{(z/2)^m}{\sqrt{\pi}\Gamma(m + 1/2)}\int_0^{\pi} \cos(z\cos(\theta)) \sin^{2m}(\theta) {\rm d}\theta.
$$
For $d=2$, switching to polar coordinates, $\mv{w} = r(\cos(\theta), \sin(\theta))$, with $\theta\in[-\pi,\pi]$, 
\begin{align*}
\gamma (\mv{h}) 
&= \frac{2}{\tau^2(2\pi)^2} \int_0^{\infty}\int_{-\pi}^\pi \frac{1 - \cos(h r \cos(\theta))}{r^{2\beta}(\kappa^2 + r^{2})^{\alpha}}  r {\rm d}\theta {\rm dr} \\
&= \frac{4}{\tau^{2}(2\pi)^2} \int_0^{\infty}\int_{0}^\pi 1 - \cos(h r \cos(\theta)) {\rm d}\theta \frac{r}{r^{2\beta}(\kappa^2 + r^{2})^{\alpha}} {\rm dr}\\ 
&= \frac{1}{\tau^{2}\pi} \int_0^{\infty} (1-J_0(hr)) \frac{r}{r^{2\beta}(\kappa^2 + r^{2})^{\alpha}} {\rm dr}.
\end{align*}

To calculate the variogram for $d\geq 3$, we use polar coordinates $\mv{w} = r (l_1,\ldots, l_d)$, with $r = \|\mv{w}\|$ and $\sum_{i=1}^d l_i^2 = 1$ and now let $l_1 = \cos\theta_{d-1}$ with  $0 \leq \theta_{d-1} \leq \pi$.
Again calculating the integral for $\mv{h} = (h,0,\ldots,0)$ and switching to polar coordinates yields
\begin{align*}
\gamma (\mv{h}) &= 2\int (1 - \cos(h r \cos(\theta_{d-1}))) \frac{f(\mv{\omega})}{\lVert \mv{\omega} \rVert^{2\beta}} {\rm d}\mv{\omega} 
= \frac{2}{\tau^{2}(2\pi)^d} \int \frac{1 - \cos(h r \cos(\theta_{d-1}))}{\lVert \mv{\omega} \rVert^{2\beta}(\kappa^2 + \lVert \mv{\omega} \rVert^{2})^{\alpha}}  {\rm d}\mv{\omega} \\
&= \frac{2}{\tau^{2}(2\pi)^d} \int_0^\infty \int (1 - \cos(h r \cos(\theta_{d-1}))) {\rm d}\sigma(\mv{\theta}) \frac{r^{d-1}}{r^{2\beta}(\kappa^2 + r^{2})^{\alpha}}  {\rm d}r ,
\end{align*}
where $\sigma(\mv{\theta})$ is the surface measure of the unit sphere in dimension $d$. For each $\theta_{d-1}$, $(l_2,\ldots, l_d)$ is a location on a sphere with radius $\sin \theta_{d-1}$. Integrating the constant function $1 - e^{- i h r \cos(\theta_{d-1})}$ over this sphere, which has area $2\pi^{(d-1)/2}\sin^{d-2}(\theta_{d-1})/\Gamma((d-1)/2)$ yields 
\begin{align*}
\gamma (\mv{h}) &= 
 \frac{4\pi^{(d-1)/2}}{\tau^{2}(2\pi)^d\Gamma((d-1)/2)}  \int_0^\infty \int_0^\pi (1 - \cos(h r \cos(\theta))) \sin^{d-2}(\theta) {\rm d}\theta \frac{r^{d-1}}{r^{2\beta}(\kappa^2 + r^{2})^{\alpha}}  {\rm d}r .
\end{align*}
Now, using that
$$
\int_0^\pi \sin^{d-2}(\theta) {\rm d}\theta = \sqrt{\pi}\frac{\Gamma\left(\frac{d-1}{2}\right)}{\Gamma\left(\frac{d}{2}\right)},
$$
and the expression for the Bessel function, we have 
\begin{align*}
\int_0^\pi (1 - \cos(h r \cos(\theta))) \sin^{d-2}(\theta) {\rm d}\theta &= 
\sqrt{\pi}\frac{\Gamma\left(\frac{d-1}{2}\right)}{\Gamma\left(\frac{d}{2}\right)} - 
\frac{\Gamma(\frac{d-2}{2} + \frac1{2})\sqrt{\pi}}{(hr/2)^{(d-2)/2}} J_{(d-2)/2}(hr)\\
&= \frac{\sqrt{\pi}\Gamma\left(\frac{d-1}{2}\right) }{\Gamma(d/2)}\left(1 - \Lambda_d(hr)\right).
\end{align*}
Inserting this expression into the expression for $\gamma(\mv{h})$ yields
\begin{align*}
\gamma (\mv{h}) &= 
 \frac{1}{\tau^{2}2^{d-2}\pi^{d/2}\Gamma(d/2)}  \int_0^\infty (1 - \Lambda_d(h r))  \frac{r^{d-1}}{r^{2\beta}(\kappa^2 + r^{2})^{\alpha}}  {\rm d}r .
\end{align*}
For the case $d=2$, $\Lambda_2(z) = J_0(z)$, and this expression equals that we derived above.
\end{proof}

\begin{proof}[Proof of Proposition \ref{prop:LG}]
We establish the result for $d=2$. When $d\neq 2$ the result follows from similar, albeit more involved arguments. First, observe that
the leading order term of $1-\Lambda_2(z)\equiv1-J_0(z)$ when $z$ is small is $z^2$, and there exists $c_1, C_1 >0$ such that 
$c_1z^2 \leq 1 - \Lambda_2(z) \leq C_1z^2$, for all  $z \in [0,1)$.
Additionally, there exists $c_2, C_2>0$ such that 
$c_2 \leq 1-\Lambda_2(z) \leq C_2$, for all  $z \geq 1$,
which follows from the observation that the unique maximum of $|J_\nu(z)|$ over $(\nu,z) \in [0,\infty)^2$ is $J_0(0)=1$ (note we take $z \geq 1$).

We first let $\lVert h \rVert \to 0$.
Applying both these inequalities in the second step we have
\begin{align*}
\gamma(h) &= \frac{1}{2\pi}\int^{\infty}_0 \frac{1-J_0(r \lVert h\rVert)}{r^{2 \beta -1} (\kappa^2 + r^2)^\alpha} {\rm d}r \\
&\leq\frac{C_1}{2\pi}\int^{\lVert h \rVert^{-1}}_0 \frac{ r^2 \lVert h \rVert^2}{r^{2 \beta -1} (\kappa^2 + r^2)^\alpha} {\rm d}r + \frac{C_2}{2\pi}\int^\infty_{\lVert h \rVert^{-1}} \frac{1}{r^{2 \beta -1} (\kappa^2 + r^2)^\alpha} {\rm d}r \\
&\leq \frac{C_1\lVert h \rVert^2}{2 \pi \kappa^\alpha} \int_0^1 r^{3-2\beta} {\rm d}r +\frac{C_1  \lVert h \rVert^2}{2\pi}\int^{\lVert h \rVert^{-1}}_1r^{3-2 \beta -2\alpha } {\rm d}r+ \frac{C_2}{2\pi}\int^\infty_{\lVert h \rVert^{-1}} {r^{1-2 \beta -2\alpha }} {\rm d}r \\
&\leq C^* \left[ \lVert h \rVert^{2} + \lVert h\rVert^{2 \alpha + 2\beta -2} \right],
\end{align*}
where $C^*$ is a suitably large constant. The upper bound then follows by observing that the first term dominates when $\alpha+\beta>2$ and the second term dominates when $\alpha+\beta \leq 2$. The lower bound follows by similar arguments leading to the result.

Now consider the case where $\lVert h \rVert \to \infty$. Following a similar line of reasoning as above, we obtain
\begin{align*}
\gamma(h) &\leq\frac{C_1}{2\pi}\int^{\lVert h \rVert^{-1}}_0 \frac{ r^2 \lVert h \rVert^2}{r^{2 \beta -1} (\kappa^2 + r^2)^\alpha} {\rm d}r +\frac{C_2}{2\pi}\int^{\infty}_{\lVert h \rVert^{-1}} \frac{ 1}{r^{2 \beta -1} (\kappa^2 + r^2)^\alpha} {\rm d}r \\
&\leq \frac{C_1\lVert h \rVert^2}{2 \pi \kappa^\alpha} \int_0^{\lVert h \rVert^{-1}} r^{3-2\beta} {\rm d}r +\frac{C_2}{2\pi (\kappa^2+1)^\alpha}\int^1_{\lVert h \rVert^{-1}}r^{1-2 \beta } {\rm d}r+ \frac{C_2}{2\pi}\int^\infty_{1} {r^{1-2 \beta -2\alpha }} {\rm d}r \\
&\leq C^*\left[ \lVert h\rVert^{2\beta-2} + \lVert h \rVert^{2\beta-2} + 1 \right],
\end{align*}
when $\beta\neq 1$, for $C^*$ sufficiently large. When $\beta=1$, the second $\lVert h \rVert^{2\beta-2}$ term in the above equation becomes $\log \lVert h \rVert$. Considering the leading order terms in each case then leads to the upper bound. The lower bound again follows from similar arguments leading to the result.
\end{proof}

\section{Finite element details}\label{app:fem}
In this section, we provide the details for the FEM and rational approximation method. We begin with an introduction to the method and then provide the proofs.

\subsection{Introduction to the method}
Let $V_{\meshwidth}$ be a finite element space that is spanned by the set of continuous piecewise linear basis functions $\{\varphi_j\}_{j=1}^{n_{\meshwidth}}$, with $n_{\meshwidth}\in\mathbb{N}$, defined with respect to the triangulation $\mathcal{T}_{\meshwidth}$ of $\overline{\mathcal{D}}$. Further, define $\widetilde{V}_{\meshwidth} = V_{\meshwidth} \cap H$, which consists of all functions
$$
f(s) = \sum_{i=1}^{N_{\meshwidth}} c_i \varphi_i(s), \quad \mbox{with} \quad \sum_{i=1}^{N_{\meshwidth}} h_i c_i = 0,
$$
where $h_i = \int \phi_j(s) {\rm d}s$. 
We then have that $V_{\meshwidth} \subset V := \dot{H}_L^1$ and $\widetilde{V}_{\meshwidth} \subset \widetilde{V} := \dot{H}_{\widetilde{L}}^1$. 

Let $\Pi_{\meshwidth} : L_2(\cD) \rightarrow V_{\meshwidth}$ and $\widetilde{\Pi}_{\meshwidth} : L_2(\cD) \rightarrow \widetilde{V}_{\meshwidth}$
denote the $L_2(\cD)$-orthogonal projections onto $V_{\meshwidth}$ and $\widetilde{V}_{\meshwidth}$ respectively. We then have the following result which will be of importance later. In the statement, and throughout the paper, we use the notation $\|A\|_{\cL(E)}$ for the operator norm of an operator $A : E \rightarrow E$, for some  Hilbert space $E$. We will later also use the notation $\|B\|_{\cL(E,F)}$ for the operator norm of an operator $B : E \rightarrow F$, where $F$ is some other Hilbert space.

\begin{lemma}\label{lemma:projection}
The orthogonal projections $\Pi_{\meshwidth}$ and $\widetilde{\Pi}_{\meshwidth}$ are $H^1(\cD)$-stable, i.e., there exists a constant $C$ such that $\|\widetilde{\Pi}_{\meshwidth}\|_{\cL(H^1(\cD))} \leq C$ and $\|\Pi_{\meshwidth}\|_{\cL(H^1(\cD))} \leq C$. Further, we have that $\widetilde{\Pi}_{\meshwidth} = \widetilde{\Pi}\Pi_{\meshwidth} = \Pi_{\meshwidth}\widetilde{\Pi}$.
\end{lemma}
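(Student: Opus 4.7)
The plan is to prove the commutation identities first, since they will allow me to reduce the $H^1$-stability of $\widetilde{\Pi}_{\meshwidth}$ to the corresponding property of $\Pi_{\meshwidth}$. The crucial observation is that $1 \in V_{\meshwidth}$: the nodal basis $\{\varphi_j\}_{j=1}^{n_{\meshwidth}}$ of continuous piecewise linear hat functions on $\mathcal{T}_{\meshwidth}$ forms a partition of unity on $\overline{\mathcal{D}}$, so $1 = \sum_j \varphi_j \in V_{\meshwidth}$. This single fact drives both identities.

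For $\widetilde{\Pi}_{\meshwidth} = \Pi_{\meshwidth}\widetilde{\Pi}$, given $f \in L_2(\cD)$, note that $\Pi_{\meshwidth}\widetilde{\Pi} f \in V_{\meshwidth}$, and because $1\in V_{\meshwidth}$ the defining property of $\Pi_{\meshwidth}$ gives
\[
(\Pi_{\meshwidth}\widetilde{\Pi} f, 1)_{L_2(\cD)} = (\widetilde{\Pi} f, 1)_{L_2(\cD)} = 0,
\]
so the image lies in $\widetilde{V}_{\meshwidth}$. For any $v \in \widetilde{V}_{\meshwidth} \subset V_{\meshwidth} \cap H$,
\[
(f - \Pi_{\meshwidth}\widetilde{\Pi} f, v)_{L_2(\cD)} = (f, v)_{L_2(\cD)} - (\widetilde{\Pi} f, v)_{L_2(\cD)} = 0,
\]
since $v \in V_{\meshwidth}$ handles the first equality and $v \in H$ (so $v \perp 1$) handles the second. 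This identifies $\Pi_{\meshwidth}\widetilde{\Pi} f$ as the $L_2$-projection of $f$ onto $\widetilde{V}_{\meshwidth}$. The reverse identity $\widetilde{\Pi}_{\meshwidth} = \widetilde{\Pi}\Pi_{\meshwidth}$ is verified symmetrically: $\widetilde{\Pi}\Pi_{\meshwidth} f = \Pi_{\meshwidth} f - (\oint_{\cD}\Pi_{\meshwidth} f) \cdot 1$ lies in $V_{\meshwidth}$ and has mean zero, and the same orthogonality check applies.

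For the $H^1$-stability, I would invoke the classical result (going back to Crouzeix and Thomée, with sharpening by Bramble--Pasciak--Steinbach) that the $L_2$-orthogonal projection onto continuous piecewise linear finite elements on a quasi-uniform family of triangulations of a polytope is $H^1$-stable, uniformly in $\meshwidth$; the quasi-uniformity assumption on $(\mathcal{T}_{\meshwidth})_{\meshwidth\in(0,1)}$ from Section~\ref{FEM} ensures this applies. For the mean-zero projection, $\widetilde{\Pi} f = f - \oint_\cD f$ satisfies $\nabla \widetilde{\Pi} f = \nabla f$ and, by Cauchy--Schwarz, $\|\widetilde{\Pi} f\|_{L_2(\cD)} \leq \|f\|_{L_2(\cD)} + |\cD|^{1/2}|\oint_\cD f| \leq 2\|f\|_{L_2(\cD)}$, so $\|\widetilde{\Pi}\|_{\cL(H^1(\cD))} \leq 2$. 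Combining these two bounds with the identity $\widetilde{\Pi}_{\meshwidth} = \widetilde{\Pi}\Pi_{\meshwidth}$ yields $\|\widetilde{\Pi}_{\meshwidth}\|_{\cL(H^1(\cD))} \leq C$.

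The main (non-elementary) ingredient is the $H^1$-stability of $\Pi_{\meshwidth}$, but since this is a standard result in the FEM literature for quasi-uniform meshes on convex polytopes, the proof reduces essentially to correctly setting up the identities; everything else is bookkeeping.
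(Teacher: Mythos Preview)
Your proof is correct and follows essentially the same route as the paper: establish the commutation identities, invoke the classical $H^1$-stability of $\Pi_{\meshwidth}$ on quasi-uniform meshes (the paper cites Bramble), bound $\|\widetilde{\Pi}\|_{\cL(H^1(\cD))}$ directly, and compose. The only cosmetic difference is that the paper verifies $\widetilde{\Pi}_{\meshwidth} = \widetilde{\Pi}\Pi_{\meshwidth} = \Pi_{\meshwidth}\widetilde{\Pi}$ by expanding in an orthonormal basis $\{e_{i,\meshwidth}\}$ of $V_{\meshwidth}$ with $e_{1,\meshwidth}\propto 1$, whereas you use the variational characterization of the projections; both arguments ultimately rest on the single fact $1\in V_{\meshwidth}$, which you identify explicitly.
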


The operators $L$ and $-\widetilde{\Delta}$ induce the following continuous and coercive bilinear forms on $V$ and $\widetilde{V}$ respectively:
\begin{align}\label{bilinear_form}
	a(v, u) &= (\nabla u, \nabla v )_{L_2(\mathcal{D})} + \kappa^2 (u, v )_{L_2(\mathcal{D})},  &&u,v \in V, \\
	\widetilde{a}(v, u) &= (\nabla u, \nabla v )_{L_2(\mathcal{D})}, &&u,v \in \widetilde{V}.
\end{align}
Let $L_{\meshwidth}: V_{\meshwidth}\to V_{\meshwidth}$ and $\widetilde{L}_{\meshwidth} : \widetilde{V}_{\meshwidth} \to \widetilde{V}_{\meshwidth}$ be defined in terms of the bilinear forms $a$ and $\widetilde{a}$, as their restrictions to $V_{\meshwidth}\times V_{\meshwidth}$ and $\widetilde{V}_{\meshwidth}\times \widetilde{V}_{\meshwidth}$, respectively:
\begin{align*}
	(L_{\meshwidth} \phi_{\meshwidth}, \psi_{\meshwidth})_{L_2(\mathcal{D})} &= a(\phi_{\meshwidth}, \psi_{\meshwidth}) ,  &&\phi_{\meshwidth}, \psi_{\meshwidth} \in V_{\meshwidth}.\\
	(\widetilde{L}_{\meshwidth} \phi_{\meshwidth}, \psi_{\meshwidth})_{L_2(\mathcal{D})} &= \widetilde{a}(\phi_{\meshwidth}, \psi_{\meshwidth}) , &&\phi_{\meshwidth}, \psi_{\meshwidth} \in \widetilde{V}_{\meshwidth}.
\end{align*}

We now refer to the following SPDE on $\widetilde{V}_{\meshwidth}$ as the discrete model of \eqref{eq:fractional_intrinsic}:
\begin{align*}
	\widetilde{L}_{\meshwidth}^{\beta/2} u_{\meshwidth} &= \widetilde{\Pi}_{\meshwidth}  x_{\meshwidth} \\
	L_{\meshwidth}^{\alpha/2} x_{\meshwidth} &= \mathcal{W}_{\meshwidth},
\end{align*}
where $\mathcal{W}_{\meshwidth} = \sum_{j=1}^{\infty} \xi_j \Pi_{\meshwidth} e_{j}$ is a $V_{\meshwidth}$ valued approximation of $\mathcal{W}$. 

The covariance operator of $u_{\meshwidth}$ is given by 
$\widetilde{L}_{\meshwidth}^{-\alpha}L_{\meshwidth}^{-\beta}$, and
$$
\varrho_{\meshwidth}^{\alpha,\beta}(x,y) = \sum_{j=1}^{n_{\meshwidth}} \widetilde{\lambda}_{j,\meshwidth}^{-\beta} \lambda_{j,\meshwidth}^{-\alpha}e_{j,\meshwidth}(x) e_{j,\meshwidth}(y),\quad\hbox{for a.e. $(x,y)\in\mathcal{D}\times \mathcal{D}$},
$$ 
is the corresponding covariance function. Here, $\{e_{j,\meshwidth}\}_{j=1}^{n_{\meshwidth}}$ are the eigenvectors of $L_{\meshwidth}$, which are orthonormal in $L_2(\mathcal{D})$ and form a basis of $V_{\meshwidth}$. Order these so that the corresponding eigenvalues $0 < \lambda_{1,\meshwidth} \leq \lambda_{2,\meshwidth} \leq \cdots\leq \lambda_{n_{\meshwidth}, \meshwidth},$ are in a non-decreasing order. Then $e_{1,\meshwidth}\propto 1$ and $\{\widetilde{e}_{j,\meshwidth}\}_{j=1}^{n_{\meshwidth}-1}$, with $\widetilde{e}_{j,\meshwidth} = e_{j+1,\meshwidth}$ are therefore the eigenvectors of $\widetilde{L}_{\meshwidth}$, with eigenvalues $\widetilde{\lambda}_j = \lambda_{j+1}-\kappa^2$. 

We now have the following result regarding the FEM error of the approximation. 
\begin{proposition}\label{cov_fem_approx_rate}
	Let $\alpha,\beta\geq 0$  be such that $\alpha + \beta > d/4$. Then, for each $\varepsilon>0$, 
	\begin{equation}\label{eq:cov_bound}
		\|\varrho^{\alpha,\beta} - \varrho_{\meshwidth}^{\alpha,\beta}\|_{L_2(\mathcal{D}\times\mathcal{D})} \lesssim_{\varepsilon,\alpha, \beta, \kappa,\mathcal{D}} h^{\min\{2\alpha + 2\beta-d/2 -\varepsilon,2\}}.
	\end{equation}
\end{proposition}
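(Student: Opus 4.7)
Since the $L_2(\mathcal{D}\times\mathcal{D})$ norm of a symmetric kernel equals the Hilbert--Schmidt norm of the associated integral operator on $L_2(\cD)$, my plan is to recast the claim as a bound of the form
$$\|\mathcal{C} - \mathcal{C}_\meshwidth\|_{\mathrm{HS}(L_2(\cD))} \lesssim \meshwidth^{\min\{2(\alpha+\beta) - d/2 - \varepsilon,\,2\}},$$
where $\mathcal{C} = \widetilde{L}^{-\beta} L^{-\alpha}$ and $\mathcal{C}_\meshwidth = \widetilde{L}_\meshwidth^{-\beta} \widetilde{\Pi}_\meshwidth L_\meshwidth^{-\alpha} \Pi_\meshwidth$, both viewed as operators on $L_2(\cD)$ that vanish on the constant mode. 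The threshold $\alpha + \beta > d/4$ is exactly what makes $\mathcal{C}$ Hilbert--Schmidt, via the Weyl asymptotics $\lambda_j \asymp j^{2/d}$ recorded in Appendix~\ref{app:fractional}.

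Next, I would split the operator error via a hybrid term, writing
$$\mathcal{C} - \mathcal{C}_\meshwidth = \bigl(\widetilde{L}^{-\beta} - \widetilde{L}_\meshwidth^{-\beta}\widetilde{\Pi}_\meshwidth\bigr) L^{-\alpha} + \widetilde{L}_\meshwidth^{-\beta}\widetilde{\Pi}_\meshwidth \bigl(L^{-\alpha} - L_\meshwidth^{-\alpha}\Pi_\meshwidth\bigr).$$
Each of the two pieces is estimated by combining the Hilbert--Schmidt multiplicativity inequality $\|AB\|_{\mathrm{HS}} \leq \|A\|_{\cL} \|B\|_{\mathrm{HS}}$ with operator-norm FEM error bounds of Strang type for proper Whittle--Mat\'ern operators, already available in \cite{BKK2020, xiong2022}, specifically estimates of the form
$$\|L^{-\alpha} - L_\meshwidth^{-\alpha}\Pi_\meshwidth\|_{\cL(\dot{H}_L^{-\sigma}, \dot{H}_L^{s})} \lesssim \meshwidth^{\min\{2\alpha - s - \sigma,\, 2\}}$$
and an analogous statement for $\widetilde{L}$ on $H$. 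The Hilbert--Schmidt bounds on the remaining factors follow directly from the Weyl asymptotics and their quasi-uniform discrete counterparts $\lambda_{j,\meshwidth} \asymp j^{2/d}$. Balancing the two factors in each piece amounts to optimizing the intermediate Sobolev index $\sigma$, taken just below the embedding threshold $d/2$, which is the origin of the $-d/2 - \varepsilon$ correction in the exponent; the saturation at $2$ is the piecewise-linear FEM approximation order.

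The principal technical obstacle will be the intrinsic/mean-zero structure: $\widetilde{L}$ is only coercive on $H$, so one must verify that the FEM error estimates available for the proper operator $L$ transfer to $\widetilde{L}$ uniformly in $\meshwidth$. Given that Appendix~\ref{app:fractional} establishes that $\widetilde{L}$ shares all but one of the eigenpairs of $L$ restricted to $H$, and that Lemma~\ref{lemma:projection} provides the compatibility $\widetilde{\Pi}_\meshwidth = \widetilde{\Pi}\,\Pi_\meshwidth = \Pi_\meshwidth \widetilde{\Pi}$ together with $H^1$-stability of the projections, this transfer reduces to adapting each step of the proper-case proof by using the identity $\widetilde{\Pi}_\meshwidth = \widetilde{\Pi}\,\Pi_\meshwidth$ to peel off the constant mode before invoking the proper-operator estimates.
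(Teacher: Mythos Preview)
Your plan matches the paper's strategy: pass to the Hilbert--Schmidt norm on $H$, insert a hybrid term, and split using $\|AB\|_{\cL_2} \leq \|A\|_{\cL}\|B\|_{\cL_2}$ together with operator-norm FEM bounds (Lemma~\ref{lemma:fem} for $\widetilde{L}$; \cite[Proposition~1]{BSZ2023} for $L$) and the Weyl asymptotics. The transfer from $L$ to $\widetilde{L}$ proceeds via the norm equivalence of Lemma~\ref{lemma:embedding} and the projection identities of Lemma~\ref{lemma:projection}, essentially as you anticipate.

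There is, however, a real gap: a single hybrid term does not deliver the full rate across the whole parameter range. For your second piece $\widetilde{L}_\meshwidth^{-\beta}\widetilde{\Pi}_\meshwidth(L^{-\alpha}-L_\meshwidth^{-\alpha}\Pi_\meshwidth)$ you must control $\|\widetilde{L}^{\gamma/2}\widetilde{L}_\meshwidth^{-\beta}\widetilde{\Pi}_\meshwidth\|_{\cL_2(H)}$ uniformly in $\meshwidth$, with $\gamma$ large enough that $\|L^{-\alpha}-L_\meshwidth^{-\alpha}\Pi_\meshwidth\|_{\cL(\dot{H}_L^\gamma,L_2(\cD))}$ yields the target rate; when the target is $\meshwidth^2$ this forces $\gamma \geq 2-2\alpha$. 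But the only available uniform stability bound is $\|\widetilde{L}^\tau\widetilde{L}_\meshwidth^{-\tau}\widetilde{\Pi}_\meshwidth\|_{\cL(H)} \lesssim 1$ for $\tau \leq 1/2$ (Lemma~\ref{lemma:bound}), reflecting the limited Sobolev regularity of the piecewise-linear space, and this caps $\gamma$ at $1$. When $\alpha \leq 1/2$ and $\beta \geq 1/2 + d/4$ these constraints are incompatible and your split gives a suboptimal exponent. The paper handles this regime by switching to the alternative hybrid $\widetilde{L}^{-\beta}L_\meshwidth^{-\alpha}\Pi_\meshwidth$, which reverses the order of approximation so that the stability estimate is needed for $L^\alpha L_\meshwidth^{-\alpha}$ with $\alpha \leq 1/2$ instead. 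Even within the first regime the paper distinguishes four subcases to bound the discrete Hilbert--Schmidt factor, so the ``balancing'' step is more delicate than your sketch conveys.
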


To handle the fractional powers of the operators, we can either perform rational approximations of the differential operators or use a rational approximation of the covariance operator of $u_{\meshwidth}$. The latter is better if we only care about the approximation of the covariance function or variogram, whereas the former can be used if we also care about convergence of the process itself. 

We here focus on the second option and approximate the covariance operator of the random field. To that end, 
we first split $L_{\meshwidth}^{-\alpha} = L_{\meshwidth}^{-\{\alpha\}} L_{\meshwidth}^{-\lfloor\alpha\rfloor}$, where $\{x\}=x-\lfloor x\rfloor$ is the fractional part of $x$. Then, we approximate $L_{\meshwidth}^{-\{\alpha\}}$ with a rational approximation of order $m$. This yields an approximation
	$L_{\meshwidth}^{-\alpha} \approx	L_{\meshwidth,m}^{\alpha} := L_{\meshwidth}^{-\lfloor \alpha\rfloor} p(L_{\meshwidth}^{-1})q(L_{\meshwidth}^{-1})^{-1}$.
Here, $p(L_{\meshwidth}^{-1}) = \sum_{i=0}^{m} a_i L_{\meshwidth}^{m-i}$ and $q(L_{\meshwidth}^{-1}) = \sum_{j=0}^m b_j L_{\meshwidth}^{m-j}$ are polynomials obtained from a rational approximation of order $m$ of the real-valued function $f(x) = x^{\{\alpha\}}$, i.e.,
$$
x^{\{\alpha\}} \approx \frac{\sum_{i=0}^{m} a_i x^{i}}{\sum_{i=0}^m b_i x^{i}}.
$$
Specifically, to obtain $\{a_i\}_{i = 0}^{m}$ and  $\{b_i\}_{i = 0}^{m}$, we approximate the function $f(x) = x^{\{\alpha\}}$ on the interval $[\lambda_{n_{\meshwidth}, \meshwidth}^{-1},\lambda_{1,\meshwidth}^{-1}]$, which covers the spectrum of $L^{-1}_{\meshwidth}$. The coefficients are computed as the best rational approximation in the $L_\infty$-norm, which, for example, can be obtained via the second Remez algorithm \citep{Remez1934}. 

Next, we perform the same type of approximation for $\widetilde{L}_{\meshwidth}^{-\beta}$, where we first split $\widetilde{L}_{\meshwidth}^{-\beta} = \widetilde{L}_{\meshwidth}^{-\{\beta\}} \widetilde{L}_{\meshwidth}^{-\lfloor\beta\rfloor}$, and then approximate $\widetilde{L}_{\meshwidth}^{-\{\beta\}}$ with a rational approximation of order $\widetilde{m}$: 
\begin{equation} \label{cov_operator_appro3}
	\widetilde{L}_{\meshwidth}^{-\beta} \approx	\widetilde{L}_{\meshwidth,\widetilde{m}}^{\beta} := \widetilde{L}_{\meshwidth}^{-\lfloor \beta\rfloor} \widetilde{p}(\widetilde{L}_{\meshwidth}^{-1})\widetilde{q}(\widetilde{L}_{\meshwidth}^{-1})^{-1}.
\end{equation}
The combined approximation is thus given by 
$\widetilde{L}_{\meshwidth,\widetilde{m}}^{\beta}L_{\meshwidth,m}^{\alpha}$.

The covariance function corresponding to this covariance operator is 
$$
\varrho_{\meshwidth,m,\widetilde{m}}^{\alpha,\beta}(x,y) = \sum_{j=1}^{n_{\meshwidth}} \frac{\widetilde{\lambda}_{j,\meshwidth}^{-\lfloor \beta \rfloor}\widetilde{p}(\widetilde{\lambda}_{j,\meshwidth}^{-1})}{\widetilde{q}(\widetilde{\lambda}_{j,\meshwidth}^{-1})}  \frac{\lambda_{j,\meshwidth}^{-\lfloor \alpha \rfloor}p(\lambda_{j,\meshwidth}^{-1})}{q(\lambda_{j,\meshwidth}^{-1})}  e_{j,\meshwidth}(x) e_{j,\meshwidth}(y),\quad\hbox{for a.e. $(x,y)\in\mathcal{D}$}.
$$
There are three  sources of errors when we consider $\varrho_{\meshwidth,m,\widetilde{m}}^{\alpha, \beta}$ as an approximation of the true covariance function $\varrho$: the FEM approximation and the two rational approximations. We now show that we have precise control over these errors via the FEM mesh width and the two orders $m$ and $\widetilde{m}$ if the rational approximations. 

\begin{theorem}\label{thm:coverror}
	Let $\alpha\geq 0$ and $\beta\geq 0$ be such that $\alpha + \beta > d/2$. Then, for every $\varepsilon>0$ and for sufficiently small $h$, we have:
	\begin{equation} \label{eq:cov_error_bound}
		\begin{split}
			\|\varrho^{\alpha,\beta} - \varrho_{\meshwidth,m,\widetilde{m}}^{\alpha,\beta}\|_{L_2(\cD \times \cD)}  \lesssim_{\varepsilon,\beta, \alpha \kappa,\mathcal{D}} & h^{\min\{2\alpha + 2\beta-d/2 -\varepsilon,2\}} \\
			&+ h^{-d/2}(1_{\beta \notin \mathbb{N}}  e^{-2\pi\sqrt{\{\beta\} \widetilde{m}}} + 1_{\alpha \notin \mathbb{N}} e^{-2\pi\sqrt{\{\alpha\} m}}).
			\end{split}
	\end{equation}
\end{theorem}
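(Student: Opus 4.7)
The plan is to identify $\|\varrho^{\alpha,\beta}-\varrho_{\meshwidth,m,\widetilde m}^{\alpha,\beta}\|_{L_2(\cD\times\cD)}$ with the Hilbert--Schmidt distance on $H$ between the two associated covariance operators, add and subtract the pure FEM covariance operator $\widetilde L_\meshwidth^{-\beta}L_\meshwidth^{-\alpha}$, and apply the triangle inequality. The first resulting summand is bounded by $h^{\min\{2\alpha+2\beta-d/2-\varepsilon,2\}}$ via Proposition~\ref{cov_fem_approx_rate}, producing the first term of~\eqref{eq:cov_error_bound}. What remains is to control the purely algebraic rational approximation error $\|\widetilde L_\meshwidth^{-\beta}L_\meshwidth^{-\alpha}-\widetilde L_{\meshwidth,\widetilde m}^{\beta}L_{\meshwidth,m}^{\alpha}\|_{\mathrm{HS}}$ on the finite-dimensional FEM space.

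For this residual I would decompose the difference as
\[
\widetilde L_\meshwidth^{-\beta}\bigl(L_\meshwidth^{-\alpha}-L_{\meshwidth,m}^{\alpha}\bigr) + \bigl(\widetilde L_\meshwidth^{-\beta}-\widetilde L_{\meshwidth,\widetilde m}^{\beta}\bigr)L_{\meshwidth,m}^{\alpha},
\]
and apply the submultiplicative estimate $\|AB\|_{\mathrm{HS}}\leq\|A\|_{\cL}\|B\|_{\mathrm{HS}}$; the outer operator norms $\|\widetilde L_\meshwidth^{-\beta}\|_{\cL}$ and $\|L_{\meshwidth,m}^{\alpha}\|_{\cL}$ are uniformly bounded in $\meshwidth,m,\widetilde m$ via the discrete Poincaré inequality for $\widetilde L_\meshwidth$ on $H$ and the shift $\kappa^2>0$ for $L_\meshwidth$. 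Each remaining Hilbert--Schmidt factor is then converted to operator norm via $\|T\|_{\mathrm{HS}}\leq\sqrt{n_\meshwidth}\,\|T\|_{\cL}$ with $n_\meshwidth\lesssim h^{-d}$ coming from quasi-uniformity of $\mathcal T_\meshwidth$, which is exactly where the $h^{-d/2}$ prefactor enters the bound. By the spectral functional calculus, the operator norm of $L_\meshwidth^{-\alpha}-L_{\meshwidth,m}^{\alpha}$ collapses to $\|L_\meshwidth^{-\lfloor\alpha\rfloor}\|_{\cL}$ times the sup-norm error of the best rational approximation of $x^{\{\alpha\}}$ on $\sigma(L_\meshwidth^{-1})\subset[0,C]$; by Stahl's theorem this sup-norm is at most a constant times $e^{-2\pi\sqrt{\{\alpha\}m}}$ when $\alpha\notin\mathbb N$, and vanishes when $\alpha\in\mathbb N$ (since the fractional factor disappears), producing the indicator $1_{\alpha\notin\mathbb N}$. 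The analogous argument on the $\beta$-factor yields the second summand of~\eqref{eq:cov_error_bound}.

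The main technical point will be uniformity of the exponential rate as $\meshwidth\downarrow0$: the spectral interval $[\lambda_{n_\meshwidth,\meshwidth}^{-1},\lambda_{1,\meshwidth}^{-1}]$ accumulates at $0$, which is precisely the branch point of $x^{\{\alpha\}}$ responsible for the sub-geometric rate $e^{-c\sqrt m}$ rather than a geometric one. Fortunately, Stahl's theorem furnishes the bound $e^{-2\pi\sqrt{\{\alpha\}m}}$ uniformly on any interval $[0,C]$ with constants depending only on $C$ and $\{\alpha\}$, so no degradation in $\meshwidth$ occurs. A secondary care point is to verify that the rational approximants produced by the second Remez algorithm used in the construction attain the same upper bound as the best $L^\infty$ approximants, which is classical. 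Combining these ingredients produces~\eqref{eq:cov_error_bound}.
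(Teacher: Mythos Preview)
Your proposal is correct and follows essentially the same route as the paper's proof: the same splitting via Proposition~\ref{cov_fem_approx_rate}, the same two-term decomposition $\widetilde L_\meshwidth^{-\beta}(L_\meshwidth^{-\alpha}-L_{\meshwidth,m}^{\alpha}) + (\widetilde L_\meshwidth^{-\beta}-\widetilde L_{\meshwidth,\widetilde m}^{\beta})L_{\meshwidth,m}^{\alpha}$, and the same combination of Stahl's theorem with the dimension bound $n_\meshwidth\lesssim h^{-d}$. The only cosmetic difference is that the paper expands the Hilbert--Schmidt norms directly as eigenvalue sums (exploiting simultaneous diagonalizability of $\widetilde L_\meshwidth$ and $L_\meshwidth$) rather than invoking the abstract inequalities $\|AB\|_{\mathrm{HS}}\le\|A\|_{\cL}\|B\|_{\mathrm{HS}}$ and $\|T\|_{\mathrm{HS}}\le\sqrt{n_\meshwidth}\,\|T\|_{\cL}$, and it handles the uniform boundedness of $\|\widetilde L_\meshwidth^{-\beta}\|_{\cL}$ and $\|L_{\meshwidth,m}^{\alpha}\|_{\cL}$ by first normalizing both operators so all eigenvalues lie in $[1,\infty)$ (whence $r_{\alpha,J_\meshwidth}\le 2$ on the spectrum) rather than invoking a discrete Poincar\'e inequality.
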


\subsection{Practical implementation}
Let $r_{m,\alpha}$ and $r_{\widetilde{m},\beta}$ denote the rational functions used to approximate $L_{\meshwidth}^\alpha$ and $\widetilde{L}_{\meshwidth}^\beta$ respectively. In practice, we normalize 
By $L_{\meshwidth}$ and $\widetilde{L}_{\meshwidth}$ so that the rational approximations can be performed on $[0,1]$. Then, by \citet[][Proposition 3]{bolin2025linear}, 
\begin{align}\label{rational_app}
r_{m,\alpha}(x) = k + \sum_{i=1}^m \frac{c_i}{x-p_i}, \quad \widetilde{r}_{\widetilde{m},\beta}(x) = \widetilde{k} + \sum_{i=1}^{\widetilde{m}} \frac{\widetilde{c}_i}{x-\widetilde{p}_i} 
\end{align}
where $k,\widetilde{k}, c_i, \widetilde{c}_i>0$ and $p_i, \widetilde{p}_i<0$.
This results in the approximation 
\begin{align*}
\widetilde{L}_\meshwidth^{-\beta}L_\meshwidth^{-\alpha} &\approx 
\widetilde{L}_\meshwidth^{-\lfloor \beta \rfloor}L_\meshwidth^{-\lfloor \alpha \rfloor}\left(\widetilde{k} I_\meshwidth + \sum_{i=1}^{\widetilde{m}} \widetilde{c}_i (\widetilde{L}_\meshwidth - \widetilde{p}_i I_\meshwidth)^{-1}\right)
\left(k I_\meshwidth + \sum_{i=1}^m c_i (L_\meshwidth - p_i I_\meshwidth)^{-1}\right)\notag\\
&=
\sum_{i=1}^{M} Q_{\meshwidth,i}^{-1},
\end{align*}
where $M = 1 + m + \widetilde{m} + m\widetilde{m}$ and each $Q_{\meshwidth,i}$ is of the form
$\widetilde{L}_\meshwidth^{\lfloor \beta \rfloor}L_\meshwidth^{\lfloor \alpha \rfloor}$,
$\widetilde{L}_\meshwidth^{\lfloor \beta \rfloor}L_\meshwidth^{\lfloor \alpha \rfloor}(\widetilde{L}_\meshwidth - c I_\meshwidth)$,
$\widetilde{L}_\meshwidth^{\lfloor \beta \rfloor}L_\meshwidth^{\lfloor \alpha \rfloor}(L_\meshwidth - c I_\meshwidth)$, or 
$\widetilde{L}_\meshwidth^{\lfloor \beta \rfloor}L_\meshwidth^{\lfloor \alpha \rfloor}(\widetilde{L}_\meshwidth - \widetilde{c} I_\meshwidth)(L_\meshwidth - c I_\meshwidth)$,
where $c,\widetilde{c}<0$ are constants, and each precision operator is multiplied with some constant scaling parameter determined by the coefficients of the rational approximation. Each of these precision operators is a non-fractional operator which can be represented by a matrix ${Q}_i$ in the same way as for the non-fractional case above. For example, $\widetilde{L}_\meshwidth L_\meshwidth (L_\meshwidth - c I_\meshwidth)$ is represented by the matrix ${G}{C}^{-1}(\kappa^2{C} + {G}){C}^{-1}((\kappa^2 - c){C} + {G})$. For all these matrices, we use the standard mass lumping technique to obtain sparse matrices, see \cite{Lin11}.

If $\lfloor\beta\rfloor \geq 1$, all these matrices are first order intrinsic precision matrices. However, when $\beta < 1$, they are of full rank and we must explicitly make the projection onto the space of zero mean functions to obtain an intrinsic field. For the likelihood calculations, this is simply done by a rank-1 adjustment which does not increase the cost.

\subsection{Proofs}
\begin{proof}[Proof of Lemma~\ref{lemma:projection}]
Let $\{e_{i,\meshwidth}\}_{i=1}^{N_{\meshwidth}}$ denote an orthonormal basis of $V_{\meshwidth}$ with $e_{1,\meshwidth} \propto 1$. Since $\widetilde{V}_{\meshwidth} = V_{\meshwidth} \cap H$, we have that $\{e_{i,\meshwidth}\}_{i=2}^{N_{\meshwidth}}$ is an orthonormal basis of $\widetilde{V}_{\meshwidth}$. Then, 
$$
\widetilde{\Pi}_{\meshwidth} g = \sum_{i=2}^{N_{\meshwidth}} (g, e_{i,\meshwidth}) e_{i,\meshwidth} = \widetilde{\Pi}\sum_{i=1}^{N_{\meshwidth}} (g, e_{i,\meshwidth}) e_{i,\meshwidth} = \widetilde{\Pi}\Pi_{\meshwidth} g,
$$
for $g\in L_2(\cD)$.
Further, 
$$
\Pi_{\meshwidth} \widetilde{\Pi} g = \Pi_{\meshwidth} \left(g - \oint g(s) {\rm d}s\right) = \sum_{i=1}^{N_{\meshwidth}} \left(g-\oint g(s) {\rm d}s, e_{i,\meshwidth}\right) e_{i,\meshwidth} =  \sum_{i=2}^{N_{\meshwidth}} (g, e_{i,\meshwidth}) e_{i,\meshwidth},
$$
since $(g-\oint g(s) {\rm d}s, e_{1,\meshwidth}) = 0$ and $(g-\oint g(s) {\rm d}s, e_{i,\meshwidth}) = (g,e_{i,\meshwidth})$ for $i>1$. 

The $H^1(\cD)$-stability of $\Pi_{\meshwidth}$ follows from the fact that we are using a quasi-uniform triangulation \cite[see][]{Bramble2002}. For the $H^1(\cD)$-stability of $\widetilde{\Pi}_{\meshwidth}$, note that H{\"o}lder's inequality gives that 
$
\|\widetilde{\Pi}f\|_{H^1(\cD)} \leq (1+|\cD|)\|f\|_{H^1(\cD)},
$
and we thus have  
$$
\|\widetilde{\Pi}_{\meshwidth}\|_{\cL(H^1(\cD))} \leq \|\widetilde{\Pi}\|_{\cL(H^1(\cD))} \|\Pi_{\meshwidth}\|_{\cL(H^1(\cD))} \leq (1+|\cD|)\|\Pi_{\meshwidth}\|_{\cL(H^1(\cD))}.
$$
\end{proof}

Let $\dot{H}_L^\beta$ denote the domain of the operator $(\kappa^2 - \Delta)^{\beta/2}$, where $\Delta$ is the Neumann Laplacian on $L_2(\cD)$. The following basic result is key for the following arguments:
\begin{lemma}\label{lemma:embedding}
	For any $\theta\geq 0$, $\dot{H}_{\widetilde{L}}^{\theta} \hookrightarrow \dot{H}_{L}^{\theta}$. For any $\theta\in\bbR$ there exists constants $0<c_\theta<C_\theta<\infty$ such that   
	 $$
	 c_\theta \|g \|_{\dot{H}_{\widetilde{L}}^{\theta}} \leq \|g\|_{\dot{H}_{L}^{\theta}} \leq C_\theta \|g \|_{\dot{H}_{\widetilde{L}}^{\theta}} \quad \forall g\in \dot{H}_{\widetilde{L}}^{\theta}.
	 $$
\end{lemma}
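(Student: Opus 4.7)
The plan is to exploit the shared eigenstructure of the two operators. Recall that $\widetilde{L}=-\widetilde{\Delta}$ acts on $H$ with eigenpairs $\{(\lambda_j,e_j)\}_{j\geq 2}$, while $L=\kappa^2-\Delta$ acts on $L_2(\cD)$ with eigenpairs $\{(\kappa^2+\lambda_j,e_j)\}_{j\geq 1}$, where $e_1\propto 1$ and $\lambda_1=0$. In particular, the eigenfunctions agree for $j\geq 2$, and any $g\in \dot{H}_{\widetilde L}^\theta\subset H$ satisfies $(g,e_1)_{L_2(\cD)}=0$.

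First I would write the two norms via the spectral representations. For any $\theta\in\mathbb{R}$ and $g\in\dot{H}_{\widetilde{L}}^\theta$,
\begin{equation*}
\|g\|_{\dot{H}_{\widetilde{L}}^\theta}^2 = \sum_{j\geq 2} \lambda_j^{\theta}\,(g,e_j)_{L_2(\cD)}^2,
\qquad
\|g\|_{\dot{H}_L^\theta}^2 = \sum_{j\geq 2} (\kappa^2+\lambda_j)^{\theta}\,(g,e_j)_{L_2(\cD)}^2,
\end{equation*}
where the second sum starts at $j=2$ because the $j=1$ Fourier coefficient vanishes by the mean-zero constraint. The desired two-sided estimate then reduces to showing that there exist $0<c_\theta<C_\theta<\infty$ such that $c_\theta\lambda_j^\theta\leq (\kappa^2+\lambda_j)^\theta\leq C_\theta\lambda_j^\theta$ for every $j\geq 2$.

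Next, write $(\kappa^2+\lambda_j)^\theta=\lambda_j^\theta\,(1+\kappa^2/\lambda_j)^\theta$. Since the Neumann eigenvalues are non-decreasing and $\lambda_2>0$, we have $0<\kappa^2/\lambda_j\leq \kappa^2/\lambda_2$ uniformly in $j\geq 2$, and hence $1\leq 1+\kappa^2/\lambda_j\leq 1+\kappa^2/\lambda_2$. Depending on the sign of $\theta$, one of the two endpoints gives the upper bound and the other the lower bound, yielding admissible constants
\begin{equation*}
c_\theta=\min\bigl\{1,\,(1+\kappa^2/\lambda_2)^\theta\bigr\},\qquad
C_\theta=\max\bigl\{1,\,(1+\kappa^2/\lambda_2)^\theta\bigr\}.
\end{equation*}
Summing over $j\geq 2$ weighted by $(g,e_j)_{L_2(\cD)}^2$ delivers the claimed norm equivalence for every $\theta\in\mathbb{R}$.

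Finally, for $\theta\geq 0$, the embedding $\dot{H}_{\widetilde{L}}^\theta\hookrightarrow \dot{H}_L^\theta$ follows immediately: any $g\in \dot{H}_{\widetilde{L}}^\theta$ lies in $H\subset L_2(\cD)$, and the just-established upper bound $\|g\|_{\dot{H}_L^\theta}\leq \sqrt{C_\theta}\,\|g\|_{\dot{H}_{\widetilde{L}}^\theta}$ shows both that $g\in\dot{H}_L^\theta$ and that the inclusion is continuous. There is no real obstacle here; the only subtle point is the book-keeping around the $j=1$ mode, namely observing that elements of $\dot{H}_{\widetilde L}^\theta$ are orthogonal to $e_1$, which is what allows the two spectral sums to run over the same index set and makes the eigenvalue comparison sharp.
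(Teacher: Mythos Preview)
Your proof is correct and follows the same spectral-comparison strategy as the paper: both arguments express the two norms via the shared eigenbasis, note that the $j=1$ mode vanishes for $g\in H$, and then compare the weights $(\kappa^2+\lambda_j)^\theta$ and $\lambda_j^\theta$ termwise. The only difference is that the paper invokes the Weyl asymptotics $\lambda_j\asymp j^{2/d}$ to bound $\kappa^2/\lambda_j$ uniformly in $j\geq 2$, whereas you simply use $\lambda_j\geq\lambda_2>0$; your route is more elementary and yields the cleaner explicit constants $c_\theta,C_\theta\in\{1,(1+\kappa^2/\lambda_2)^\theta\}$, while the paper's detour through Weyl is unnecessary for this lemma.
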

\begin{proof}
	Recall that $\{e_i\}_{i\in\bbN}$ denotes the eigenfunctions of the Neumann Laplacian on $\cD$, with corresponding eigenvalues $\{\lambda_i\}_{i\in\bbN}$. Assume that $g\in \dot{H}_{\widetilde{L}}^{\theta}$. Then $g = \sum_{i>1} \lambda_i^{\theta/2}(g,e_i)e_i$ with $\sum_{i>1}\lambda_i^{\theta}(g,e_i)^2 < \infty$. In particular, $(g,e_1)=0$ and thus, 
	$$
	\|g\|_{\dot{H}_{L}^{\theta}} = \sum_{i\in\bbN}(\kappa^2 +\lambda_i)^{\theta}(g,e_i)^2 = \sum_{i>1}(\kappa^2 +\lambda_i)^{\theta}(g,e_i)^2 > \sum_{i>1} \lambda_i^{\theta}(g,e_i)^2 = 
	\|g\|_{\dot{H}_{\widetilde{L}}^{\theta}},
	$$
    for $\theta\geq 0$,
	i.e., $c_\theta = 1$. For $\theta < 0$, we have that $C_\theta = 1$:
	$$
\|g\|_{\dot{H}_{L}^{\theta}} = \sum_{i\in\bbN}(\kappa^2 +\lambda_i)^{\theta}(g,e_i)^2 = \sum_{i>1}(\kappa^2 +\lambda_i)^{\theta}(g,e_i)^2 < \sum_{i>1} \lambda_i^{\theta}(g,e_i)^2 = 
\|g\|_{\dot{H}_{\widetilde{L}}^{\theta}} < \infty.
$$	
	Further, due to the Weyl asymptotics of the eigenvalues, we have for $\theta>0$,
	$$
	\|g\|_{\dot{H}_{L}^{\theta}}  = \sum_{i>1}(\kappa^2\lambda_i^{-1} +1)^{\theta}\lambda_i^{\theta}(g,e_i)^2 < 
	\sum_{i>1}(\kappa^2 c i^{-2/d} +1)^{\theta}\lambda_i^{\theta}(g,e_i)^2
	\leq 
	C_\theta \|g\|_{\dot{H}_{\widetilde{L}}^{\theta}} < \infty,
	$$
	with $C_\theta = 	(\kappa^2 c 2^{-2/d} +1)^{\theta}$. Thus, $g\in \|g\|_{\dot{H}_{L}^{\theta}}$ and 
	$\|g\|_{\dot{H}_{L}^{\theta}} < C_\theta \|g\|_{\dot{H}_{\widetilde{L}}^{\theta}}$. For $\theta < 0$, we similarly get 
		$$
	\|g\|_{\dot{H}_{L}^{\theta}}  = \sum_{i>1}(\kappa^2\lambda_i^{-1} +1)^{\theta}\lambda_i^{\theta}(g,e_i)^2 >
	\sum_{i>1}(\kappa^2 C i^{-2/d} +1)^{\theta}\lambda_i^{\theta}(g,e_i)^2
	\geq 
	c_\theta \|g\|_{\dot{H}_{\widetilde{L}}^{\theta}},
	$$
	with $c_\theta = 1$. 
\end{proof}

We also have the following basic facts about the restricted $L_2(\cD)$ space $H$ which will be used throughout the proofs. In the statement, and throughout the paper, we use the notation $\|A\|_{\cL_2(E,F)}$ to denote the Hilbert--Schmidt norm of an operator $A : E \rightarrow F$ mapping between two Hilbert spaces $E$ and $F$. We also use the shorthand notation $\|A\|_{\cL_2(E)}$ if $E = F$. 
\begin{lemma}\label{lemma:proj}
	Let $\widetilde{\Pi} : L_2(\cD) \rightarrow H$ denote the orthogonal projection onto $H$ and let 
	$\{e_i\}_{i\in\bbN}$ denote an orthogonal basis of $L_2(\cD)$ with $e_1 \propto 1$, so that 
	$\{e_i\}_{i>1}$ is an orthogonal basis of $H$. Then,
	\begin{enumerate}[label= \roman{enumi}.]
	\item The projection $\widetilde{\Pi}$ satisfies $\|\widetilde{\Pi}\|_{\mathcal{L}(L_2(\cD))} = 1$. 
	\item For a linear operator $A : H \rightarrow H$, we have that 
	$\|A\|_{\mathcal{L}_2(H)} = \|A\widetilde{\Pi}\|_{\mathcal{L}_2(L_2(\cD))}$. 
	\item If $B : L_2(\cD) \rightarrow L_2(\cD)$ is a bounded operator that diagonalizes with respect to $\{e_i\}_{i\in\bbN}$,  then $B g \in H$ for any $g\in H$ and $\widetilde{\Pi} B f = B \widetilde{\Pi} f$ for any $f\in L_2(\cD)$. Thus, 
	$$
	 \|AB\|_{\mathcal{L}_2(H)} = \|A\widetilde{\Pi}B\|_{\mathcal{L}_2(L_2(\cD))} =  \|AB\widetilde{\Pi}\|_{\mathcal{L}_2(L_2(\cD))}.
	$$ 
	\item For any $\gamma>0, \theta\in\bbR$, and any linear operator $C : L_2(\cD) \rightarrow L_2(\cD)$, there exists a constant $c\in(0,\infty)$ such that 
	$
	\|\widetilde{\Pi}C\|_{\mathcal{L}(\dot{H}_{\widetilde{L}}^\gamma, \dot{H}_{\widetilde{L}}^\theta)} \leq c \|C\|_{\mathcal{L}(\dot{H}_{L}^\gamma, \dot{H}_{L}^\theta)}.
	$
	\end{enumerate}
\end{lemma}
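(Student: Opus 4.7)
The lemma collects four elementary statements about the orthogonal projection $\widetilde{\Pi}$ onto $H$, and my plan is to handle them all by direct spectral computation in the basis $\{e_i\}_{i\in\bbN}$, in which $\widetilde{\Pi}$ acts as the projection that zeros out the $e_1$-coefficient and leaves all other coefficients unchanged. Parts (i)--(iii) reduce to book-keeping with this decomposition; only part (iv) will additionally invoke the norm equivalence of Lemma \ref{lemma:embedding}.

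For (i), the bound $\|\widetilde{\Pi}\|_{\mathcal{L}(L_2(\cD))}\leq 1$ is the standard Pythagoras argument valid for every orthogonal projection, and the reverse inequality follows from $\widetilde{\Pi}e_2 = e_2$ since $e_2\in H$. For (ii), I would expand the Hilbert--Schmidt norm as $\|A\widetilde{\Pi}\|_{\mathcal{L}_2(L_2(\cD))}^2 = \sum_{i\geq 1}\|A\widetilde{\Pi}e_i\|^2$: the $i=1$ term vanishes because $\widetilde{\Pi}e_1=0$, while the remaining sum is $\sum_{i>1}\|Ae_i\|^2 = \|A\|_{\mathcal{L}_2(H)}^2$, since $\{e_i\}_{i>1}$ is an orthonormal basis of $H$.

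For (iii), the diagonalization $Be_i = \mu_i e_i$ gives, for any $g=\sum_{i>1}(g,e_i)e_i\in H$, the identity $Bg = \sum_{i>1}\mu_i(g,e_i)e_i$, which still has zero $e_1$-coefficient and therefore lies in $H$. The same computation applied to an arbitrary $f\in L_2(\cD)$ shows that both $\widetilde{\Pi}Bf$ and $B\widetilde{\Pi}f$ equal $\sum_{i>1}\mu_i(f,e_i)e_i$, so $\widetilde{\Pi}B=B\widetilde{\Pi}$ as operators on $L_2(\cD)$. Since $AB\colon H\to H$ is therefore well-defined, part (ii) applied to $AB$ yields $\|AB\|_{\mathcal{L}_2(H)} = \|AB\widetilde{\Pi}\|_{\mathcal{L}_2(L_2(\cD))}$, and the commutation relation rewrites the right-hand side as $\|A\widetilde{\Pi}B\|_{\mathcal{L}_2(L_2(\cD))}$.

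For (iv), my plan is to chain three inequalities. Fix $g\in \dot{H}_{\widetilde{L}}^\gamma$; Lemma \ref{lemma:embedding} at level $\gamma$ gives $\|g\|_{\dot{H}_L^\gamma}\leq C_\gamma\|g\|_{\dot{H}_{\widetilde{L}}^\gamma}$, whence $\|Cg\|_{\dot{H}_L^\theta}\leq C_\gamma\|C\|_{\mathcal{L}(\dot{H}_L^\gamma,\dot{H}_L^\theta)}\|g\|_{\dot{H}_{\widetilde{L}}^\gamma}$. Next, the projection $\widetilde{\Pi}$ is norm-reducing on $\dot{H}_L^\theta$ for every $\theta\in\bbR$: dropping the $e_1$-term from the spectral expansion removes only the non-negative contribution $\kappa^{2\theta}(f,e_1)^2$. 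Finally, since $\widetilde{\Pi}Cg\in H$, the norm equivalence of Lemma \ref{lemma:embedding} at level $\theta$ gives $\|\widetilde{\Pi}Cg\|_{\dot{H}_{\widetilde{L}}^\theta}\leq c_\theta^{-1}\|\widetilde{\Pi}Cg\|_{\dot{H}_L^\theta}$. Combining the three bounds yields the claim with $c=c_\theta^{-1}C_\gamma$. The only step requiring genuine care is the boundedness of $\widetilde{\Pi}$ on $\dot{H}_L^\theta$ for negative $\theta$, which I would verify by the explicit spectral computation just indicated.
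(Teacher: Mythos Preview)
Your proposal is correct and follows essentially the same approach as the paper's proof: spectral bookkeeping in the basis $\{e_i\}$ for (i)--(iii), and the norm equivalence of Lemma~\ref{lemma:embedding} combined with the contractivity of $\widetilde{\Pi}$ on $\dot{H}_L^\theta$ for (iv). If anything, you are slightly more explicit than the paper in two places: you justify the equality in (i) via $\widetilde{\Pi}e_2=e_2$, and you spell out why $\widetilde{\Pi}$ is norm-reducing on $\dot{H}_L^\theta$ for all $\theta\in\bbR$ (the paper simply uses this inequality without comment).
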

\begin{proof}
	Recall that $\{e_i\}_{i\in\bbN}$ denotes the eigenfunctions of the Neumann Laplacian on $\cD$, which form an orthogonal basis of $L_2(\cD)$. Also recall that $\{e_i\}_{i>1}$ form an orthogonal basis of $H$. Now, (i) follows directly from the fact that $\widetilde{\Pi}$ is an orthogonal projection. Note that 
$$
\|A\|_{\cL_2(H)}^2 = \sum_{i>1}\|A e_i\|_{H}^2 = \sum_{i>1}\|A e_i\|_{L_2(\cD)}^2  
= \sum_{i\in\bbN}\|A \widetilde{\Pi} e_i\|_{L_2(\cD)}^2 = \|A\widetilde{\Pi}\|_{\cL_2(L_2(\cD))}^2,
$$
since $\widetilde{\Pi}e_1 = 0$ and $\widetilde{\Pi}e_i = e_i$ for $i>1$. Thus, (ii) holds. For (iii) note that if $g\in H$, then $g = \sum_{i>1} g_i e_i$ with $\sum_{i>1} g_i^2 < \infty$ and 
$$
B g =  \sum_{i>1} g_i B e_i = \sum_{i>1} g_i \lambda_{B,i} e_i \in H
$$
since $B g \in L_2(\cD)$, and hence $\|B g\|_{L_2(\cD)} = \sum_{i>1} g_i^2 \lambda_{B,i}^2 < \infty$. For $f = \sum_{i\in\bbN} f_i e_i \in L_2(\cD)$, we have that 
$$
B \widetilde{\Pi} f = B \sum_{i>1} f_i e_i =  \sum_{i>1} f_i \lambda_{B,i} e_i = \widetilde{\Pi} \sum_{i\in \bbN} f_i \lambda_{B,i} e_i = \widetilde{\Pi} B f.
$$
Finally, for (iv), we have by Lemma~\ref{lemma:embedding} that there exists a $c\in(0,\infty)$ such that
\begin{align*}
\|\widetilde{\Pi}C\|_{\mathcal{L}(\dot{H}_{\widetilde{L}}^\gamma, \dot{H}_{\widetilde{L}}^\theta)} &= \sup_{u \in \dot{H}_{\widetilde{L}}^\gamma : u \neq 0} \frac{\|\widetilde{\Pi}Cu\|_{\dot{H}_{\widetilde{L}}^\theta}}{\|u\|_{\dot{H}_{\widetilde{L}}^\gamma}} \leq 
c \sup_{u \in \dot{H}_{\widetilde{L}}^\gamma : u \neq 0} \frac{\|\widetilde{\Pi}Cu\|_{\dot{H}_{L}^\theta}}{\|u\|_{\dot{H}_{L}^\gamma}} \\
&\leq 
c \sup_{u \in \dot{H}_{\widetilde{L}}^\gamma : u \neq 0} \frac{\|Cu\|_{\dot{H}_{L}^\theta}}{\|u\|_{\dot{H}_{L}^\gamma}} 
\leq 
c \sup_{u \in \dot{H}_{L}^\gamma : u \neq 0} \frac{\|Cu\|_{\dot{H}_{L}^\theta}}{\|u\|_{\dot{H}_{L}^\gamma}} = 
c\|C\|_{\mathcal{L}(\dot{H}_{L}^\gamma, \dot{H}_{L}^\theta)}.
\end{align*}
\end{proof}

To bound the finite element error, we also need the following three lemmata. 

\begin{lemma}\label{lemma:interp}
There exists a linear operator $\cI_{\meshwidth} : H^2(\cD) \rightarrow V_{\meshwidth}$ such that for all $1/2 < \theta \leq 2$, $\cI_{\meshwidth} : H^{\theta} \rightarrow V_{\meshwidth}$ is a continuous extension, and 
$$
\|v - \cI_{\meshwidth} v\|_{\dot{H}_{\widetilde{L}}^\sigma(\cD)} \lesssim h^{\theta - \sigma}\|v\|_{\dot{H}_{\widetilde{L}}^\theta(\cD)}  \quad \forall v\in \dot{H}_{\widetilde{L}}^\theta(\cD),
$$
for all $0\leq \sigma \leq \min\{1,\theta\}$. 
\end{lemma}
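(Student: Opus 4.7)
My plan is to take $\mathcal{I}_\meshwidth$ to be the Scott--Zhang quasi-interpolation operator associated with the piecewise-linear finite element space $V_\meshwidth$. This operator has the advantage over the nodal interpolant of being well defined on $H^\theta(\cD)$ for every $\theta>1/2$ (not requiring $\theta>d/2$), while still reproducing continuous piecewise linears. With the standard construction (based on $L_2$-duals of one-dimensional basis functions attached to each node/facet), Scott--Zhang satisfies the local first-order approximation estimate on each simplex $T\in\mathcal{T}_\meshwidth$. Summing over elements, using the quasi-uniformity of the family $(\mathcal{T}_\meshwidth)$, and an interpolation argument between $\theta\in\{1,2\}$ and $\sigma\in\{0,1\}$ yields the classical bound
\begin{equation*}
\|v-\mathcal{I}_\meshwidth v\|_{H^\sigma(\cD)}\lesssim h^{\theta-\sigma}\,\|v\|_{H^\theta(\cD)},\qquad 1/2<\theta\leq 2,\ 0\leq \sigma\leq\min\{1,\theta\}.
\end{equation*}
This is the first step and I would simply cite it from the standard FEM literature.

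The second step is to lift this estimate to the abstract scale $\dot H_{\widetilde L}^\sigma$. I would invoke the characterization of the fractional powers of $-\widetilde\Delta$ (Neumann Laplacian restricted to mean-zero functions) as interpolation spaces. Between $\sigma=0$, where $\dot H_{\widetilde L}^0=H$, and $\sigma=1$, where $\dot H_{\widetilde L}^1=\{v\in H^1(\cD):\oint v=0\}$ with norm $\|\nabla\cdot\|_{L_2(\cD)}$ equivalent to $\|\cdot\|_{H^1(\cD)}$ by Poincar\'e, real interpolation gives a norm equivalence $\dot H_{\widetilde L}^\sigma\simeq H^\sigma(\cD)\cap H$ for $\sigma\in[0,1]$. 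This covers the left-hand side, after noting that the $\dot H_{\widetilde L}^\sigma$-norm on $v-\mathcal{I}_\meshwidth v$ kills the constant component, so it equals $\|\widetilde\Pi(v-\mathcal{I}_\meshwidth v)\|_{\dot H_{\widetilde L}^\sigma}$ which is bounded by $\|v-\mathcal{I}_\meshwidth v\|_{H^\sigma(\cD)}$ using Lemma~\ref{lemma:proj}(i). For the right-hand side, I need the reverse direction $\|v\|_{H^\theta(\cD)}\lesssim\|v\|_{\dot H_{\widetilde L}^\theta}$ for $\theta\in(1/2,2]$. For $\theta\in(1/2,1]$ this is again the $\sigma\in[0,1]$ case by Poincar\'e. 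For $\theta\in(1,2]$ the embedding $\dot H_{\widetilde L}^\theta\hookrightarrow H^\theta(\cD)$ follows from elliptic regularity of $-\widetilde\Delta$ on the convex polytope $\cD$ with homogeneous Neumann data, combined with interpolation between $\theta=1$ and $\theta=2$.

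Combining the two steps gives the desired estimate. The main obstacle is step two: the equivalence $\dot H_{\widetilde L}^\theta\simeq H^\theta(\cD)\cap H$ (with the appropriate boundary incorporation) is delicate near the critical value $\theta=3/2$ where the trace operator is no longer continuous. However, we only need the inequality in the direction from $\dot H_{\widetilde L}^\theta$ down to $H^\theta(\cD)$, which is the easy direction and is handled by standard elliptic regularity for the Neumann Laplacian on convex domains and real/complex interpolation between the integer endpoints $\theta\in\{1,2\}$. The matching lower bound (which is the harder direction in general) is only needed for $\sigma\in[0,1]$, where Poincar\'e makes it routine.
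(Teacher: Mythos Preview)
Your approach is correct and follows essentially the same strategy as the paper: obtain a quasi-interpolation operator with the standard Sobolev approximation bound, then transfer the estimate to the $\dot H_{\widetilde L}$-scale via norm equivalences. The paper routes through the scale $\dot H_L$ (for the \emph{shifted} Neumann Laplacian $L=\kappa^2-\Delta$) by citing \cite{BSZ2023} for the $\dot H_L$-version of the bound, and then passes from $\dot H_L$ to $\dot H_{\widetilde L}$ via Lemma~\ref{lemma:embedding}; you instead work directly with the classical $H^\theta(\cD)$-scale and use Poincar\'e (for $\sigma\in[0,1]$) and $H^2$-elliptic regularity of the Neumann problem on convex polytopes (for $\theta\in(1,2]$) to get the needed equivalences. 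Both routes are valid and amount to the same ingredients; the paper's version is slightly shorter because the $\dot H_L$-estimate is available as a black box, while yours is more self-contained. One minor point: for the left-hand side you need the $H^\sigma$-boundedness of $\widetilde\Pi$ for all $\sigma\in[0,1]$, which follows by interpolation between $L_2$ (Lemma~\ref{lemma:proj}(i)) and $H^1$ (the intermediate estimate in the proof of Lemma~\ref{lemma:projection}), not just Lemma~\ref{lemma:proj}(i) alone.
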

\begin{proof}
Due to the quasi-uniformity of the mesh and the $H^2$-regularity of the operators, we have by \citet[Remark 5]{BSZ2023} that there exists an operator $\cI_{\meshwidth} : H^\theta(\cD) \rightarrow V_{\meshwidth}$ such that for all $1\leq \theta < 2$ and $v\in H^\theta(\cD)$, 
$
\|v - \cI_{\meshwidth} v\|_{H^\theta(\cD)} \lesssim h^{\theta}\|v\|_{H^\theta(\cD)}.
$
By \cite[Proposition 4]{BSZ2023}, we then have that 
$$
\|v - \cI_{\meshwidth} v\|_{\dot{H}_{L}^\theta(\cD)} \lesssim h^{\theta}\|v\|_{\dot{H}_{L}^\theta(\cD)}  \quad \forall v\in \dot{H}_{L}^\theta(\cD),
$$
which by a standard interpolation argument yields that 
$$
\|v - \cI_{\meshwidth} v\|_{\dot{H}_{L}^\sigma(\cD)} \lesssim h^{\theta-\sigma}\|v\|_{\dot{H}_{L}^\theta(\cD)}  \quad \forall v\in \dot{H}_{L}^\theta(\cD).
$$
Applying Lemma~\ref{lemma:embedding} now gives the desired result. 
\end{proof}

\begin{lemma}\label{lemma:bound}
	For $0\leq \tau \leq 1/2$, we have that 
	$
	\|\widetilde{L}^\tau \widetilde{L}_{\meshwidth}^{-\tau}\widetilde{\Pi}_{\meshwidth}\|_{\cL(H)} \lesssim 1.
	$
	Further, for $0\leq \gamma \leq 1$, 
	\begin{equation}\label{eq:lem_bound2}
	\|\widetilde{L}^{-\gamma} \widetilde{L}_{\meshwidth}^{\gamma}\widetilde{\Pi}_{\meshwidth}\|_{\cL(H)} \lesssim 1
	\end{equation}
\end{lemma}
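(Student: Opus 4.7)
The plan is to verify both estimates at the endpoints of their respective parameter intervals and then invoke operator interpolation to cover the intermediate exponents. The endpoint arguments are concrete and use the variational structure of $\widetilde{L}_{\meshwidth}$; the interpolation is the step where one needs to be careful since $\widetilde{L}$ and $\widetilde{L}_{\meshwidth}$ do not commute.

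For the first estimate at $\tau=0$ the claim reduces to $\|\widetilde{\Pi}_{\meshwidth}\|_{\cL(H)}\leq 1$, which is immediate since $\widetilde{\Pi}_{\meshwidth}$ is an $L_2$-orthogonal projection. At $\tau=1/2$ the key identity to exploit is $\|\widetilde{L}^{1/2}v_h\|_H^2 = \widetilde{a}(v_h,v_h) = \|\widetilde{L}_{\meshwidth}^{1/2}v_h\|_H^2$, valid for every $v_h\in \widetilde{V}_{\meshwidth}$ (the first equality is the spectral definition of $\widetilde{L}^{1/2}$, the second is the variational definition of $\widetilde{L}_{\meshwidth}$ through the bilinear form $\widetilde{a}$). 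Choosing $v_h=\widetilde{L}_{\meshwidth}^{-1/2}\widetilde{\Pi}_{\meshwidth}g$ then yields $\|\widetilde{L}^{1/2}\widetilde{L}_{\meshwidth}^{-1/2}\widetilde{\Pi}_{\meshwidth}g\|_H = \|\widetilde{\Pi}_{\meshwidth}g\|_H \leq \|g\|_H$, as desired.

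For \eqref{eq:lem_bound2} at $\gamma=0$ the statement is again trivial. At $\gamma=1$ I would pass to the adjoint: since all operators involved are self-adjoint on $H$ and $\widetilde{\Pi}_{\meshwidth}\widetilde{L}_{\meshwidth}\widetilde{\Pi}_{\meshwidth}=\widetilde{L}_{\meshwidth}\widetilde{\Pi}_{\meshwidth}$, the task reduces to bounding $\|\widetilde{L}_{\meshwidth}\widetilde{\Pi}_{\meshwidth}\widetilde{L}^{-1}\|_{\cL(H)}$. For $g\in H$, I would set $y=\widetilde{L}^{-1}g\in \dot{H}_{\widetilde{L}}^2$ (with $\|y\|_{\dot{H}_{\widetilde{L}}^2}\lesssim \|g\|_H$), and for any $\phi_h\in\widetilde{V}_{\meshwidth}$ decompose
\[
(\widetilde{L}_{\meshwidth}\widetilde{\Pi}_{\meshwidth}y,\phi_h)_H = \widetilde{a}(\widetilde{\Pi}_{\meshwidth}y,\phi_h) = \widetilde{a}(y,\phi_h) + \widetilde{a}(\widetilde{\Pi}_{\meshwidth}y-y,\phi_h).
\]
The first term equals $(g,\phi_h)_H$, which is bounded by $\|g\|_H\|\phi_h\|_H$. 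For the second, I would combine the $H^1$-approximation estimate $\|\widetilde{\Pi}_{\meshwidth}y-y\|_{\widetilde{V}}\lesssim \meshwidth\|y\|_{\dot{H}_{\widetilde{L}}^2}$ (following from Lemma~\ref{lemma:interp} together with the $H^1$-stability of $\widetilde{\Pi}_{\meshwidth}$ from Lemma~\ref{lemma:projection}) with the standard inverse inequality $\|\phi_h\|_{\widetilde{V}}\lesssim \meshwidth^{-1}\|\phi_h\|_H$ on the quasi-uniform mesh. The powers of $\meshwidth$ cancel, yielding $\widetilde{a}(\widetilde{\Pi}_{\meshwidth}y-y,\phi_h)\lesssim \|g\|_H\|\phi_h\|_H$, and taking the supremum over $\phi_h$ closes the estimate.

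The remaining step is to interpolate between the two endpoints in each case, and this is the main obstacle because $\widetilde{L}$ and $\widetilde{L}_{\meshwidth}$ do not commute, so one cannot simply diagonalize. The plan is to apply complex interpolation to an analytic operator family such as $T_z=\widetilde{L}^{-z}\widetilde{L}_{\meshwidth}^{z}\widetilde{\Pi}_{\meshwidth}$ (and its analogue for the first bound): on the imaginary axis $\widetilde{L}^{is}$ and $\widetilde{L}_{\meshwidth}^{is}$ are unitary by the Borel functional calculus, so the family is uniformly bounded there, while the two endpoints $\Re z = 0$ and $\Re z = 1$ are covered by the steps above. Stein's interpolation theorem (equivalently, the three-lines lemma applied to $(T_z g,\phi)_H$) then yields the required bound at every intermediate exponent. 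The only delicate point is verifying analyticity and vertical-strip growth of $T_z$, but this is a standard consequence of spectral calculus for the positive self-adjoint operators $\widetilde{L}$ and $\widetilde{L}_{\meshwidth}$.
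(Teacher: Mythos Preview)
Your proposal is correct and essentially reconstructs the argument the paper delegates to \cite[Lemma~7]{Cox2020}: the paper's own proof only performs the adjoint reduction $\|\widetilde{L}^{-\gamma}\widetilde{L}_{\meshwidth}^{\gamma}\widetilde{\Pi}_{\meshwidth}\|_{\cL(H)}=\|\widetilde{L}_{\meshwidth}^{\gamma}\widetilde{\Pi}_{\meshwidth}\widetilde{L}^{-\gamma}\|_{\cL(H)}$ (identical to yours) and then cites that reference for the remainder, while you spell out the endpoint computations and the Stein interpolation explicitly. The substance of the two approaches is the same.
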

\begin{proof}
	To prove \eqref{eq:lem_bound2}, note that $\|\widetilde{L}^{-\gamma} \widetilde{L}_{\meshwidth}^{\gamma}\widetilde{\Pi}_{\meshwidth}\|_{\cL(H)} = \| \widetilde{L}_{\meshwidth}^{\gamma}\widetilde{\Pi}_{\meshwidth}\widetilde{L}^{-\gamma}\|_{\cL(H)}$ since both $\widetilde{L}^{-\gamma}$ and $\widetilde{L}_{\meshwidth}^{-\gamma}\widetilde{\Pi}_{\meshwidth}$ are self adjoint and since $\|A\|_{\cL(H)} = \|A^*\|_{\cL(H)}$ for any operator $A$, where $A^*$ denotes the adjoint of $A$. It is therefore sufficient to prove that $ \| \widetilde{L}_{\meshwidth}^{\gamma}\widetilde{\Pi}_{\meshwidth}\widetilde{L}^{-\gamma}\|_{\cL(H)} \lesssim 1$. The proof now follows from the same arguments as those in the proof of \cite[][Lemma 7]{Cox2020}.
\end{proof}

\begin{lemma}\label{lemma:fem}
For $-1 \leq \gamma \leq 2$, $\gamma \neq 3/2$,  $\tau>0$, and $0\leq \sigma \leq 1$ such that $2\tau + \gamma - \sigma >0$ and $2\tau - \sigma \geq 0$, we have that 
$$
\|\widetilde{L}^{-\tau} - \widetilde{L}_{\meshwidth}^{-\tau}\widetilde{\Pi}_{\meshwidth}\|_{\mathcal{L}(\dot{H}_{\widetilde{L}}^{\gamma},\dot{H}_{\widetilde{L}}^{\sigma})} \lesssim h^{\min\{2\tau + \gamma -\sigma - \eps,2-\sigma, 2+\gamma, 2\}}.
$$
\end{lemma}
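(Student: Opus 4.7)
The plan is to adapt the strategy that \cite{BKK2020} and \cite{BSZ2023} use for the proper operator $L$ on $L_2(\cD)$ to the restricted operator $\widetilde{L}$ on $H$. The starting observation is that $\widetilde{L}_\meshwidth^{-1}\widetilde{\Pi}_\meshwidth$ agrees with the Ritz projection for $\widetilde{a}$ composed with $\widetilde{L}^{-1}$, so the non-fractional case ($\tau=1$) reduces to a standard Galerkin analysis. Combining C\'ea's lemma, Lemma~\ref{lemma:interp}, and an Aubin--Nitsche duality argument (with the dual problem tested against $\dot{H}_{\widetilde{L}}^{-\sigma}$), I would obtain
\begin{equation*}
\|\widetilde{L}^{-1} - \widetilde{L}_\meshwidth^{-1}\widetilde{\Pi}_\meshwidth\|_{\mathcal{L}(\dot{H}_{\widetilde{L}}^{\gamma}, \dot{H}_{\widetilde{L}}^{\sigma})} \lesssim h^{\min\{2+\gamma,2\}-\sigma},
\end{equation*}
where the restriction $\gamma \neq 3/2$ reflects the usual exclusion of the half-integer endpoint in Neumann elliptic regularity on the polytope $\cD$. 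An iterated application handles integer $\tau > 1$.

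For fractional $\tau$, I would split $\widetilde{L}^{-\tau} = \widetilde{L}^{-\{\tau\}}\widetilde{L}^{-\lfloor\tau\rfloor}$ (analogously for $\widetilde{L}_\meshwidth^{-\tau}\widetilde{\Pi}_\meshwidth$) and invoke the Balakrishnan representation on $H$,
\begin{equation*}
\widetilde{L}^{-\{\tau\}} = \frac{\sin(\pi\{\tau\})}{\pi} \int_0^\infty z^{-\{\tau\}} (zI + \widetilde{L})^{-1}\, dz,
\end{equation*}
and the same formula on $\widetilde{V}_\meshwidth$ for $\widetilde{L}_\meshwidth^{-\{\tau\}}$. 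After algebraic manipulation and the commutation facts of Lemma~\ref{lemma:proj}, this yields
\begin{equation*}
\widetilde{L}^{-\tau} - \widetilde{L}_\meshwidth^{-\tau}\widetilde{\Pi}_\meshwidth
= \frac{\sin(\pi\{\tau\})}{\pi}\int_0^\infty z^{-\{\tau\}}
\bigl[(zI+\widetilde{L})^{-1}\widetilde{L}^{-\lfloor\tau\rfloor}
- (zI+\widetilde{L}_\meshwidth)^{-1}\widetilde{L}_\meshwidth^{-\lfloor\tau\rfloor}\widetilde{\Pi}_\meshwidth \bigr]\, dz,
\end{equation*}
so the fractional error is a weighted integral of shifted non-fractional errors.

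The integrand would be controlled by $z$-weighted versions of the elliptic estimate above,
\begin{equation*}
\bigl\|(zI+\widetilde{L})^{-1} - (zI+\widetilde{L}_\meshwidth)^{-1}\widetilde{\Pi}_\meshwidth\bigr\|_{\mathcal{L}(\dot{H}_{\widetilde{L}}^{\gamma},\dot{H}_{\widetilde{L}}^{\sigma})} \lesssim h^{a}\,(1+z)^{-b},
\end{equation*}
where the exponents $(a,b)$ are dictated by how much smoothing one extracts from the shifted resolvent. Splitting the integral at $z \asymp h^{-2}$ and using the low-$z$ FEM bound together with the high-$z$ uniform stability of both resolvents on $H$, the integrations in $z$ balance out to the rate $h^{2\tau+\gamma-\sigma-\varepsilon}$, with the arbitrarily small $\varepsilon$ absorbing a logarithmic factor at the crossover $z \asymp h^{-2}$. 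The ceilings $2-\sigma$, $2+\gamma$ and $2$ in the minimum are produced respectively by the maximal order of piecewise linear FEM approximation, the regularity available from the source space $\dot{H}_{\widetilde{L}}^{\gamma}$, and the $H^2$ regularity ceiling of $\widetilde{L}^{-1}$ on the polytope; they cap the otherwise optimal interpolation rate.

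The main obstacles are twofold. First, the bookkeeping between the continuous and discrete fractional scales: transferring the elliptic estimate from the $\dot{H}_{\widetilde{L}}^\gamma$ norm to the $\dot{H}_{\widetilde{L}_\meshwidth}^\gamma$ seminorm on $\widetilde{V}_\meshwidth$ (and back) is exactly where Lemma~\ref{lemma:bound} enters, and the hypotheses $2\tau-\sigma \geq 0$ and $2\tau+\gamma-\sigma>0$ are precisely what is required so that the Balakrishnan integral in the combined $\tau$ range converges at both endpoints after the splitting. Second, for negative $\gamma$ the object $(zI+\widetilde{L}_\meshwidth)^{-1}\widetilde{L}_\meshwidth^{-\lfloor\tau\rfloor}\widetilde{\Pi}_\meshwidth f$ must be interpreted via duality/extension; I expect this to go through by density of $H$ in $\dot{H}_{\widetilde{L}}^\gamma$ once the bounds are made uniform.
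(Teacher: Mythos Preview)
Your proposal is correct and follows essentially the same route as the paper: the paper's proof is a one-line pointer stating that, given Lemma~\ref{lemma:interp} and Lemma~\ref{lemma:bound}, the result follows from the argument in \cite[Theorem~1]{Cox2020}, and what you have sketched (non-fractional Galerkin/Aubin--Nitsche estimate plus the Balakrishnan integral representation with a $z\asymp h^{-2}$ splitting) is precisely that argument transferred to the mean-zero operator $\widetilde{L}$ on $H$. The only cosmetic difference is the reference you cite (BKK2020/BSZ2023 rather than Cox2020), but the machinery is the same.
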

\begin{proof}
	By Lemmata~\ref{lemma:interp} and \ref{lemma:bound}, the result follows from the same argument as in 
	the proof of \cite[Theorem 1]{Cox2020}. We omit the details for brevity. 
\end{proof}

We are now ready to prove the rate of the FEM approximation.
\begin{proof}[Proof of Proposition \ref{cov_fem_approx_rate}]
	First, note that since $\varrho^{\alpha,\beta},\varrho_{\meshwidth}^{\alpha,\beta} \in H\times H$, which is equipped with the $L_2(\cD\times \cD)$ norm, we have that 
	$$
	\|\varrho^{\alpha,\beta} - \varrho_{\meshwidth}^{\alpha,\beta}\|_{L_2(\mathcal{D}\times\mathcal{D})} = \|\varrho^{\alpha,\beta} - \varrho_{\meshwidth}^{\alpha,\beta}\|_{L_2(H\times H)}.
	$$
	We will split the proof in two cases depending on the values of $\alpha$ and $\beta$. 
	
		{\bf Case 1 ($\beta <1/2 + d/4$ or $\alpha> 1/2$):}
	Let us begin by introducing the covariance function $\hat{\varrho}^{\alpha,\beta}$ corresponding to the covariance operator $\widetilde{L}_{\meshwidth}^{-\beta}\widetilde{\Pi}_{\meshwidth}L^{-\alpha}$ and use the triangle inequality to partition the error as 
	$$
	\|\varrho^{\alpha,\beta} - \varrho_{\meshwidth}^{\alpha,\beta}\|_{L_2(H\times H)} \leq 
	\|\varrho^{\alpha,\beta} - \hat{\varrho}^{\alpha,\beta}\|_{L_2(H\times H)} + \|\hat{\varrho}^{\alpha,\beta} - \varrho_{\meshwidth}^{\alpha,\beta}\|_{L_2(H\times H)} := \text{(I)} + \text{(II)}
	$$
	We now bound these two terms separately, starting with (I). For that, fix $\eps>0$ and 
	let $\gamma = \min\{2\alpha-d/2-\eps,2\}$. Then, by Lemma~\ref{lemma:fem}
	\begin{align*}
		\text{(I)} &= \|\widetilde{L}^{-\beta}L^{-\alpha} - \widetilde{L}_{\meshwidth}^{-\beta}\widetilde{\Pi}_{\meshwidth}L^{-\alpha}\|_{\mathcal{L}_2(H)} \leq
		\|(\widetilde{L}^{-\beta} - \widetilde{L}_{\meshwidth}^{-\beta}\widetilde{\Pi}_{\meshwidth})\widetilde{L}^{-\gamma/2}\|_{\mathcal{L}(H)}
		\|\widetilde{L}^{\gamma/2}L^{-\alpha}\|_{\mathcal{L}_2(H)}\\
		&\lesssim h^{\min\{2\alpha + 2\beta - d/2 -\eps,2\}}\|\widetilde{L}^{\gamma/2}L^{-\alpha}\|_{\mathcal{L}_2(H)}.
	\end{align*}
	Due to the Weyl asymptotics of the eigenvalues, we have that 
	$$
	\|\widetilde{L}^{\gamma/2}L^{-\alpha}\|_{\mathcal{L}_2(H)}^2
	= \sum_{i\in\bbN}\|\widetilde{L}^{\gamma/2}L^{-\alpha} e_j\|_{H}^2 = 
	\sum_{i>1}\lambda_j^{\gamma - 2\alpha} \leq C \sum_{i\in\bbN}j^{2(\gamma - 2\alpha)/d} < \infty,
	$$
	since $2(\gamma - 2\alpha)/d < -1$.
	For the second term, choose $\gamma = \min\{2\beta - d/2 - \eps,2 - 2\alpha\}$. Then 
	\begin{align*}
		\text{(II)} &= \|\widetilde{L}_{\meshwidth}^{-\beta}\widetilde{\Pi}_{\meshwidth}L^{-\alpha}-\widetilde{L}_{\meshwidth}^{-\beta}\widetilde{\Pi}_{\meshwidth}L_{\meshwidth}^{-\alpha}\Pi_{\meshwidth}\|_{\mathcal{L}_2(H)} 
		= 
		\|\widetilde{L}_{\meshwidth}^{-\beta}\widetilde{\Pi}_{\meshwidth}(L^{-\alpha}-L_{\meshwidth}^{-\alpha}\Pi_{\meshwidth})\|_{\mathcal{L}_2(H)} 
		\\
		&\leq 
		\|\widetilde{L}_{\meshwidth}^{-\beta}\widetilde{\Pi}_{\meshwidth}\|_{\mathcal{L}_2(H,\dot{H}_{\widetilde{L}}^{\gamma})} 
		\|L^{-\alpha}-L_{\meshwidth}^{-\alpha}\Pi_{\meshwidth}\|_{\mathcal{L}(\dot{H}_{\widetilde{L}}^{\gamma}, H)}.
	\end{align*}
	By Lemma~\ref{lemma:proj} and \cite[Proposition 1]{BSZ2023}, we have 
	\begin{align*}
	\|L^{-\alpha}-L_{\meshwidth}^{-\alpha}\Pi_{\meshwidth}\|_{\mathcal{L}(\dot{H}_{\widetilde{L}}^{\gamma}, H)} &\lesssim \|L^{-\alpha}-L_{\meshwidth}^{-\alpha}\Pi_{\meshwidth}\|_{\mathcal{L}(\dot{H}_{L}^{\gamma}, L_2(\cD))} \\
	&\lesssim h^{\min\{2\alpha + \gamma - \eps, 2\}} = h^{\min\{2\alpha + 2\beta - d/2 - \eps, 2\}}.
	\end{align*}
	Thus, it remains to show that $\|\widetilde{L}_{\meshwidth}^{-\beta}\widetilde{\Pi}_{\meshwidth}\|_{\mathcal{L}_2(H,\dot{H}_{\widetilde{L}}^{\gamma})} = 
	\|\widetilde{L}^{\gamma/2}\widetilde{L}_{\meshwidth}^{-\beta}\widetilde{\Pi}_{\meshwidth}\|_{\mathcal{L}_2(H)}$ is bounded. We split this into four different cases:
	
	Case 1.A: $\beta < d/4$. In this case, $\gamma < 0$ and we define $\beta^* = \gamma/2+d/4-\beta-\eps \in (0,1)$. Then
	$$
	\|\widetilde{L}^{\gamma/2}\widetilde{L}_{\meshwidth}^{-\beta}\widetilde{\Pi}_{\meshwidth}\|_{\mathcal{L}_2(H)} \leq 
	\|\widetilde{L}^{\gamma/2+\beta^*}\|_{\mathcal{L}(H)}
	\|\widetilde{L}^{-\beta^*}\widetilde{L}_{\meshwidth}^{\beta^*}\widetilde{\Pi}_{\meshwidth}\|_{\mathcal{L}(H)}
	\|\widetilde{L}_{\meshwidth}^{-d/4}\widetilde{\Pi}_{\meshwidth}\|_{\mathcal{L}_2(H)}.
	$$	
	Here, the first term on the left hand side is bounded since $\gamma/2 + \beta^* < 0$, the third term is bounded because of the Weyl asymptotics of the eigenvalues, and the second term is bounded by Lemma~\ref{lemma:bound} since $\beta^* \in (0, 1)$.
	
	Case 1.B: $d/4 \leq \beta < 1/2 + d/4$ and $\alpha < 1$. In this case, $\gamma \in (0,1)$ and we bound
	$$
	\|\widetilde{L}^{\gamma/2}\widetilde{L}_{\meshwidth}^{-\beta}\widetilde{\Pi}_{\meshwidth}\|_{\mathcal{L}_2(H)} \leq 
	\|\widetilde{L}^{\gamma/2}\widetilde{L}_{\meshwidth}^{-\gamma/2}\widetilde{\Pi}_{\meshwidth}\|_{\mathcal{L}(H)}
	\|\widetilde{L}_{\meshwidth}^{-(\beta - \gamma/2)}\widetilde{\Pi}_{\meshwidth}\|_{\mathcal{L}_2(H)}.
	$$	
	Here, the first term is bounded by Lemma~\ref{lemma:bound} since $\gamma/2\in(0,1/2)$ and the second  term is bounded because of the Weyl asymptotics of the eigenvalues and the fact that $\beta - \gamma/2 > d/4$. 
	
	Case 1.C: $\beta > d/4$ and $\alpha > 1$. In this case, $\gamma < 0$ and we bound
	$$
	\|\widetilde{L}^{\gamma/2}\widetilde{L}_{\meshwidth}^{-\beta}\widetilde{\Pi}_{\meshwidth}\|_{\mathcal{L}_2(H)} \leq 
	\|\widetilde{L}^{\gamma/2}\|_{\mathcal{L}(H)}\|\widetilde{L}_{\meshwidth}^{-(\beta-d/4-\eps)}\widetilde{\Pi}_{\meshwidth}\|_{\mathcal{L}(H)}
	\|\widetilde{L}_{\meshwidth}^{-d/4-\eps}\widetilde{\Pi}_{\meshwidth}\|_{\mathcal{L}_2(H)}.
$$		
	Here, the first two terms are bounded because $\gamma <0 $ and $\beta - d/4 - \eps > 0$, whereas the second term again is bounded because of the Weyl asymptotics of the eigenvalues.
	
	Case 1.D: $\beta > 1/2 + d/4$ and $\alpha < 1$. Here, we have $\gamma \in (0,1)$ and bound 
		$$
	\|\widetilde{L}^{\gamma/2}\widetilde{L}_{\meshwidth}^{-\beta}\widetilde{\Pi}_{\meshwidth}\|_{\mathcal{L}_2(H)} \leq 
	\|\widetilde{L}^{\gamma/2}\widetilde{L}_{\meshwidth}^{-\gamma/2}\widetilde{\Pi}_{\meshwidth}\|_{\mathcal{L}(H)}
	\|\widetilde{L}_{\meshwidth}^{\gamma/2 - \beta + d/4 - \eps }\widetilde{\Pi}_{\meshwidth}\|_{\mathcal{L}(H)}\|\widetilde{L}_{\meshwidth}^{-d/4+\eps}\widetilde{\Pi}_{\meshwidth}\|_{\mathcal{L}_2(H)}.
	$$		
	Here, the first term is bounded by Lemma~\ref{lemma:bound} since $\gamma < 1$, the second is bounded because $\gamma/2 - \beta + d/4 - \eps < 0$, and the third is bounded  because of the Weyl asymptotics of the eigenvalues.

	{\bf Case 2 ($\alpha \leq 1/2$ and $\beta \geq 1/2 + d/4$):}
	The proof for this case is essentially the same as that of the first case, except that we split the error differently. We introduce the covariance function $\tilde{\varrho}^{\alpha,\beta}$ corresponding to the covariance operator $\widetilde{L}^{-\beta}L_{\meshwidth}^{-\alpha}\Pi_{\meshwidth}$ and use the triangle inequality to partition the error as 
	$$
	\|\varrho^{\alpha,\beta} - \varrho_{\meshwidth}^{\alpha,\beta}\|_{L_2(H\times H)} \leq 
	\|\varrho^{\alpha,\beta} - \widetilde{\varrho}^{\alpha,\beta}\|_{L_2(H\times H)} + \|\widetilde{\varrho}^{\alpha,\beta} - \varrho_{\meshwidth}^{\alpha,\beta}\|_{L_2(H\times H)} := \text{(I')} + \text{(II')}
	$$
	Fix $\eps >0$ and let $\gamma = \min\{2\beta-d/2-\eps,2\}$ Then 
		\begin{align*}
		\text{(I')} &= \|\widetilde{L}^{-\beta}L^{-\alpha} - \widetilde{L}^{-\beta}L_{\meshwidth}^{-\alpha}\Pi_{\meshwidth}\|_{\mathcal{L}_2(H)} = 
		\|\widetilde{L}^{-\beta}(L^{-\alpha} - L_{\meshwidth}^{-\alpha}\Pi_{\meshwidth})\|_{\mathcal{L}_2(H)} \\
		&\leq
		\|(L^{-\alpha} - L_{\meshwidth}^{-\alpha}\Pi_{\meshwidth})L^{-\gamma/2}\|_{\mathcal{L}(H)}
		\|L^{\gamma/2}\widetilde{L}^{-\beta}\|_{\mathcal{L}_2(H)}\\
		&\lesssim h^{\min\{2\alpha + 2\beta - d/2 -\eps,2\}}\|\widetilde{L}^{\gamma/2}L^{-\beta}\|_{\mathcal{L}_2(H)},
	\end{align*}
where $\|\widetilde{L}^{\gamma/2}L^{-\beta}\|_{\mathcal{L}_2(H)}$ is bounded due to the Weyl asymptotics of the eigenvalues. 
For the second term, let  $\gamma = \min\{2\alpha - d/2 - \eps,2 - 2\beta\}$. Then
\begin{align*}
	(II') &= \|\widetilde{L}^{-\beta}L_{\meshwidth}^{-\alpha}\Pi_{\meshwidth} - \widetilde{L}_{\meshwidth}^{-\beta}L_{\meshwidth}^{-\alpha}\Pi_{\meshwidth}\|_{\mathcal{L}_2(H)} = 
	\|(\widetilde{L}^{-\beta} - \widetilde{L}_{\meshwidth}^{-\beta})L_{\meshwidth}^{-\alpha}\Pi_{\meshwidth}\|_{\mathcal{L}_2(H)} \\ 
	&\leq 
	\|(\widetilde{L}^{-\beta} - \widetilde{L}_{\meshwidth}^{-\beta})\widetilde{L}^{-\gamma/2}\|_{\mathcal{L}(H)}
	\|\widetilde{L}^{\gamma/2}L_{\meshwidth}^{-\alpha}\Pi_{\meshwidth}\|_{\mathcal{L}_2(H)} \\
	&\lesssim h^{\min\{2\beta + 2\alpha - d/2 - \eps, 2\}}\|\widetilde{L}^{\gamma/2}L_{\meshwidth}^{-\alpha}\Pi_{\meshwidth}\|_{\mathcal{L}_2(H)}.
\end{align*}
Thus, the final step is to show that $\|\widetilde{L}^{\gamma/2}L_{\meshwidth}^{-\alpha}\Pi_{\meshwidth}\|_{\mathcal{L}_2(H)}$ is bounded. First, note that we by Lemma~\ref{lemma:embedding} have that 
$\|\widetilde{L}^{\gamma/2}L_{\meshwidth}^{-\alpha}\Pi_{\meshwidth}\|_{\mathcal{L}_2(H)} \lesssim \|L^{\gamma/2}L_{\meshwidth}^{-\alpha}\Pi_{\meshwidth}\|_{\mathcal{L}_2(H)}$. We now bound 
$$
\|L^{\gamma/2}L_{\meshwidth}^{-\alpha}\Pi_{\meshwidth}\|_{\mathcal{L}_2(H)} \leq 
\|L^{\gamma/2-\alpha}\|_{\mathcal{L}_2(H)}
\|L^{\alpha}L_{\meshwidth}^{-\alpha}\Pi_{\meshwidth}\|_{\mathcal{L}(H)}.
$$
Here, the first term is bounded because of the Weyl asymptotics of the eigenvalues and the fact that $\gamma/2 - \alpha < d/4$, and the second term is bounded by \cite[Lemma 7]{Cox2020} since $\alpha \leq1/2$.
\end{proof}

Finally, we can prove the main theorem. 

\begin{proof}[Proof of Theorem \ref{thm:coverror}]
	By the triangle inequality, we have that 
	\begin{align}
	\|\varrho_{\meshwidth,m}^\beta - \varrho^\beta\|_{L_2(\mathcal{D}\times\mathcal{D})} &\leq 	\|\varrho^{\alpha,\beta} - \varrho_{\meshwidth}^{\alpha,\beta}\|_{L_2(\mathcal{D}\times\mathcal{D})}  + 
		\|\varrho_{\meshwidth}^{\alpha,\beta} - \varrho_{\meshwidth,m,\widetilde{m}}^{\alpha,\beta}\|_{L_2(\cD \times \cD)} \\
		&= \text{(I)} + \|\varrho_{\meshwidth}^{\alpha,\beta} - \varrho_{\meshwidth,m,\widetilde{m}}^{\alpha,\beta}\|_{L_2(\cD \times \cD)}.
	\end{align}
	The term (I) here is bounded by Proposition~\ref{cov_fem_approx_rate}, as 
	\begin{equation}\label{eq:term1}
		(I) \lesssim h^{\min\{2\alpha + 2\beta-d/2 -\varepsilon,2\}}.
	\end{equation}
	Thus, we only need to handle the second. For that, we have that 
	\begin{align*}
		\|\varrho_{\meshwidth}^{\alpha,\beta} - \varrho_{\meshwidth,m,\widetilde{m}}^{\alpha,\beta}\|_{L_2(\cD \times \cD)}^2 &= \|\widetilde{L}_{\meshwidth}^{-\beta} \widetilde{\Pi}_{\meshwidth} L_{\meshwidth}^{-\alpha}\Pi_{\meshwidth} - \widetilde{L}_{\meshwidth,\widetilde{m}}^{\beta} \widetilde{\Pi}_{\meshwidth} L_{\meshwidth,m}^{\alpha}\Pi_{\meshwidth}\|_{\mathcal{L}_2(L_2(\mathcal{D}))}^2 \\
		&\leq 
		\|\widetilde{L}_{\meshwidth}^{-\beta}\widetilde{\Pi}_{\meshwidth} L_{\meshwidth}^{-\alpha}\Pi_{\meshwidth} - \widetilde{L}_{\meshwidth}^{-\beta}\widetilde{\Pi}_{\meshwidth} L_{\meshwidth,m}^{\alpha}\Pi_{\meshwidth}\|_{\mathcal{L}_2(L_2(\mathcal{D}))}^2 \\
		&\quad+
		\| \widetilde{L}_{\meshwidth}^{-\beta} \widetilde{\Pi}_{\meshwidth} L_{\meshwidth,m}^{\alpha}\Pi_{\meshwidth} - \widetilde{L}_{\meshwidth,\widetilde{m}}^{\beta}\widetilde{\Pi}_{\meshwidth} L_{\meshwidth,m}^{\alpha}\Pi_{\meshwidth}\|_{\mathcal{L}_2(L_2(\mathcal{D}))}^2\\
		&:= \text{(II)} + \text{(III)}.
	\end{align*}
	We now bound these two terms separately.
	To simplify the analysis, we without loss of generality normalize $L$ and $\widetilde{L}$ so that $\lambda_1, \widetilde{\lambda}_1 \geq 1$. This is done by writing the covariance operator as $c_1^{\beta} c_2^{\alpha} (\widetilde{L}/c_1)^{-\beta}(L/c_2)^{-\alpha}$ for some constants $c_1,c_2>0$ and then using $\widetilde{L}/c_1$ as $\widetilde{L}$ and $L/c_2$ as $L$.
	We let $\{\lambda_{\meshwidth,m}\}_{m=1}^{n_{\meshwidth}}$ denote the eigenvalues of $L_{\meshwidth}$, and let 
	$\{\widetilde{\lambda}_{\meshwidth,m}\}_{m=1}^{n_{\meshwidth}-1}$ be the eigenvalues of   $\widetilde{L}_{\meshwidth}$, where $\widetilde{\lambda}_{\meshwidth,m} = \lambda_{\meshwidth,m+1}-\kappa^2 > 0$.
	By \cite[Proposition B.2, item 2]{BSZ2023}, we have that $J_{\meshwidth} \subset J$, where $J_{\meshwidth} = [\lambda_{n_{\meshwidth},\meshwidth}^{-1}, \lambda_{1,\meshwidth}^{-1}]$ and $J = [0,\lambda_1^{-1}]$, since $\lambda_1$ is the smallest eigenvalue of $L$. 
	Thus, $J_{\meshwidth} \subset J \subset [0,1]$.
	By the normalization of the operators, \cite[Proposition B.2, item 2]{BSZ2023}, and the fact that $\widetilde{\lambda}_{\meshwidth,m} = \lambda_{\meshwidth,m}-\kappa^2 $, we also have that $\widetilde{J}_{\meshwidth} = [\widetilde{\lambda}_{n_{\meshwidth},\meshwidth}^{-1}, \widetilde{\lambda}_{1,\meshwidth}^{-1}] \subset \widetilde{J} = [0,\widetilde{\lambda}_1^{-1}] \subset [0,1]$.   
	
	Now, let $f_\alpha(x) = x^{\alpha}$ and $\hat{f}_\alpha(x) = x^{\{\alpha\}}$, so that $f_\alpha(x) = x^{\lfloor \alpha\rfloor}\hat{f}_\alpha(x)$. Let $\hat{r}_{\alpha,J}(x) = \frac{p_\alpha(x)}{q_\alpha(x)}$ be the $L_{\infty}$-best approximation of $\hat{f}_\alpha(x)$ on an interval $J \subset [0,1]$, and define the function $r_{\alpha,J}(x) = x^{\lfloor \alpha\rfloor}\hat{r}_{\alpha,J}(x)$. Then, we have the following bound for (II):
	\begin{align*}
		\text{(II)}
		&=
		\sum_{j = 1}^{n_{\meshwidth}} \|\widetilde{L}_{\meshwidth}^{-\beta}\widetilde{\Pi}_{\meshwidth} L_{\meshwidth}^{-\alpha} e_{j,\meshwidth} -\widetilde{L}_{\meshwidth}^{-\beta}\widetilde{\Pi}_{\meshwidth} L_{\meshwidth,m}^{\alpha} e_{j,\meshwidth} \|_{L_2(\mathcal{D})}^2 
		=
		\sum_{j = 2}^{n_{\meshwidth}} (\widetilde{\lambda}_{j-1,\meshwidth}^{-\beta}(\lambda_{j,\meshwidth}^{-\alpha}-r_{\alpha,J_{\meshwidth}}(\lambda_{j,\meshwidth}^{-1})))^2
		\nonumber \\
		&\leq
		n_{\meshwidth} \widetilde{\lambda}_{1,\meshwidth}^{-\alpha}\max\limits_{2 \leq j \leq n_{\meshwidth}} \lvert \lambda_{j,\meshwidth}^{-\beta}-r_{\alpha,J_{\meshwidth}}(\lambda_{j,\meshwidth}^{-1}) \rvert^2 
		\leq
		n_{\meshwidth} \max\limits_{2 \leq j \leq n_{\meshwidth}} \lvert \lambda_{j,\meshwidth}^{-\beta}-r_{\alpha,J_{\meshwidth}}(\lambda_{j,\meshwidth}^{-1}) \rvert^2.
	\end{align*}
	By  \cite[Theorem 1]{Stahl2003}, and the fact that $x^{\lfloor \alpha \rfloor} \leq 1$ on $J_{\meshwidth}$, we have:
	\begin{align*}
		\max\limits_{1 \leq j \leq n_{\meshwidth}} \lvert \lambda_{j,\meshwidth}^{-\alpha}-r_{\alpha,J_{\meshwidth}}(\lambda_{j,\meshwidth}^{-1}) \rvert
		&\leq 
		\sup\limits_{x \in J_{\meshwidth}} \lvert f_\alpha(x)-r_{\alpha,J_{\meshwidth}}(x)\rvert \\
		&\leq 
		\sup\limits_{x \in [0,1]} \lvert \hat{f}(x)-\hat{r}_{\alpha,J_{\meshwidth}}(x)\rvert
		\lesssim 
		e^{-2\pi\sqrt{\{\alpha\}  m}}.
	\end{align*}
	Thus, in total we get the bound  $(II) \lesssim n_{\meshwidth}^{1/2}e^{-2\pi\sqrt{\{\alpha\}  m}}$ and by \cite[Proposition B.2, item 3]{BSZ2023}, we obtain
	$n_{\meshwidth}^{1/2}e^{-2\pi\sqrt{\{\alpha\}  m}} \lesssim h^{-d/2}e^{-2\pi\sqrt{\{\alpha\}  m}}$. 	This source of error only occurs if we need the rational approximation, i.e., if $\alpha \notin \mathbb{N}$, so in total we have the bound 
	\begin{equation}\label{eq:term2}
	\text{(II)} \lesssim 1_{\alpha \notin \mathbb{N}}h^{-d/2}e^{-2\pi\sqrt{\{\alpha\}  m}}
	\end{equation}
	
	The term (III) is handled similarly. Again using  \cite[Proposition B.1, item 3]{BSZ2023} and \cite[Theorem~1]{Stahl2003},
	we have the bound
		\begin{align}\label{eq:term3}
		\text{(III)}
		&=
		\sum_{j = 1}^{n_{\meshwidth}} \|
		\widetilde{L}_{\meshwidth}^{-\beta}\widetilde{\Pi}_{\meshwidth} L_{\meshwidth,m}^{\alpha}e_{j,\meshwidth} - \widetilde{L}_{\meshwidth,\widetilde{m}}^{\beta}\widetilde{\Pi}_{\meshwidth} L_{\meshwidth,m}^{\alpha}e_{j,\meshwidth} \|_{L_2(\mathcal{D})}^2 \nonumber \\
		&=
		\sum_{j = 2}^{n_{\meshwidth}} (r_{\alpha,J_{\meshwidth}}(\lambda_{j,\meshwidth})(\widetilde{\lambda}_{j,\meshwidth}^{-\beta}-r_{\beta,\widetilde{J}_{\meshwidth}}(\widetilde{\lambda}_{j,\meshwidth}^{-1})))^2
		\leq
		2 n_{\meshwidth} \max\limits_{1 \leq j \leq n_{\meshwidth}} \lvert \widetilde{\lambda}_{j,\meshwidth}^{-\beta}-r_{\meshwidth}(\widetilde{\lambda}_{j,\meshwidth}^{-1}) \rvert^2 \nonumber \\
		& \lesssim 1_{\beta \notin \mathbb{N}}h^{-d/2}e^{-2\pi\sqrt{\{\beta\}  m}},
	\end{align}
	where we in the third step used that 
	$$
	r_{\alpha,J_{\meshwidth}}(\lambda_{j,\meshwidth}) \leq \sup\limits_{x \in [0,1]} \lvert r_{\alpha,J_{\meshwidth}}(x)\rvert 
	\lesssim 1+e^{-2\pi\sqrt{\{\alpha\}  m}} \leq 2.
	$$
	Now, combining \eqref{eq:term1}, \eqref{eq:term2} and \eqref{eq:term3} gives the desired bound.
\end{proof}

Theorem~\ref{cov_fem_approx_rate_general} is now obtained as a simple corollary of this result:

\begin{proof}[Proof of Theorem~\ref{cov_fem_approx_rate_general}]
	By the definitions of the variograms in terms of the corresponding covariance functions, we obtain that 
	$$
	\|\gamma^{\alpha,\beta} - \gamma_{\meshwidth,m,\widetilde{m}}^{\alpha,\beta}\|_{L_2(\cD \times \cD)}^2 \leq (2|\cD| + 4)\|\varrho^{\alpha,\beta} - \varrho_{\meshwidth,m,\widetilde{m}}^{\alpha,\beta}\|_{L_2(\cD) \times L_2(\cD)}^2.
	$$
	The result then follows by Theorem~\ref{thm:coverror}.
\end{proof}

\section{A link between proper and intrinsic GMRFs}\label{sec:Prop}

Any proper GMRF with (positive definite) precision $Q$ has a well defined variogram matrix $\Gamma$, and this induces a well defined first-order intrinsic precision matrix $\Theta$ (also referred to as a \HR{} precision matrix). 
The precision $Q$ may be the sparse approximation of a proper stationary field, such as a Mat\'ern field.
If we recall that (1) extremal independence between sites $i$ and $j$ is equivalent to $\Gamma_{ij}=\infty$ and (2) the variogram of a stationary field is uniformly bounded, then we see that two locations that are infinitely far from each other will not be independent in the corresponding H\"{u}sler--Reiss (or Brown--Resnick) process. This limits the applicability of these classes of extreme value processes in practice.

Nonetheless, the next theorem describes the mapping from $Q$ to its corresponding H\"{u}sler--Reiss precision matrix $\Theta$.
\begin{proposition}\label{thm:GR}
If $Q$ is the precision of a proper GMRF then the corresponding H\"{u}sler--Reiss precision matrix $\Theta$ is
\[
\Theta = Q- \frac{Q \mathbf{1} \mathbf{1}^{\text{T}}Q}{\mathbf{1}^{\text{T}}Q \mathbf{1}},
\]
which can be expressed as 
\begin{equation}\label{eqn:CR}
\Theta_{ij} = Q_{ij} - \frac{\left(\sum_{k=1}^d Q_{ik} \right)\left(\sum_{k=1}^d Q_{jk} \right)}{ \sum_{i,j=1}^{d} Q_{ij}}, \quad \text{ for all } i,j \in V .
\end{equation}
\end{proposition}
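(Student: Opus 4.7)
The plan is to verify two facts: (i) the proposed $\Theta$ is a valid first-order intrinsic precision matrix in the sense of Definition~\ref{def:FOI}, and (ii) the variogram matrix of the intrinsic GMRF with precision $\Theta$ coincides with that of the proper GMRF with precision $Q$. Since \citet[Proposition~3.3]{hentschel2024statistical} guarantees that the intrinsic precision is uniquely determined by its variogram matrix, (i) and (ii) together will identify $\Theta$ as the unique H\"usler--Reiss precision associated with $Q$. The componentwise formula~\eqref{eqn:CR} will follow as the entrywise restatement of the matrix identity, using $(Q\mathbf{1})_i = \sum_k Q_{ik}$ and $\mathbf{1}^\top Q \mathbf{1} = \sum_{i,j} Q_{ij}$.

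For (i), symmetry is immediate from that of $Q$, and setting $\mathbf{a} = Q\mathbf{1}$ and $c = \mathbf{1}^\top Q \mathbf{1}$ gives $\Theta\mathbf{1} = \mathbf{a} - \mathbf{a}(\mathbf{a}^\top\mathbf{1})/c = \mathbf{0}$. To establish semi-definiteness and the rank, I would write $\Theta = Q^{1/2}\bigl(I - \mathbf{b}\mathbf{b}^\top/\|\mathbf{b}\|^2\bigr)Q^{1/2}$ with $\mathbf{b} = Q^{1/2}\mathbf{1}$, exhibiting $\Theta$ as a conjugation of the rank-$(k-1)$ orthogonal projection $I - \mathbf{b}\mathbf{b}^\top/\|\mathbf{b}\|^2$ by the invertible $Q^{1/2}$.

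For (ii), I would fix an arbitrary index $k$ and apply Lemma~\ref{lem:Eqv}: if $\mathbf{W}$ is the intrinsic GMRF with precision $\Theta$, then the difference vector $(W_i - W_k)_{i\neq k}$ is a proper centered Gaussian with precision $\Theta^{(k)}$. The strategy is then to compute the analogous distribution of $(Z_i - Z_k)_{i\neq k}$ under $\mathbf{Z}\sim N(\mathbf{0},Q^{-1})$ and check that the two match. Changing variables to $u = Z_k$ and $\mathbf{v}_{-k} = (Z_i - Z_k)_{i\neq k}$, so that $\mathbf{z} = u\mathbf{1} + \tilde{\mathbf{v}}$ with $\tilde v_k = 0$, the quadratic form expands as $\mathbf{z}^\top Q \mathbf{z} = u^2 c + 2u (Q\mathbf{1})_{-k}^\top\mathbf{v}_{-k} + \mathbf{v}_{-k}^\top Q^{(k)}\mathbf{v}_{-k}$. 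Completing the square in $u$ and integrating it out yields the marginal precision $Q^{(k)} - (Q\mathbf{1})_{-k}(Q\mathbf{1})_{-k}^\top/c$ for $\mathbf{v}_{-k}$. Reading off the entries of the proposed $\Theta$ gives $\Theta^{(k)} = Q^{(k)} - (Q\mathbf{1})_{-k}(Q\mathbf{1})_{-k}^\top/c$, matching exactly, so the two Gaussian vectors share the same law, and the identity $\mathrm{Var}(W_i - W_j) = \mathrm{Var}((W_i - W_k)-(W_j - W_k)) = \mathrm{Var}(Z_i - Z_j)$ yields equality of variograms.

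The hard part is really identifying the right reduction via Lemma~\ref{lem:Eqv}; once that is in hand, the matrix bookkeeping to match the completing-the-square output with $\Theta^{(k)}$ is a one-line inspection of the rank-one correction $Q\mathbf{1}\mathbf{1}^\top Q / c$, and the rest of the argument is routine.
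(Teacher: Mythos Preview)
Your argument is correct, but it differs from the paper's proof. The paper invokes a different facet of \citet[Proposition~3.3]{hentschel2024statistical}: rather than the uniqueness statement you use, it takes the limiting characterization $\Theta = \lim_{\sigma^2\to\infty}(Q^{-1}+\sigma^2\mathbf{1}\mathbf{1}^\top)^{-1}$ and simply applies the Woodbury identity to $(Q^{-1}+\sigma^2\mathbf{1}\mathbf{1}^\top)^{-1}$ before passing to the limit. This is a four-line algebraic computation with no need to verify Definition~\ref{def:FOI} or to match variograms explicitly.

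Your route is longer but more self-contained: the factorization $\Theta = Q^{1/2}(I-\mathbf{b}\mathbf{b}^\top/\|\mathbf{b}\|^2)Q^{1/2}$ is a nice way to make the rank and semidefiniteness transparent, and the completing-the-square calculation for the precision of the difference vector $(Z_i-Z_k)_{i\neq k}$ gives an independent check of the formula that does not rely on the limit representation. One small point of attribution: the statement that $(W_i-W_k)_{i\neq k}$ has precision $\Theta^{(k)}$ is really the content of~\eqref{eqn:prebh} in Appendix~\ref{App:Intrinsic_intro} (the $\mathbf{h}=\boldsymbol{\delta}_k$ case of $\mathbf{W}^{[\mathbf{h}]}$) rather than of Lemma~\ref{lem:Eqv} itself, though the two are closely linked.
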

\begin{proof}
Applying \cite[Proposition 3.3]{hentschel2024statistical} in the first step, and the Woodbury matrix identity in the second step, we have 
\begin{align*}
\Theta &= \lim_{\sigma^2 \to \infty} (Q^{-1} + \sigma^2 \mathbf{1}\mathbf{1}^\top)^{-1} 
=Q - \lim_{\sigma^2 \to \infty} \left[ \sigma^2 Q \boldsymbol{1}(1 + \sigma^2 \boldsymbol{1}^\top Q \boldsymbol{1})^{-1} \boldsymbol{1}^\top Q \right] \\
&=Q - \lim_{\sigma^2 \to \infty}  \left[ \frac{\sigma^2}{1 + \sigma^2 \boldsymbol{1}^\top Q \boldsymbol{1}} \right]Q \boldsymbol{1} \boldsymbol{1}^\top Q 
= Q - \frac{ Q \boldsymbol{1} \boldsymbol{1}^\top Q}{\boldsymbol{1}^\top Q \boldsymbol{1}},
\end{align*}
which can be equivalently represented as in \eqref{eqn:CR}.
\end{proof}

A consequence of Proposition~\ref{thm:GR} is if $Q$ is a sparse positive definite precision matrix that corresponds to an approximation of a proper Gaussian field (such as a proper Mat\'ern field), then its corresponding first-order intrinsic precision matrix $\Theta$ is generally dense. However, it also implies that $\Theta$ is a rank-1 update of $Q$, so sparse matrix methods may still be applied for efficient computation.

\section{Application: Kriging}\label{App:kriging}

The ordinary kriging estimate is given by \cite[Section 3.2]{cressie2015statistics}, 
\begin{equation}\label{lam_krig}
\hat u(\mv s_0) =\sum_{i=1}^k \lambda_i u(\mv s_i), \quad \quad \boldsymbol{\lambda} = \left( \boldsymbol{\gamma} + \boldsymbol{1} \frac{(1-\boldsymbol{1}^\top \Gamma^{-1} \boldsymbol{\gamma})}{\boldsymbol{1}^\top \Gamma^{-1} \boldsymbol{1}} \right)^\top \Gamma^{-1},
\end{equation}
where $\Gamma=(\Gamma_{ij})_{i,j=1,\dots,k}$ with $\Gamma_{ij}={\rm Var}(u(\mv s_i)-u(\mv s_j))$, and $\boldsymbol{\gamma}=(\gamma_i)_{i=1,\dots,k}$ with $\gamma_i={\rm Var}(u(\mv s_i)-u(\mv s_0))$.

The extrapolation behavior  when kriging with these fields can be understood through the following asymptotic result on the difference between estimated global mean, and the kriging estimate at a single location as it moves further away from observation locations. Because an intrinsic field does not have an overall mean, as a proxy, we use the conditional average value over large spherical region centered at the origin:
\[
\hat{\bar u}=\lim_{R \to \infty} {\rm E}(\bar u(R) \mid u(\mv s_1), \dots, u(\mv s_k)), \quad \text{where} \quad \bar u(R) =  \frac{1}{V_d(R)} \int_{0}^R \int_{\mathbb{S}^{d-1}} u(L \mv v) {\rm d}{\mv v} {\rm d}L,
\]
where $\mathbb{S}^{d-1} = \{ \mv v \in \mathbb{R}^d : \lVert \mv v \rVert=1\}$ is the surface of the dimension $d$ unit sphere and $V_d(L)=\frac{\pi^{d/2}}{\Gamma(d/2+1)}L^d$ is the volume of a dimension $d$ radius $L$ sphere.
In Lemma \ref{lem:ovmean} we show that under rather general conditions 
\begin{equation}\label{eq:ubarhat}
\hat{\bar u} = \frac{\mv {1}^\top \Gamma^{-1} \mv{u}}{\mv 1^\top \Gamma^{-1} \mv 1}.
\end{equation} 
We can now compare $\hat{\bar u}$ to the estimate $\hat u(L \mv v)$ for large $L$ (i.e., extrapolation), where $\mv v$ is a length 1 direction vector.

\begin{lemma}\label{lem:ovmean}
Let $\mathcal{D}=\mathbb{R}^d$, $\mv s_1, \dots, \mv s_d \in \mathcal{D}$, $\mv{u}=(u(\mv s_1), \dots, u(\mv s_k))$ and $\Gamma=(\gamma(\mv s_i, \mv s_j))_{i,j=1,\dots,k}$. If $\gamma(h)=\ell(h)h^{b}$ with $0 <b<2$ where $\ell(\cdot)$ is a slowly varying function, then \eqref{eq:ubarhat} holds.
\end{lemma}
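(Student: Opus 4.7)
The plan is to decompose the kriging estimator so that $\hat{\bar u}$ appears explicitly, and then show that the remaining integrated error vanishes over the growing ball by a symmetry/Taylor argument.

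Substituting the weights in~\eqref{lam_krig} into $\hat u(\mv s_0) = \boldsymbol\lambda\,\mv u$ and rearranging gives
\begin{equation*}
\hat u(\mv s_0) \;=\; \hat{\bar u} + \boldsymbol\gamma(\mv s_0)^\top \mv c, \qquad \mv c := \Gamma^{-1}\bigl(\mv u - \hat{\bar u}\,\mv 1\bigr),
\end{equation*}
where $\boldsymbol\gamma(\mv s_0)_i = \gamma(\lVert\mv s_0 - \mv s_i\rVert)$ and the definition of $\hat{\bar u}$ forces $\mv 1^\top \mv c = 0$. Interpreting $\bar u(R)$ as the standard ball average $V_d(R)^{-1}\int_{B(0,R)} u(\mv s)\rd\mv s$, linearity of the conditional expectation and Fubini give, with $\bar\gamma_i(R) := V_d(R)^{-1}\int_{B(0,R)}\gamma(\lVert\mv s - \mv s_i\rVert)\rd\mv s$,
\begin{equation*}
E\bigl(\bar u(R)\mid \mv u\bigr) - \hat{\bar u} \;=\; \sum_{i=1}^k c_i\,\bar\gamma_i(R) \;=\; \sum_{i=1}^k c_i\bigl[\bar\gamma_i(R) - \bar\gamma_0(R)\bigr],
\end{equation*}
where the second equality uses $\sum_i c_i = 0$. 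It therefore suffices to show $\bar\gamma_i(R) - \bar\gamma_0(R) \to 0$ as $R\to\infty$ for each fixed $\mv s_i$.

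Set $G(\mv z) := \int_{B(0,R)}\gamma(\lVert\mv y - \mv z\rVert)\rd\mv y$, so the quantity to bound equals $V_d(R)^{-1}[G(\mv s_i) - G(\mv 0)]$. The critical observation is that $\nabla G(\mv 0) = \mv 0$ by central symmetry of $B(0,R)$: differentiating under the integral sign (legitimate since $\gamma'(r) \asymp \ell(r) r^{b-1}$ is locally integrable for $b>0$),
\begin{equation*}
\partial_{z_j} G(\mv 0) = -\int_{B(0,R)} \gamma'(\lVert\mv y\rVert)\,y_j\lVert\mv y\rVert^{-1}\rd\mv y = 0
\end{equation*}
by oddness of the integrand. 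A second-order Taylor expansion then gives $|G(\mv s_i) - G(\mv 0)| \lesssim \lVert\mv s_i\rVert^2\sup_{\lVert\mv z\rVert\le\lVert\mv s_i\rVert}\lVert\nabla^2 G(\mv z)\rVert$, and regular variation together with Karamata's theorem yields $\lVert\nabla^2 G(\mv z)\rVert \lesssim \ell(R) R^{b+d-2}$. Dividing by $V_d(R) \asymp R^d$,
\begin{equation*}
\bar\gamma_i(R) - \bar\gamma_0(R) \;=\; O\bigl(\ell(R)\,R^{b-2}\bigr) \longrightarrow 0 \qquad (R\to\infty),
\end{equation*}
since $b<2$ and slow variation of $\ell$ ensures $\ell(R)R^{b-2} = o(R^{b-2+\eps})$ for any $\eps>0$.

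The main obstacle is regularity: when $b<1$ the derivative $\gamma'$ blows up at the origin and the naive pointwise Hessian bound $\lVert\nabla^2 G\rVert \le \int_{B(0,R)}[|\gamma''(r)|+|\gamma'(r)|/r]\rd\mv y$ can fail (its right-hand side diverges when $b+d\le 2$). To bypass smoothness entirely I would rescale via $\mv y = R\mv x$ to write
\begin{equation*}
\bar\gamma_i(R) - \bar\gamma_0(R) \;=\; \frac{1}{c_d}\int_{B(0,1)}\bigl[\gamma(R\lVert\mv x - \mv s_i/R\rVert)-\gamma(R\lVert\mv x\rVert)\bigr]\rd\mv x,
\end{equation*}
apply Potter's bounds to dominate $\gamma(Rt)/\gamma(R)$ by $C\max(t^{b+\eps},t^{b-\eps})$ (integrable on $B(0,1)$ for small $\eps$), and use the symmetry $\mv x\mapsto -\mv x$ on $B(0,1)$ to cancel the first-order term in the expansion of $R\lVert\mv x-\mv s_i/R\rVert$ around $R\lVert\mv x\rVert$. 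The resulting second-order remainder yields the same $O(\ell(R)R^{b-2})$ decay without differentiating $\gamma$ through the origin.
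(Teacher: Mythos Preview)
Your algebraic reduction in the first display is correct and arguably cleaner than the paper's direct substitution into the kriging formula; both routes reduce the problem to showing $\bar\gamma_i(R)-\bar\gamma_0(R)\to 0$ for each fixed $\mv s_i$. The difference lies in how this last estimate is obtained. The paper does not differentiate $G$ at all: it simply translates the domain of integration, writing $\int_{B(\mv 0,R)}\gamma(\lVert \mv v-\mv s_i\rVert)\rd\mv v=\int_{B(\mv s_i,R)}\gamma(\lVert \mv v-\mv s_i\rVert)\rd\mv v$ plus a correction over the symmetric difference $B(\mv 0,R)\triangle B(\mv s_i,R)$. After the shift $\mv v\mapsto \mv v+\mv s_i$ the first piece is \emph{exactly} $\bar\gamma_0(R)$, independent of $i$, and the correction lives on a thin shell of volume $O(R^{d-1})$ where $\lVert \mv v-\mv s_i\rVert\in[R-\lVert\mv s_i\rVert,\,R+\lVert\mv s_i\rVert]$; bounding it by (volume)$\times$(oscillation of $\gamma$ near $r=R$) and dividing by $V_d(R)$ gives $O(\ell(R)R^{b-2})$ in one line.

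This buys exactly what your Hessian route struggles with: the paper's correction never sees $r$ near $0$, so no integrability condition like $b+d>2$ is needed and the argument works uniformly for all $d\ge 1$ and $b\in(0,2)$. Your symmetrization patch is plausible in spirit (it is the analytic cousin of the translation trick: both exploit that the first-order term is odd), but as written it still requires controlling a second difference of $\gamma$ near the origin, which you have not carried out. If you want to complete your route, split $B(0,1)$ into $\{\lVert\mv x\rVert\le c/R\}$ (contribution $O(R^{-d})$ since $\gamma$ is bounded there) and its complement, and on the latter use a genuine second-difference bound for regularly varying functions; alternatively, just adopt the ball-translation device, which delivers the same $O(\ell(R)R^{b-2})$ with no case analysis.
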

\begin{proof}
By the formulas in \cite[Section 4.2]{cressie2015statistics}, for any finite $R$,
\[
\bar u(R) = \left( \mv \gamma(R) + \mv 1 \frac{(1-\mv 1^\top \Gamma^{-1} \mv \gamma(R)}{\mv 1^\top \Gamma^{-1}\mv 1} \right)^\top \Gamma^{-1} \mv u,
\]
where $\mv \gamma(R)$ is a vector whose entries are given below. Let $\mathbb{V}^d(\mv a, R) = \{ \mv v \in \mathbb{R}^d : \lVert \mv v - \mv a \rVert \leq R \}$ denote the sphere of radius $R$ centered at $\mv a$. We then have 
\begin{align*}
\gamma_i(R) &= \frac{1}{V_d(R)}\int_{\mathbb{V}^d(\mv 0, R)} \gamma(\lVert L \mv v - \mv s_i \rVert) {\rm d}\mv v  \\
&=\underbrace{\frac{1}{V_d(R)}\int_{\mathbb{V}^d(\mv s_i, R)} \gamma(\lVert L \mv v - \mv s_i \rVert) {\rm d}\mv v }_{(i)} \\
&\; + \underbrace{\frac{1}{V_d(R)}\left(\int_{\mathbb{V}^d(\mv 0, R) \backslash \mathbb{V}^d(\mv s_i, R)} \gamma(\lVert L \mv v - \mv s_i \rVert) {\rm d}\mv v -\int_{\mathbb{V}^d(\mv s_i, R)\backslash \mathbb{V}^d(\mv 0, R) } \gamma(\lVert L \mv v - \mv s_i \rVert) {\rm d}\mv v \right)}_{(ii)}.
\end{align*}
For \emph{(i)}, we have
\begin{align*}
(i) &= \frac{1}{V_d(R)}\int_{\mathbb{V}^d(\mv 0, R)} \gamma(\lVert L \mv v  \rVert) {\rm d}\mv v 
=\frac{1}{V_d(R)}\int_0^R \int_{\mathbb{S}^{d-1}} \gamma(\lVert L \mv v  \rVert) {\rm d}\mv v {\rm d} L \\
&=\frac{\Gamma(d/2+1)}{\pi^{d/2} R^d} \int^R_0 \frac{2 \pi^{d/2} L^{d-1}}{\Gamma(d/2)}\ell(L)L^{\beta}  {\rm d}L 
=\frac{(d/2+1) 2^{d/2} \ell(R) R^\beta}{\beta+d}.
\end{align*}
Because, when $h$ is large, $\gamma(h)$ an increasing function, we note that $(ii) \geq 0$ for $R$ sufficiently large. We now obtain the corresponding lower bound
\begin{align*}
(ii) &\leq \frac{1}{V_d(R)}| \mathbb{V}^d(\mv s_i, R)\backslash \mathbb{V}^d(\mv 0, R)| \gamma(R + \lVert \mv s_i\lVert) - | \mathbb{V}^d(\mv 0, R)\backslash \mathbb{V}^d(\mv s_i, R)|\gamma(R  - \mv s_i\lVert) \\
&= \frac{1}{V_d(R)}| \mathbb{V}^d(\mv s_i, R)\backslash \mathbb{V}^d(\mv 0, R)| (\ell(R+\lVert \mv s_i \rVert) (R+\lVert \mv s_i \rVert)^\beta - (\ell(R-\lVert \mv s_i \rVert) (R-\lVert \mv s_i \rVert)^\beta) \\
&= \frac{1}{V_d(R)} O(R^{d-1}) \times O(R^{\beta-1}) = O(R^{\beta-2}).
\end{align*}
Combining the above we then have 
\[
\mv \gamma(R) = \left( \frac{(d/2+1) 2^{d/2} \ell(R) R^\beta}{\beta+d} + O(R^{\beta-2}) \right) \mv 1, 
\]
and thus
\begin{align*}
\bar u (R)&= \left(  \frac{(d/2+1) 2^{d/2} \ell(R) R^\beta}{\beta+d}   \mv 1 + \mv 1\frac{1-\mv 1^\top \Gamma^{-1}  \frac{(d/2+1) 2^{d/2} \ell(R) R^\beta}{\beta+d}   \mv 1}{\mv 1^\top \Gamma^{-1} \mv 1}\right)^\top \Gamma^{-1} \mv u + O(R^{\beta-2}) \\
&=  \frac{\mv 1^\top \Gamma^{-1} \mv u}{\mv 1^\top \Gamma^{-1} \mv 1} + O(R^{\beta-2}),
\end{align*}
where the result then follows by noting that $\beta<2$.
\end{proof}

\noindent
\textit{Proof of Proposition \ref{prop:extrap}}
Let $\mv v \in \mathbb{R}^d$ with $\lVert \mv v \rVert=1$, and suppose $\ell(\cdot)$ is slowly varying function. First note that, for any $\mv a \in \mathbb{R}^d$ 
$\lVert L \mv v - \mv a \rVert = L - \mv a^\top \mv v + O(L^{-1})$.
Second, from the Karamatta representation
$\ell(L+c) = \ell(L)(1 + o(L^{-1}))$.
Combining these two facts we see that
\begin{equation}\label{eq:gamma_approximation}
\mv \gamma = \ell(L)L^b \mv 1 + \ell(L) L^{b-1} \mv s^{\| \mv v} + o(L^{b-1}),
\end{equation}
where $\mv s^{\| \mv v} = (s_i^{\| \mv v})_{i=1,\dots, k}$ with $s_i^{\| \mv v} = \mv v^\top \mv s_i$.
We then have
\begin{align*}
   \hat u(L \mv v) - \frac{\mv 1^\top \Gamma^{-1} \mv u}{\mv 1^\top \Gamma^{-1} \mv 1}  &= \left( \mv \gamma + \mv 1 \frac{(1- \mv 1^\top \Gamma^{-1} \mv \gamma}{\mv 1^\top \Gamma^{-1} \mv 1}  \right)^\top \Gamma^{-1} \mv u - \frac{\mv 1^\top \Gamma^{-1} \mv u}{\mv 1^\top \Gamma^{-1} \mv 1} \\
   &=\Bigg( \ell(L)L^b \mv 1 + \ell(L) L^{b-1} \mv s^{\| \mv v} \\
   &\quad + \mv 1 \frac{1- \mv 1^\top \Gamma^{-1} (\ell(L)L^b \mv 1 + \ell(L) L^{b-1} \mv s^{\| \mv v} -1)}{\mv 1^\top \Gamma^{-1} \mv 1} \Bigg)^\top \Gamma^{-1} \mv u + o(L^{b-1})\\
   &=\left( \ell(L) L^{b-1} \mv s^{\| \mv v} - \mv 1 \frac{ \mv 1^\top \Gamma^{-1} (\ell(L) L^{b-1} \mv s^{\| \mv v})}{\mv 1^\top \Gamma^{-1} \mv 1} \right)^\top \Gamma^{-1} \mv u + o(L^{b-1}) \\
   &= \ell(L) L^{b-1}(\mv s^{\| \mv v})^\top \left( I -\frac{\mv 1 \mv 1^\top \Gamma^{-1}}{\mv 1^\top \Gamma^{-1} \mv 1} \right)^\top\Gamma^{-1} \mv u + o(L^{b-1}) \\
   &= c(\mv s , \mv u, \mv v) \ell(L) L^{b-1} + o(L^{b-1}),
\end{align*}
where 
\[
c(\mv s , \mv u, \mv v) = (\mv s^{\| \mv v})^\top \left( I -\frac{\mv 1 \mv 1^\top \Gamma^{-1}}{\mv 1^\top \Gamma^{-1} \mv 1} \right)^\top\Gamma^{-1} \mv u .
\]
We now turn to the result on the extrapolation variance. There are two cases to consider: (1) when $\ell(\cdot)$ is a general slowly varying function and $b\leq 1$, and (2) when $\ell(\cdot)$ is a constant (i.e., $\ell(\cdot)\equiv \ell$) and $0<\beta<2$. We start with (1). Applying \cite[Equation (3.2.16)]{cressie2015statistics} in the first step, and collecting the leading order terms which arise from \eqref{eq:gamma_approximation} in the second step, we have
\begin{align*}
\sigma^2_k(L \mv v) &= \mv \gamma^\top \Gamma^{-1} \mv \gamma - \frac{(\mv 1^\top \Gamma^{-1}\mv \gamma - 1)^2}{\mv 1^\top \Gamma^{-1} \mv 1} \\
&= \ell(L)^2 [ L^{2b}\mv 1^\top \Gamma^{-1} \mv 1 + 2 L^{2b-1} \mv 1^\top \Gamma^{-1} \mv s] + o(L^{2b-1}) \\
&\qquad - \ell(L)^2[ \ell(L)^2 L^{2b} \mv 1^\top \Gamma^{-1} \mv 1 + \ell(L)^2 2 L^{2b-1} \mv 1^\top  \Gamma^{-1} \mv s - \ell(L) L^b ] + o(L^{2b-1}) \\
&= \ell(L) L^b + o(L^{2b-1}),
\end{align*}
where we note that when $b \leq 1$ the first term in the final equation dominates the second. In case (2), due to the more restrictive assumption on $\ell(\cdot)$ the $o(L^{b-1})$ term in \eqref{eq:gamma_approximation} can be replaced by $O(L^{b-2})$. The result then follows from similar arguments which use the fact that $\beta<2$ rather than $\beta \leq 1$.
\qed

\section{Proof of Proposition~\ref{prop:HR_cond_distr}}

\begin{proof}
Choosing any $m\in O$, by the definition of the exponent measure density in~\eqref{eqn:ExpDN} and Lemma~\ref{lem:Eqv} we obtain 
\begin{align*}
    f^{(r)}(\mv{y}' \mid \mv y ; \Theta) &= \frac{(k+k')^{-1/2} f_d((\mathbf{y}, \mv y') + \Gamma_{m \cdot} / 2 ; \Theta)}{k^{-1/2} f_d(\mathbf{y} + \Gamma_{m, O} / 2 ; \tilde \Theta)}\\
    &=  \frac{ f_{d-1}((\mathbf{y}_{\setminus m}, \mv y') - \mathbf{1} y_m + \Gamma_{m \cdot} / 2 ; \Theta^{(m)})}{ f_{d-1}(\mathbf{y}_{\setminus m} - \mathbf{1} y_m + \Gamma_{m, O} / 2 ; \tilde \Theta^{(m)})}
     \sim N_{d-1}(\mu_{U\mid O}, \Theta_{U\mid O}),
\end{align*} 
where the precision is $\Theta_{U\mid O} = \Theta_{U,U}$ and the conditional mean can be computed as 
\begin{align*}
    \mu_{U\mid O} &= y_m \mathbf{1}_U  - \Gamma_{m,U} / 2 +  \Sigma^{(m)}_{U,O\setminus\{m\}}(\Sigma^{(m)}_{O\setminus\{m\},O\setminus\{m\}})^{-1}(y_{O\setminus\{m\}} - y_m \mathbf{1}_{O\setminus\{m\}}  + \Gamma_{m,O\setminus\{m\}} / 2)\\
    & =y_m \mathbf{1}_U  - \Gamma_{m,U} / 2 -  (\Theta^{(m)}_{U,U})^{-1} \Theta^{(m)}_{U,O\setminus\{m\}}(y_{O\setminus\{m\}} - y_m \mathbf{1}_{O\setminus\{m\}}  + \Gamma_{m,O\setminus\{m\}} / 2)\\
    & =y_m \mathbf{1}_U  - \Gamma_{m,U} / 2 -  \Theta_{U,U}^{-1} \Theta_{U,O\setminus\{m\}}(y_{O\setminus\{m\}} - y_m \mathbf{1}_{O\setminus\{m\}}  + \Gamma_{m,O\setminus\{m\}} / 2)\\
    & =y_m (\mathbf{1}_U + \Theta_{U,U}^{-1} \Theta_{U,O\setminus\{m\}} \mathbf{1}_{O\setminus\{m\}}) -  \Theta_{U,U}^{-1} \Theta_{U,O\setminus\{m\}}y_{O\setminus\{m\}} -  \\
    &\Theta_{U,U}^{-1}(\Theta_{U,O} \Gamma_{m,O} / 2 + \Theta_{U,U} \Gamma_{m,U} / 2 ),
    \end{align*}
    where we used the classical formulas for conditional
distributions of multivariate normals, and the block inversion formula 
for the precision matrix.
We now note that since $\Theta$ has zero row sums, we have
\[\Theta_{U,U}^{-1} \Theta_{U,O\setminus\{m\}} \mathbf{1}_{O\setminus\{m\}} = - \Theta_{U,U}^{-1} (\Theta_{U,U} \mathbf{1}_U + \Theta_{U,m}) = - \mathbf{1}_U - \Theta_{U,U}^{-1}\Theta_{U,m}. \]
This implies that 
\begin{align*}
    \mu_{U\mid O} & =- y_m \Theta_{U,U}^{-1}\Theta_{U,m} -  \Theta_{U,U}^{-1} \Theta_{U,O\setminus\{m\}}y_{O\setminus\{m\}} - \Theta_{U,U}^{-1}\Theta_{U,\cdot} \Gamma_{m,\cdot} / 2 \\    
    & = - \Theta_{U,U}^{-1} \Theta_{U,O}y_{O} - \Theta_{U,U}^{-1}\Theta_{U,\cdot} \Gamma_{m,\cdot} / 2       
     = - \Theta_{U,U}^{-1} \Theta_{U,O}y_{O} - \frac{\Theta_{U,U}^{-1} (\Theta \Gamma)_{U,m}}{2} \\    
    & = - \Theta_{U,U}^{-1} \Theta_{U,O}y_{O} - \frac{\Theta_{U,U}^{-1} (\Theta \Gamma)_{U,m}}{2}     
     = - \Theta_{U,U}^{-1} \Theta_{U,O}y_{O} - \Theta_{U,U}^{-1} v_U.
\end{align*}
The forms are derived by classical formulas for conditional
distributions of multivariate normals, and the block inversion formula 
for the precision matrix. The last equation relies on an implication of the Fiedler--Bapat
identity \citep[Corollary 3.7]{devriendt2022a}
$\Theta \Gamma = - 2 I + 2 \mv v \mv 1^\top$
where the vector $\mv v = \Theta \diag(\Theta^+)/2 + \mv 1/d$ is called the resistance
curvature \citep{DEVRIENDT202468}; here $\Theta^+$ denotes the Moore--Penrose pseudoinverse of $\Theta$. This yields that $\Theta \Gamma$ has constant row entries (except for the diagonal) and explains why $\mu_{U\mid O}$ is independent of the choice of $m$.
\end{proof}

\section{Additional figures for the marine heat wave application}
\label{sec:sst_anom}

\begin{figure}[t!]
    \centering
    \includegraphics[width=0.32\textwidth]{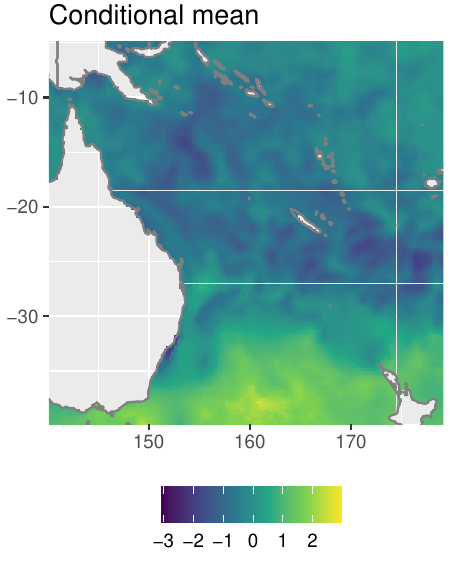}
    \includegraphics[width=0.32\textwidth]{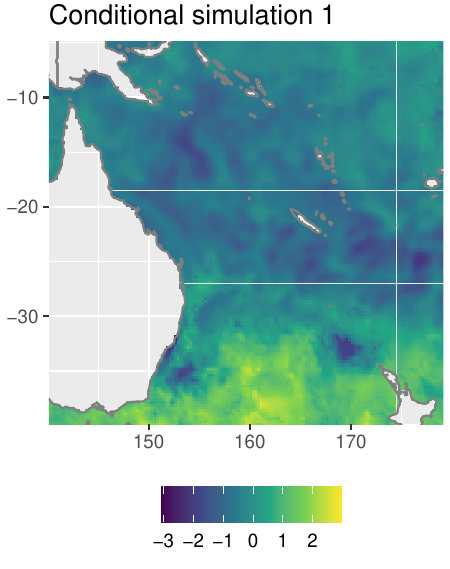}
    \includegraphics[width=0.32\textwidth]{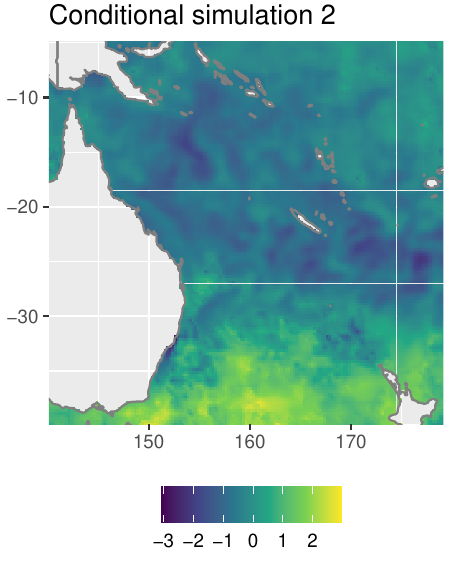}
    \includegraphics[width=0.32\textwidth]{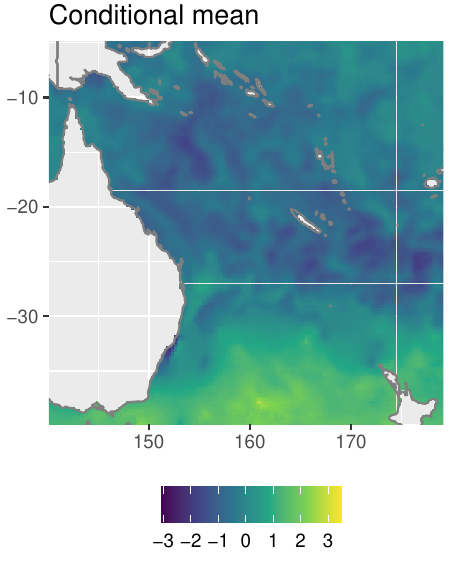}
    \includegraphics[width=0.32\textwidth]{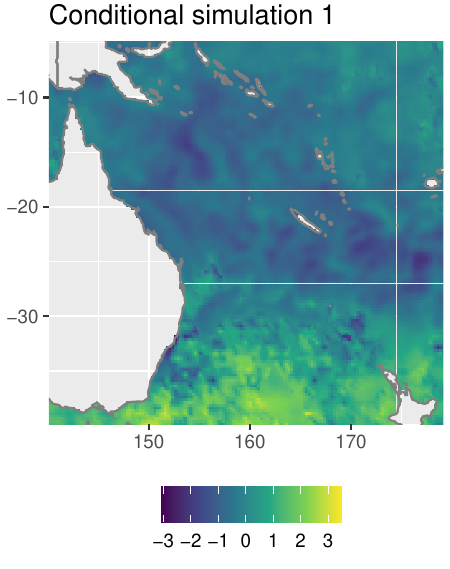}
    \includegraphics[width=0.32\textwidth]{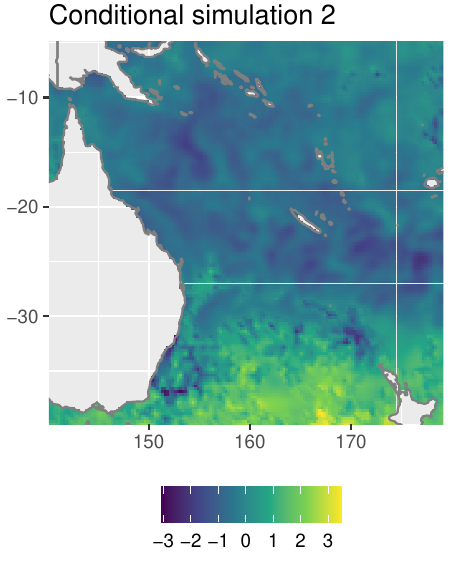}
    \includegraphics[width=0.32\textwidth]{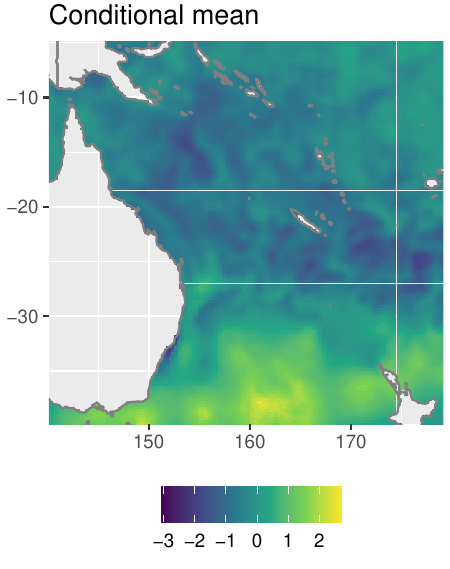}
    \includegraphics[width=0.32\textwidth]{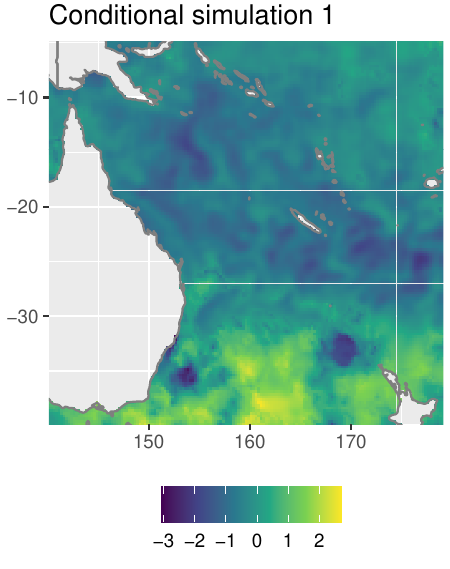}
    \includegraphics[width=0.32\textwidth]{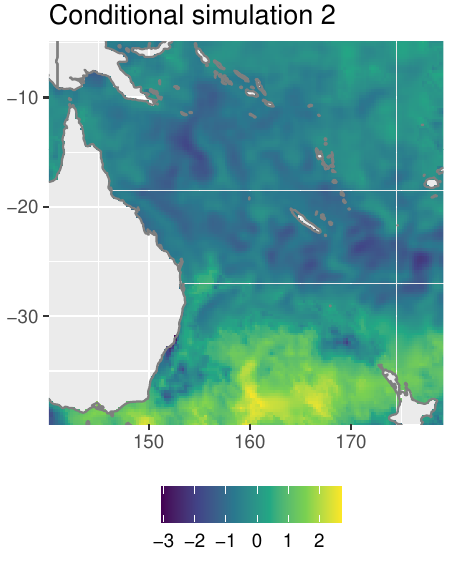}
	\caption{From left to right: Conditional mean and two conditional simulations. From top to bottom: model with $\alpha=\beta=1$, model with $\alpha=0$ and $\beta$ estimated, and model with $\alpha$ estimated and $\beta=0$.}
	\label{fig:conditionalsim2}
\end{figure}

\begin{figure}[t!]
    \centering
    \includegraphics[width=0.32\textwidth]{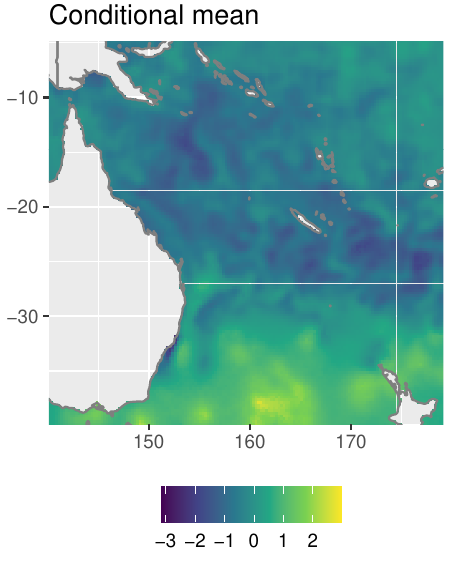}
    \includegraphics[width=0.32\textwidth]{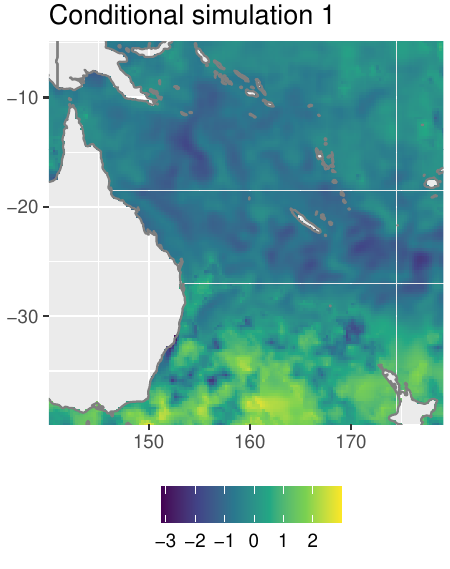}
    \includegraphics[width=0.32\textwidth]{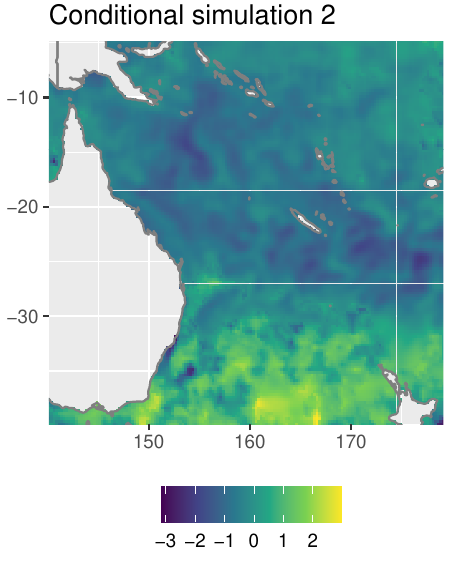}
    \includegraphics[width=0.32\textwidth]{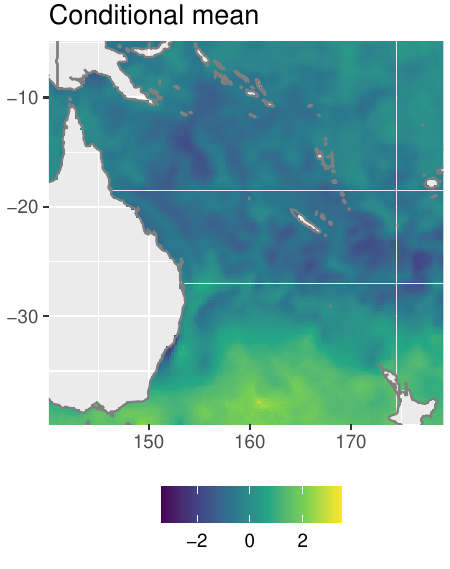}
    \includegraphics[width=0.32\textwidth]{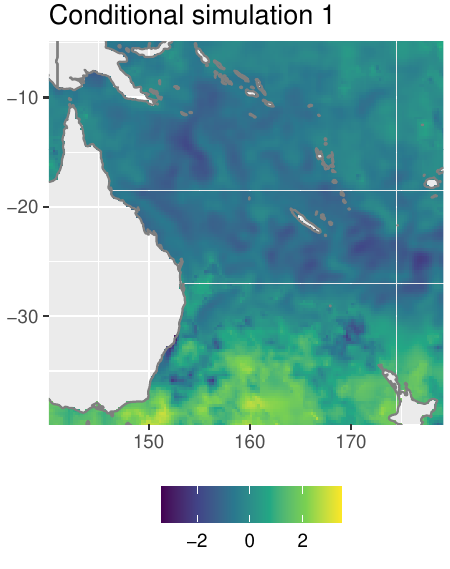}
    \includegraphics[width=0.32\textwidth]{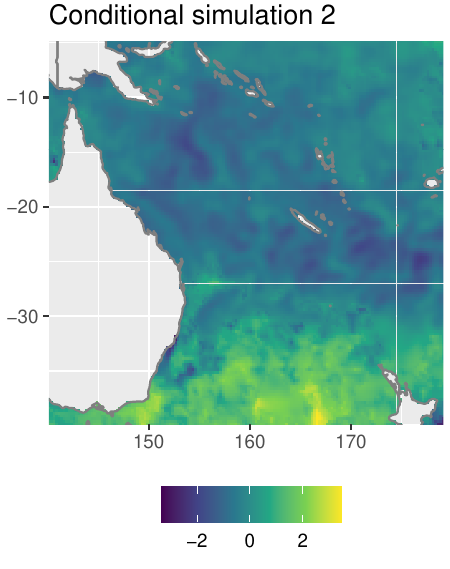}
    \includegraphics[width=0.32\textwidth]{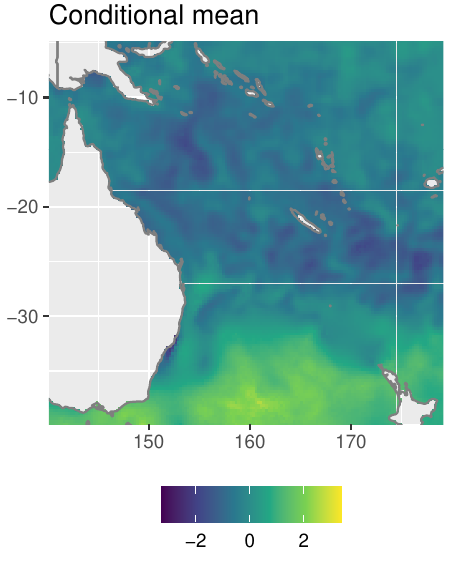}
    \includegraphics[width=0.32\textwidth]{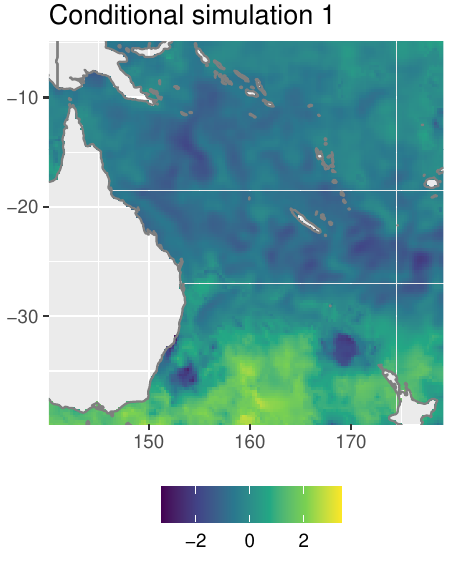}
    \includegraphics[width=0.32\textwidth]{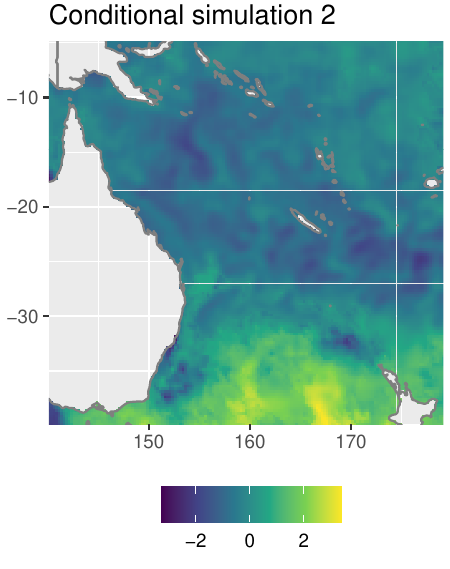}
	\caption{From left to right: Conditional mean and two conditional simulations. From top to bottom: models with $\alpha$ estimated and $\beta=0.5,1.5,2$.}
	\label{fig:conditionalsim3}
\end{figure}

\end{document}